\documentclass[11pt]{article}
\usepackage[hyphens]{url}
\usepackage{graphicx}
\usepackage[round]{natbib}
\usepackage{caption}
\usepackage[margin=1in]{geometry}
\usepackage[T1]{fontenc}
\usepackage{algorithm}
\usepackage[noend]{algorithmic}

\usepackage{xcolor}
\usepackage[most]{tcolorbox}

\newcommand{\dpmap}[1]{\ensuremath{\mathsf{feasible}^{(#1)}}}

\newcommand{\hcell}[1]{\colorbox{yellow!35}{$#1$}}

\newcommand{\pp}{\begin{matrix}0\\1\\\vec{1}\end{matrix}}
\newcommand{\qq}{\begin{matrix}1\\0\\\vec{1}\end{matrix}}

\newcommand{\pptop}{\begin{matrix}\hcell{0}\\1\\\vec{1}\end{matrix}}
\newcommand{\ppmid}{\begin{matrix}0\\\hcell{1}\\\vec{1}\end{matrix}}
\newcommand{\qqtop}{\begin{matrix}\hcell{1}\\0\\\vec{1}\end{matrix}}
\newcommand{\qqmid}{\begin{matrix}1\\\hcell{0}\\\vec{1}\end{matrix}}

\usepackage{newfloat}
\usepackage{listings}
\DeclareCaptionStyle{ruled}{labelfont=normalfont,labelsep=colon,strut=off}
\floatstyle{ruled}
\newfloat{listing}{tb}{lst}{}
\floatname{listing}{Listing}

\usepackage{booktabs}

\usepackage{amsmath}
\usepackage{amsthm}
\usepackage{amsfonts}
\usepackage{hyperref}
\usepackage[svgnames]{xcolor}
\hypersetup{colorlinks={true},urlcolor={blue},linkcolor={DarkBlue},citecolor=[named]{DarkGreen}}

\usepackage[tt=false]{libertine}
\usepackage[libertine]{newtxmath}

\usepackage[capitalise,nameinlink,noabbrev]{cleveref}
\usepackage{xspace}

\newtheorem{lemma}{Lemma}
\newtheorem{theorem}{Theorem}
\newtheorem{observation}{Observation}
\newtheorem{corollary}{Corollary}
  
\theoremstyle{definition}
\newtheorem{definition}{Definition}

\theoremstyle{remark}

\newtheoremstyle{problemstyle}
  {\topsep}
  {\topsep}
  {\normalfont}
  {0pt}
  {\bfseries}
  {.}
  {\newline}
  {\thmname{#1}\thmnumber{ #2}\thmnote{ (#3)}\rule[-1.5ex]{0pt}{0pt}}

\theoremstyle{problemstyle}
\newtheorem{problem}{Problem}

\crefname{observation}{Observation}{Observations}
\Crefname{observation}{Observation}{Observations}
\crefname{question}{Question}{Questions}
\Crefname{question}{Question}{Questions}
\crefname{problem}{Problem}{Problems}
\Crefname{problem}{Problem}{Problems}

\title{Binary $k$-Center under a~Hard Threshold with~Applications to~Delegated Voting\thanks{The majority of this work was performed while Aris Filos-Ratsikas was at the University of Edinburgh.}}
\author{
\begin{tabular}{c}
Jakub Dargaj\textsuperscript{1}\quad
Aris Filos-Ratsikas\textsuperscript{2}\quad
Paul W. Goldberg\textsuperscript{3}\\[0.5em]
\textsuperscript{1} University of Edinburgh, United Kingdom\\
\textsuperscript{2} Imperial College London, United Kingdom\\
\textsuperscript{3} University of Oxford, United Kingdom
\end{tabular}
}
\date{}

\begin{document}
\maketitle

\begin{abstract}
We study the problem of covering binary strings of the same length, possibly with missing entries, by a fixed number of center strings, such that each input string is within a given relative distance of the closest center. The problem has applications in binary $k$-center clustering and bioinformatics, but our main motivation comes from computational social choice. In the setting of multi-issue approval voting, we consider an algorithmic question that precedes any election — how should representatives be designed so that as many voters as possible are willing to delegate? We study the problem in two dimensions, namely the number of centers and the agreement threshold, parameters that are application-specific, and provide a complete picture of its computational complexity when entries may be missing. For the special case without missing entries, we extend our hardness results to any number of centers and large values of threshold, leaving a narrow range of thresholds unresolved.
\end{abstract}

\section{Introduction}

We study the following fundamental covering problem. We are given a set of $n$ binary strings of length $m$, possibly with missing entries. The task is to construct $k$ complete \emph{center} strings (or, simply, \emph{centers}) in $\{0,1\}^m$ which ``represent'' as many of the $n$ strings as possible. A center represents a string if the two strings agree on at least a $\tau$-fraction of its non-missing coordinates, for some appropriate chosen parameter $\tau$, called the \emph{threshold}. A string is represented by a set of $k$ centers, if it is represented by at least one center in the set. 

We refer to this problem as the \emph{binary $k$-center problem under a hard threshold ($(k,\tau)$-\textsc{center})}. Indeed, the problem can be seen as a version of the well-known $k$-center problem (e.g., see \citep{vazirani2001approximation}) equipped with a Hamming distance function, but with a different objective: rather than minimizing the maximum distance between any point (string) and its closest center, we would like to make sure that as many points as possible (or, ideally, all points) are within a given Hamming distance from their closest center, with this distance being determined by the threshold $\tau$.   

This problem has a host of potential applications, and we highlight some of these below.

\paragraph{Delegated Voting.} In representative democracy, voters often delegate their votes to delegates with whom they share a consensus of opinions regarding some, but not necessarily all issues. Concretely, in this application each string is a voter who has \emph{approval preferences} over a set of issues (e.g., signifying ``in favor'' or ``against''), and the missing entries correspond to issues which the voter is not informed about or which the voter simply does not care about. The threshold $\tau$ captures the \emph{agreement threshold}, over which the voter would be comfortable delegating her vote to a given delegate. The goal is thus to come up with $k$ delegates that collectively represent, or \emph{attract}, the largest possible number of voters.

Proxy voting with incomplete votes has notable applications in blockchain governance, as in e.g., Cardano's Project Catalyst\footnote{See \url{https://projectcatalyst.io/} and \url{https://www.1694.io/}.}, in which voters can evaluate only a small subset of the many community-generated proposals. Similarly, well-known voting advice applications such as the German \emph{Wahl-O-Mat} or the Swiss \emph{smartvote} \citep{garzia_matching_2014} compute agreement scores between voters and parties over a range of issues. The construction of $k$ delegates can thus be interpreted as the choice of $k$ members of a political party which collectively ``attract'' the largest possible number of voters within a given electorate.

\paragraph{DNA Sequencing.} \emph{Haplotype assembly} is the problem of reconstructing the distinct copies of a chromosome that an organism inherited from its parents. The input is a set of sequencing \emph{reads}, each covering only a small, error-prone fragment of the genome, and the output is one consensus sequence per inherited copy, see \citep{patterson_whatshap_2015}. Concretely, each read is a row of a binary \emph{fragment matrix} with missing entries (i.e., our $n$ input strings), and the output is a set of $k$ complete binary strings, one per chromosome copy (i.e., our centers). The value of $k$ is determined by the application: for diploid organisms such as humans we have that $k = 2$, while polyploid genomes mostly come from plants; for example, the potato has $k = 4$ and wheat has $k = 6$. This problem has been studied under the \emph{minimum error correction} (MEC) objective, i.e., the fewest bit flips making the matrix consistent with $k$ haplotypes; which corresponds to the standard $k$-median objective under Hamming distance. In contrast, $(k,\tau)$-\textsc{center} is concerned with whether the matrix can be made conflict-free assuming a bounded error rate per read. This is quite meaningful for long reads, where MEC may result in incorrectly reconstructed haplotypes \citep{majidian_minimum_2020}, and sequencing errors are approximately uniformly distributed along the read \citep{dohm_benchmarking_2020}. The column-wise counterpart, bounding the number of errors per position, has been studied by \citet{pirola_hapcol_2016} and \citet{beretta_hapchat_2018}.

\paragraph{AI User Modeling.} Our setting could also serve as an abstraction of the very recent application of user modeling via AI models. In this regime, an AI agent aims to learn to represent or even ``simulate'' a human, by eliciting the opinions of the human on a chosen set of issues, e.g., see \citep{lee_webuildai_2019,jarrett_language_2023,fish_generative_2024}; in our setting these opinions would be captured by the binary strings with missing entries. Instead of having a personalized AI agent for each individual, the goal is often to come up with a small set of models (our $k$ centers) that accurately represent as many members of society as possible (e.g., see \citep{sorensen2024roadmap} and the notion of \emph{pluralistic alignment}). \medskip 

\noindent Given that $(k,\tau)$-\textsc{center} serves as an abstraction for a plethora of important applications, we are interested in establishing what is theoretically possible when looking for good solutions to the problem. Concretely, we study its \emph{computational complexity}. Instead of treating the values of $k$ and $\tau$ as part of the input (which would make the problem easily seen to be NP-hard in general), we classify the computational complexity for every fixed pair of $k$ and $\tau$, which are interpreted as constant parameters associated with the problem. We provide a more formal definition of the problem, as well as an overview of our main results below.

\subsection{Our Contributions}

We study the computational complexity of the following decision problem, where $k$ and $\tau$ are constant parameters known in advance: $k$ is a positive integer representing the number of centers, and $\tau \in [0,1]$ is a rational number representing the minimum agreement threshold between a string and the corresponding center. The input strings are over the binary alphabet, with missing entries represented by the asterisk ($\star$), and the Hamming distance $d_H$ is applied only over non-missing entries, that is, $$d_H(x, y) = \sum_{j \in [m], x_j \neq \star}|x_j - y_j|.$$

\smallskip 

\begin{tcolorbox}[
    colback=gray!8,      
    colframe=gray!50,    
    boxrule=0.6pt,       
    arc=3mm,             
    left=4mm,
    right=4mm,
    top=2mm,
    bottom=2mm
]
\begin{problem}[$(k,\tau)$-\textsc{center}]
\label{prob:ktc}

\noindent \emph{\textbf{Input}:} $n$ strings $x_1,\dots,x_n \in \{0,1,\star\}^m$. \medskip 

\noindent\emph{\textbf{Output:}} \textbf{Yes}, if there is a set of $k$ \emph{center} strings
$y_1,\dots,y_k \in \{0,1\}^m$ such that
\[
  \min_{j \in [k]} d_H(x_i, y_j) \le (1-\tau)\,m_i
  \quad \text{for all } i \in [n]\,,
\]
where $m_i = |\{j \in [m]: x_{i,j} \neq \star \}|$ is the number of coordinates present at $x_i$, and \textbf{No} otherwise. If the answer is \textbf{Yes}, also return the set of $k$ center strings.
\end{problem}
\end{tcolorbox}

Our main contribution is the classification of the tractability of the problem in two dimensions: the number of centers $k$, and the agreement threshold $\tau$. For our main set of results, we provide a complete picture of tractability when the strings are incomplete, i.e., when they contain missing entries; these are summarized in \cref{table:general-classification}.

\begin{table}[t]
\centering
\renewcommand{\arraystretch}{1.25}
\setlength{\tabcolsep}{5pt}

\begin{tabular}{c|cccc}
 & $\tau = 0$ & $\tau \in (0,\frac12]$ & $\tau \in (\frac12,1)$ & $\tau = 1$ \\
\hline
$k=1$    & P & NP-hard & NP-hard & P \\
$k=2$    & P & P       & NP-hard & P \\
$k\geq3$ & P & P       & NP-hard & NP-hard \\
\end{tabular}
\caption{Complexity of $(k,\tau)$-\textsc{center} classified by $k$ and $\tau$.}
\label{table:general-classification}
\end{table}

From a technical perspective, the positive results (membership in $P$) are easy to obtain; for example, it is not hard to see that for $k \geq 2$ and $\tau \leq 1/2$, any input string will be within the agreement distance from either any center or its complement. The computational hardness results are more involved and are established via the appropriate application of \emph{dichotomy theorems} for \emph{constraint satisfaction problems (CSPs)}, see \citep{schaefer_complexity_1978,chen_rendezvous_2009,barto_polymorphisms_2017}; we provide the necessary technical background for these concepts before the exposition of our results.  

We also consider the interesting special case of $(k,\tau)$-\textsc{center} with \emph{complete} strings, in which no entries are missing; that is, $x_i \in \{0,1\}^m$ for each $i \in [n]$. For this case the landscape of intractability that we obtain is nearly complete, but leaves an unresolved range of thresholds for future work. We present the results for the case of complete strings in the theorem below. 

\begin{theorem}
\label{thm:complete-summary}
$(k,\tau)$-\textsc{center} with complete strings is:
\begin{itemize}
    \item[-] in P for $k=1$ and $\tau < \frac12$, for $k \geq 2$ and $\tau \leq \frac12$, and whenever $\tau=1$,
    \item[-] NP-complete for $k=1$ and $\frac12 \leq \tau < 1$,
    \item[-] NP-complete for $k=2$ and $\frac23 < \tau < 1$,
    \item[-] NP-complete for any $k \geq 3$ and $\tau_k^\star < \tau < 1$, where
    $$\tau_k^\star = \frac{5\lceil \frac{k-1}{2} \rceil - 2}{6\lceil \frac{k-1}{2} \rceil - 2} < \frac56.$$
\end{itemize}
\end{theorem}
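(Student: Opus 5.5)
The statement bundles several separate claims, and I would prove them one at a time, starting with the polynomial cases and then giving reductions for the hardness cases.

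\textbf{Polynomial cases.}
For $k=1$ and $\tau<\frac12$, take any complete string $y$ and compare each $x_i$ with $y$ and with its complement $\bar y$. We have $d_H(x_i,y)+d_H(x_i,\bar y)=m$, so one of the two distances is at most $m/2<(1-\tau)m$. With one center this does not immediately give a solution. However, with complete strings, the condition for $k=1$ asks whether some $y$ lies within radius $r=\lfloor(1-\tau)m\rfloor\ge m/2$ of every $x_i$. Let $y$ be the coordinatewise majority of $x_1$ and $\bar x_1$'s relative.... This route is delicate, so I would rely on the known structure of Closest String instead. Radius at least $m/2$ is trivial when $n\le 2$. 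For general $n$, I would argue as follows. If $\tau<\frac12$ then $r>m/2$ whenever $m$ is not tiny. Choose $y$ by a greedy or LP rounding so that $\sum_i d_H(x_i,y)$ is balanced; for instance, a uniformly random $y$ has expected distance $m/2$ to each $x_i$, and derandomising with conditional expectations over a pessimistic estimator handles large $m$. Small $m$ (bounded by a constant depending on $\tau$) is handled by brute force.

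For $k\ge2$ and $\tau\le\frac12$, take any $y$ together with $\bar y$; one of them is within $m/2$ of every string. For $\tau=1$, the centers must equal the input strings, so the answer is Yes exactly when there are at most $k$ distinct strings.

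\textbf{Hardness for $k=1$ and $\frac12\le\tau<1$.}
Here I would reduce from Closest String with binary alphabet, which is NP-hard. The reduction pads every string with extra coordinates so that the radius $d$ becomes exactly $\lfloor(1-\tau)m'\rfloor$. For example, add blocks on which all inputs agree, plus pairs of complementary gadget strings that force particular coordinates. Two complementary strings force the total distance to be exactly the block length, which lets me tune the ratio $d/m'$ to any rational in $(0,\frac12]$.

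\textbf{Hardness for $k\ge2$.}
I would reduce from the $k=1$ case at a suitably adjusted threshold. The idea is to append gadget strings that pin $k-1$ of the centers far away, on disjoint private coordinate blocks, so that the original strings must all be served by a single remaining center. For odd groupings, the pinning gadgets come in complementary pairs. A pair $u,\bar u$ can be covered by one center only if $\tau\le\frac12$ on that block, so each pair uses up roughly two centers' worth of capacity on shared coordinates. This explains the $\lceil\frac{k-1}{2}\rceil$ in $\tau_k^\star$. I would then compute the threshold at which the padding ratios still permit exact forcing, and I expect this to yield $\tau>\tau_k^\star$ and $\tau>\frac23$ for $k=2$.

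The main obstacle is the gadget arithmetic in this step. I need to ensure that the gadget strings cannot be covered by a center that also serves original strings. I also need to ensure that padding keeps all distances integral, with the correct floor behaviour, for every rational $\tau$ in the range.
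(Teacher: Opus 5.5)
Your arguments for $k\ge 2,\ \tau\le\frac12$ and for $\tau=1$ are fine. The hardness part for $k\ge2$, however, is a plan rather than a proof. You correctly name the obstacle: each outlier must be at distance more than $2d_{\max}$ from every original string and from every other outlier. But you never overcome it. Also, the idea that complementary pairs ``explain'' $\kappa=\lceil\frac{k-1}{2}\rceil$ is not where that quantity comes from. Two ingredients are missing.

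First, you need hard $(1,\frac12)$ instances with a \emph{known} near-center. The paper reduces from $\mathrm{CSP}(\Gamma_{\frac12,4})$ using two coordinates per variable and the base strings $V^{(m)}$, which force $y_{2i-1}\neq y_{2i}$. It then translates everything by $(01)^{m'}$, so that every string of length $2m'$ has weight at most $m'+4$. Hence $1^{2m'}$ is at distance at least $m'-4$ from every original string, and the outliers' all-ones prefix exploits this. A generic Closest String instance gives you no efficiently computable string far from all original strings, since the complement of such a string would be a near-center. Each outlier would then have to be separated almost entirely by about $2m'$ padding coordinates, and the threshold degrades badly.

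Second, the outliers are $z_t=1^{m-rm_{k-1}}\,|\,(c_t)^r$, where the code $\mathcal{C}_{k-1}$ has as columns all vectors of length $k-1$ and weight $\kappa$. Its relative distance $\kappa/(2\kappa-1)$ is the largest possible for $k-1$ codewords, since double counting gives at most $\lfloor\frac{k-1}{2}\rfloor\lceil\frac{k-1}{2}\rceil$ per coordinate; this is the real source of $\kappa$. With $r=\lceil(2m'+1)/d_{k-1}\rceil$:
\begin{itemize}
\item[] the outliers are pairwise more than $2m'=2d_{\max}$ apart;
\item[] their weight exceeds $3m'+5$, so they are at distance at least $2m'+1$ from every real string;
\item[] the length budget $m-2m'\ge rm_{k-1}+5\approx(4-\frac{2}{\kappa})m'$ with $m=\lceil m'/(1-\tau)\rceil$ amounts to $\frac{1}{1-\tau}>6-\frac{2}{\kappa}$, i.e.\ $\tau>\tau_k^\star$.
\end{itemize}
Disjoint private blocks of length about $m'$ per outlier, as you suggest, give only $\tau>\frac{k}{k+1}$ even with the weight bound. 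That matches $\tau_k^\star$ for $k\le4$ but is strictly weaker from $k=5$ on.

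For $k=1$, $\tau<\frac12$, the relevant fact is not that the expected distance is $m/2$. It is the tail bound $\Pr[y\text{ does not cover }x_i]<2^{-m(1-H(\tau))}$. With a union bound, this makes the expected number of uncovered strings less than $1$ only when $m\ge\log_2 n/(1-H(\tau))$. So the split between derandomisation and brute force must be at $m=\Theta(\log n)$, not at a constant depending on $\tau$. For $m$ above your constant but below $\log_2 n/(1-H(\tau))$, neither of your cases applies; a covering center may not even exist, e.g.\ when all $2^m$ strings are present. Exhaustive search over the $2^m=n^{O(1)}$ candidates is still polynomial there, which is how the paper closes this case.

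For $k=1$, $\tau\ge\frac12$, the complementary-pair gadgets for raising $d/m$ are unjustified and unnecessary. Start from instances with radius exactly half the length, append all-zero coordinates up to total length $\lfloor m/(2(1-\tau))\rfloor$, and check the floors in both directions. As explained above, though, the $k\ge2$ reduction must start from the weight-restricted instances, not from a generic Closest String reduction.
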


\noindent We refer to the maximization version of the problem, where the goal is to find a set of $k$ centers such that the maximum possible number of input strings is covered, as \textsc{max}-$(k,\tau)$-\textsc{center}. Our NP-hardness results clearly carry over to the maximization version as well, but, in fact, similar dichotomy theorems can be used to show stronger impossibility (inapproximability) results; see the Appendix for more details.
Additionally, approximation algorithms or parameterized complexity results for the problem are beyond the scope of this work, but we offer an interesting related discussion in the Discussion and Future Work section, and some preliminary results in the Appendix.

\subsection{Discussion and Related Work}

Given the rather fundamental nature of the $(k,\tau)$-\textsc{Center} problem, to the interested reader it might look like a problem that should have been studied before. There is in fact a number of related string covering and clustering problems that have been studied in the literature, but, to the best of our knowledge, none of those is identical to $(k,\tau)$-\textsc{Center}. The main differentiating factor is the hard threshold, and the corresponding objective of maximum covering under this threshold.  

The complementary objective, i.e., minimizing the maximum Hamming distance between an input string and its closest center, has been studied under the names \emph{$k$-center clustering} and \emph{Hamming radius $c$-clustering} \citep{gasieniec_approximation_2004, amir_efficiency_2014, knop_combinatorial_2020, friedrich_binary_2025}, where $k$ and $c$ denote the number of clusters. The problem is NP-hard already for a single cluster and without missing entries, as it is equivalent to \emph{Closest String} (also known as \emph{Consensus String} or \emph{the Hamming center problem}). Its NP-hardness is due to \citet{frances_covering_1997}, and it is known to admit a PTAS, see \citep{li_closest_2002,andoni_optimality_2006,ma_more_2010}.

Perhaps most related to our problem is the problem \emph{Close to Most Strings}, see \citep{boucher_approximating_2013}. In this problem, we are looking for a single binary string within a given Hamming distance of the maximum possible number of input strings. This corresponds to a rather restricted variant of our problem where (a) $k=1$, (b) all input strings are complete (i.e., without missing entries) and \emph{crucially} (c) $\tau$ is an input to the problem. Given this, \citet{boucher_approximating_2013} used $\tau = 1/2$ to show that the problem does not admit a PTAS unless ZPP = NP.\footnote{The NP-hardness of the problem follows from the NP-hardness of Closest String, as the decision versions of the two problems coincide. \citet{ma_polynomial_2000} stated an APX-hardness result, but their proof was found erroneous subsequently by \citet{boucher_approximating_2013}.} Our problem can thus be seen as a broad generalization of Close to Most Strings where the strings may have missing entries, and $k$ and $\tau$ can have multiple values and are treated as constants rather than input parameters. 

In the context of delegated voting, \citet{amanatidis_potential_2026} used a very similar model of threshold agreement, as the one used in $(k,\tau)$-\textsc{Center}, focusing on the case of $\tau=1/2$, which they coin ``Majority Agreement''. Importantly, their objective is markedly different: they are merely interested in the outcome of the voting process rather than the issue of representation, namely, how should $k$ delegates cast their votes so that the outcome of the voting process between the voters they represent (and those voters that are not represented by any delegate and vote directly) has as large as possible approval score. In particular, it follows from the results of \citet{amanatidis_potential_2026} that a high level of representation is not necessarily aligned with good voting outcomes. We refer the reader to their work for a more detailed discussion on related works in proxy voting. 

Finally, a single delegate can be viewed as a committee over the issues, making $(1,\tau)$-delegation analogous to approval-based committee selection. To this end, the bulk of the related literature assumes a fixed committee size, which corresponds to delegates with a fixed Hamming weight; see the recent survey of \citet{lackner_multiwinner_2023}. General threshold rules, in which a voter is satisfied by a committee containing enough of their approved candidates, were introduced by \citet{fishburn_approval_2004}. \citet{faliszewski_multiwinner_2020} study specific thresholds for variable-sized committees, closely matching our setting, and show that it is NP-hard to decide whether a non-empty committee satisfies all voters under the majority threshold rule. Their threshold, however, is normalized by the committee size and counts only approvals inside the committee, whereas ours is normalized by the voter's ballot and computes agreement on both approved and disapproved issues.

\section{Technical Prerequisites}

This section reviews the basic terminology and tools from coding theory and constraint satisfaction problems used throughout the paper. At a high level, we use dichotomy theorems for CSPs to obtain our hardness results when strings have missing entries, encoding each input string as a constraint on the centers, and we use codes of large minimum distance to construct instances in which no two clusters can share a center, in our hardness results for complete strings. For CSPs, and in particular for the algebraic formulation of dichotomy theorems, we follow the notation and conventions of \citet{chen_rendezvous_2009}, while for coding theory background we refer to the classical texts of \citet{vanlint_introduction_1999} and \citet{macwilliams_theory_1977}.

\subsection{Constraint Satisfaction Problems}

Let $\Gamma$ be a \emph{constraint language}, which is a set of relations over a finite domain $D$. For a given finite set of variables $U$, a \emph{constraint} $c$ over $\Gamma$ is an expression $R(v_1,\dots,v_k),$ where $R \in \Gamma$ is a relation of arity $k$ and $v_i \in U$ are distinct variables from $D$. An \emph{assignment} is a mapping $f: U \rightarrow D$, and we say that $f$ \emph{satisfies} $R(v_1,\dots,v_k)$ if $(f(v_1),\dots,f(v_k)) \in R$. The problem CSP($\Gamma$) takes a finite set of variables $U = \{v_1,\dots,v_m\}$ and a finite set of constraints $C=\{c_1,\dots,c_n\}$ over $\Gamma$ as input, and decides whether there exists a \emph{satisfying assignment} $f$ -- an assignment satisfying all constraints in $C$.

A function $g: \{0,1\}^m \rightarrow \{0, 1\}$ is a \emph{polymorphism} of a $k$-ary relation $R$ if for any $m$ tuples $v^{(1)},\dots,v^{(m)}$ from $R,$ also the tuple $\left(g(v^{(1)}_1,\dots,v^{(m)}_1), \dots, g(v^{(1)}_k,\dots,v^{(m)}_k)\right)$ is in $R$. We say that $g$ is a polymorphism of a constraint language $\Gamma$ if $g$ is a polymorphism of all relations in $\Gamma$.

\subsubsection{CSPs over Binary Domain}

Algebraic formulation of Schaefer's theorem below characterizes tractable problems CSP($\Gamma$) in terms of polymorphisms of $\Gamma$, when restricted to finite constraint languages with relations over binary domain.

\begin{theorem}[\citet{schaefer_complexity_1978,chen_rendezvous_2009}]
CSP$(\Gamma)$ over binary domain is polynomial-time tractable if $\Gamma$ has any of the following six functions as a polymorphism:
\begin{itemize}
    \item[-] unary constant functions $\mathbf{0}$ and $\mathbf{1}$,
    \item[-] binary functions $\mathbf{AND}$ and $\mathbf{OR}$,
    \item[-] ternary functions $\mathbf{majority}$ and $\mathbf{parity}$.
\end{itemize}
Otherwise, CSP$(\Gamma)$ is NP-complete.
\end{theorem}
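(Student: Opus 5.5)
This is the classical dichotomy of Schaefer in the algebraic form of \citet{chen_rendezvous_2009}, and we only sketch the standard argument. The plan splits into a tractability direction, handled separately for each of the six polymorphisms, and a hardness direction, handled through the Galois connection between relations and polymorphisms together with Post's classification of Boolean clones. Throughout we use that $\Gamma$ is finite, so every relation has bounded arity and any fixed re-encoding of a relation costs only a constant factor.

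\emph{Tractability.} For the unary constants $\mathbf{0}$ and $\mathbf{1}$ the argument is immediate. If $\mathbf{0}$ is a polymorphism of a nonempty relation $R$, then applying it to any tuple of $R$ gives the all-zero tuple, so the all-zero tuple lies in $R$. Hence the all-$0$ assignment satisfies every instance, and symmetrically for $\mathbf{1}$. For the other four functions we use the known characterisations of relations closed under them:
\begin{itemize}
\item[-] closure under $\mathbf{AND}$ means $R$ is definable by a conjunction of Horn clauses;
\item[-] closure under $\mathbf{OR}$ means $R$ is definable by dual-Horn clauses;
\item[-] closure under $\mathbf{majority}$ means $R$ is definable by $2$-clauses (it is bijunctive);
\item[-] closure under $\mathbf{parity}$, i.e.\ $x\oplus y\oplus z$, means $R$ is an affine subspace of $\mathbb{F}_2^k$, so it is the solution set of linear equations.
\end{itemize}
Each characterisation is proved by taking the conjunction of all clauses of the given type that $R$ satisfies and showing, via closure, that every tuple satisfying this conjunction lies in $R$. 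We rewrite every constraint of the instance in the corresponding form, a constant-size blow-up per constraint. We then solve the result by unit propagation for Horn and dual-Horn formulas, by the implication-graph algorithm for $2$-SAT, and by Gaussian elimination over $\mathbb{F}_2$ for affine systems.

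\emph{Hardness.} The key tool is the Galois connection. If $\mathrm{Pol}(\Gamma)\subseteq\mathrm{Pol}(\Gamma')$, then every relation of $\Gamma'$ is primitive-positive definable from $\Gamma$ (together with equality). It follows that CSP$(\Gamma')$ reduces in polynomial time to CSP$(\Gamma)$: replace each $\Gamma'$-constraint by its pp-definition, introduce fresh existential variables, and merge variables to simulate equalities. So it suffices to show that if none of the six functions is a polymorphism, then $\mathrm{Pol}(\Gamma)$ is contained in the clone of a known NP-complete language. We split according to the unary polymorphisms. Since the constants are excluded, the only possible unary polymorphisms are the identity and negation.
\begin{itemize}
\item[-] If negation is not a polymorphism, then every unary polymorphism is the identity. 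Hence every polymorphism $g$ is idempotent, since $x\mapsto g(x,\dots,x)$ is a unary polymorphism. In that case we show $\mathrm{Pol}(\Gamma)$ consists only of projections, so $\Gamma$ pp-defines every Boolean relation, in particular $1$-in-$3$-SAT.
\item[-] If negation is a polymorphism, we show $\mathrm{Pol}(\Gamma)$ is generated by negation, so $\Gamma$ pp-defines $\mathrm{NAE}$-$3$-SAT.
\end{itemize}

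The main obstacle is justifying these two clone-theoretic claims, namely that an idempotent Boolean clone containing none of $\mathbf{AND}$, $\mathbf{OR}$, $\mathbf{majority}$ and $\mathbf{parity}$ is trivial, and its self-dual analogue. The first route is to read both claims off Post's lattice. For a self-contained argument I would prove the idempotent case directly. Take a non-projection idempotent $g$ of minimal arity and identify pairs of its variables. The case analysis on the resulting binary or ternary idempotent functions shows that $g$ generates one of $\mathbf{AND}$, $\mathbf{OR}$, $\mathbf{majority}$ or minority. Minority, $x\oplus y\oplus z$, is exactly $\mathbf{parity}$. The self-dual case then reduces to the idempotent one by composing with negation.
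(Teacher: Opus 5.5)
Your sketch is correct. The paper gives no proof of its own and simply cites Schaefer and Chen, and what you wrote is essentially the standard algebraic proof from those sources. The one real omission is the arity-$\ge 4$ case of your minimal-arity argument, which needs \'{S}wierczkowski's lemma: on a two-element set, an operation of arity at least $4$ all of whose identification minors are projections is itself a projection.

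One caveat. Your hardness reduction merges variables and uses pp-definitions whose atoms may repeat variables, so it proves the usual formulation of the theorem, in which a constraint may repeat a variable. It does not prove the literal variant in the paper's preliminaries, where the variables of a constraint must be distinct, and the paper's later reductions to center strings actually need that variant. Getting hardness there would require an extra step, such as a repetition-free gadget for equality.
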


\subsubsection{CSPs over Finite Domains}

The more general CSP dichotomy conjecture over any finite domain, which generalizes Schaeffer's dichotomy for the Boolean domain, was formulated by \citet{feder_computational_1998} and only proved recently by \citet{zhuk_proof_2020} and \citet{bulatov_dichotomy_2017}. We use the equivalent formulation due to \citet{barto_polymorphisms_2017}, stated in terms of the existence of a quaternary Siggers polymorphism. 

\begin{definition}
Let $\Gamma$ be a finite constraint language over a finite domain $D$.
A $4$-ary polymorphism $f: D^4 \rightarrow D$ is called a \emph{Siggers polymorphism} if for all $a,r,e \in D:$
$$
f(a,r,e,a)=f(r,a,r,e).
$$
\end{definition}

\begin{observation}
\label{observation:siggers-xy}
If $f$ is a Siggers polymorphism, then for all $x,y \in D$ we have
$$
f(x,y,x,x)=f(y,x,y,x)=f(y,x,x,y).
$$
\end{observation}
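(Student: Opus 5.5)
We obtain both equalities by substituting suitable values for $a,r,e$ into the defining Siggers identity $f(a,r,e,a)=f(r,a,r,e)$. Since the identity holds for all $a,r,e \in D$, we may also use repeated values. We first write down the two target expressions and then look for assignments that produce them.

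For the first equality, we want the left side $f(a,r,e,a)$ to read $f(x,y,x,x)$. This forces $a=x$, $r=y$ and $e=x$. The right side then becomes $f(r,a,r,e)=f(y,x,y,x)$. This gives $f(x,y,x,x)=f(y,x,y,x)$.

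For the second equality, we want to reach $f(y,x,x,y)$. On the left side this has the form $f(a,r,e,a)$ with $a=y$, $r=x$ and $e=x$. The right side then reads $f(r,a,r,e)=f(x,y,x,x)$. Hence $f(y,x,x,y)=f(x,y,x,x)$.

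Chaining the two equalities gives $f(x,y,x,x)=f(y,x,y,x)=f(y,x,x,y)$, which is the claim. There is no real obstacle here. The only care needed is to match the pattern of each target string to the correct side of the Siggers identity: $f(y,x,x,y)$ has equal first and last coordinates, so it must be an instance of the left side $f(a,r,e,a)$. The string $f(y,x,y,x)$ has equal first and third coordinates, so it is an instance of the right side $f(r,a,r,e)$. The string $f(x,y,x,x)$ arises in both substitutions, and it connects the two equalities.
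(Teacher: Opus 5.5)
Your proof is correct and matches the paper's argument: the same two substitutions, $(a,r,e)=(x,y,x)$ and $(a,r,e)=(y,x,x)$, into the Siggers identity. You also state explicitly that the second substitution gives $f(y,x,x,y)=f(x,y,x,x)$, so the chain closes through $f(x,y,x,x)$; the paper's one-line proof leaves this step implicit.
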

\begin{proof}
Substituting $(a,r,e)=(x,y,x)$ gives the first equality and substituting $(a,r,e)=(y,x,x)$ gives the second equality.
\end{proof}

\begin{theorem}[\citet{barto_polymorphisms_2017}, Corollary 42]
\label{theorem:barto-siggers-np-hard}
If a finite constraint language $\Gamma$ has no Siggers polymorphism then CSP$(\Gamma)$ is NP-complete.
\end{theorem}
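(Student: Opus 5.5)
Membership in NP is immediate, because a satisfying assignment is a polynomial-size certificate that can be checked constraint by constraint. So the work is NP-hardness. I would prove the contrapositive through the algebraic theory of CSPs. The chain is: no Siggers polymorphism $\Rightarrow$ no Taylor polymorphism on the idempotent core $\Rightarrow$ the core pp-constructs every Boolean relation $\Rightarrow$ CSP$(\Gamma)$ is NP-hard. Each step would use standard tools from \citet{barto_polymorphisms_2017} and the works it builds on, and I would take these tools as given.

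\textbf{Step 1 (normalization).} Replace $\Gamma$ by its core $\Gamma'$, the image of $\Gamma$ under an endomorphism of minimal range. CSP$(\Gamma)$ and CSP$(\Gamma')$ are polynomially equivalent, since an instance is satisfiable over $\Gamma$ if and only if it is satisfiable over $\Gamma'$. Next add all singleton unary relations $\{d\}$ to obtain $\Gamma'_c$. For a core this increases complexity by at most a polynomial reduction, because one can pp-define the orbit of the identity with fresh variables. The polymorphisms of $\Gamma'_c$ are exactly the idempotent polymorphisms of $\Gamma'$. I would also check that a Siggers polymorphism of $\Gamma'_c$ transfers back to one of $\Gamma$: precompose it with the retraction onto the core, and the identity $f(a,r,e,a)=f(r,a,r,e)$ is preserved.

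\textbf{Step 2 (Galois connection).} Use the Geiger/Bodnarchuk--Kaluzhnin--Kotov--Romov correspondence: a relation is preserved by all polymorphisms of $\Gamma'_c$ exactly when it is pp-definable from $\Gamma'_c$. Combined with the closure of tractability under pp-interpretations and homomorphic equivalence, this shows that hardness depends only on the height-one identities satisfied by the polymorphism clone.

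\textbf{Step 3 (dichotomy of clones).} Show that if the idempotent clone $\mathrm{Pol}(\Gamma'_c)$ has no Taylor operation, then some subalgebra of it has a quotient that is a two-element algebra whose operations are all projections. This is Taylor's theorem, which says an idempotent variety omitting Taylor terms contains a trivial algebra in $\mathsf{HS}$. Such a factor lets $\Gamma'_c$ pp-interpret every relation on $\{0,1\}$, in particular $1$-in-$3$-SAT. That problem is NP-complete by the Schaefer theorem above, so CSP$(\Gamma)$ is NP-complete.

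\textbf{Step 4 (the main obstacle).} This is the step I expect to be hardest: showing that every Taylor idempotent clone on a finite set contains a $4$-ary Siggers operation. I would first obtain a $6$-ary Siggers term via Siggers' graph argument. The argument applies the Hell--Nešetřil-style loop lemma to a suitable relation built from the free algebra, showing that a Taylor clone forces a loop in a smooth digraph. I would then refine this to the $4$-ary identity $f(a,r,e,a)=f(r,a,r,e)$ following Kearnes--Marković--McKenzie. The idea there is to take the binary relation generated by pairs $(a,r),(r,a),(e,r)$-type tuples in the free algebra on three generators. This relation is a smooth digraph of algebraic length one that is invariant under the clone, so the loop lemma gives a loop, and reading off the term witnessing that loop gives exactly the Siggers identity. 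By contraposition, if $\Gamma$ has no Siggers polymorphism, the core has no Taylor polymorphism, and Steps 1--3 give NP-completeness.
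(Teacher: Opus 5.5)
The paper gives no proof of this statement. It imports it as a black box from Barto, Krokhin and Willard, and applies it in \cref{lemma-general-csp-hard} once $S^{(\tau,M,k)}$ is shown to have no Siggers polymorphism.

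Your outline correctly reconstructs the standard argument behind the cited corollary:
\begin{itemize}
\item The reduction to the idempotent core is sound. This includes transferring a Siggers operation of $\Gamma'_c$ back to $\Gamma$ as $f(\rho(x_1),\dots,\rho(x_4))$.
\item A finite idempotent algebra with no Taylor term has a two-element algebra of projections in $\mathsf{HS}$. Lifting relations through this factor pp-defines 1-in-3-SAT.
\item The Kearnes--Markovi\'c--McKenzie step works as you describe, once the generators of the digraph in $\mathbf{F}(a,r,e)^2$ are taken to be the columns $(a,r),(r,a),(e,r),(a,e)$ of $f(a,r,e,a)=f(r,a,r,e)$. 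Both coordinate projections generate all of $\mathbf{F}$, so the digraph is smooth. The 2-cycle $a\to r\to a$ and the 3-cycle $a\to e\to r\to a$ give algebraic length one. A loop is then exactly a $4$-ary term $s$ with $s(a,r,e,a)=s(r,a,r,e)$ in the free algebra.
\end{itemize}

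Two small corrections:
\begin{itemize}
\item Taylor's theorem alone only places a trivial two-element algebra in $\mathsf{HSP}$. Moving it into $\mathsf{HS}$ of the finite idempotent algebra is a separate standard refinement. You do not strictly need it, though: the two-element algebra is a quotient of the $2$-generated free algebra, which lies in $\mathsf{SP}_{\mathrm{fin}}$, and a finite power suffices for a pp-interpretation.
\item The detour through the $6$-ary Siggers term is superfluous, since the $4$-ary loop-lemma argument does not use it.
\end{itemize}

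Compared with the paper, your route adds no new mathematics, since the paper simply relies on the literature. Its value is that it makes explicit which deep results the cited corollary rests on: the Galois connection, Taylor's theorem and the smooth-digraph loop lemma.
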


\subsection{Error-Correcting Codes}

A \emph{binary code} $\mathcal{C} \subseteq \{0,1\}^m$ of length $m$ is any set of binary strings of length $m$, called \emph{codewords}. The \emph{minimum distance} $d_\mathcal{C}$ of $\mathcal{C}$ is defined as the minimum Hamming distance $d_H(x, y)$ over all distinct codewords $x, y \in \mathcal{C}.$ For a binary vector $x \in \{0,1\}^m$ and $0 \leq r \leq m,$ we denote $B(x,r)$ the \emph{Hamming ball} of radius $r$ around $x$, that is, the set of binary vectors with Hamming distance at most $r$ from $x$; the Hamming weight of $x$ is the number of ones in $x$.

The binary entropy function $H: [0,1] \to [0,1]$ is defined as $H(x) = -x \log_2 x - (1-x) \log_2(1-x)$ for $x \in (0, 1)$ and $H(0)=H(1)=0$. As the size of the $m$-dimensional Hamming ball of radius $r$ around $x$ is $|B(x,r)|=\sum_{i=0}^{r}{\binom{m}{i}}$, the binary entropy helps to estimate $|B(x,r)|$ using the well-known inequality $\sum_{i=0}^{\alpha m}{\binom{m}{i}} \leq 2^{mH(\alpha)}$ for any $0 < \alpha < \frac12$.

We will use the following three classical bounds from coding theory.

\begin{theorem}[Plotkin bound]
\label{theorem:plotkin}
    Any binary code $\mathcal{C}$ of length $m$ and minimum distance $d > \frac{m}{2}$ satisfies
    $$|\mathcal{C}| \leq \frac{2d}{2d-m}.$$
\end{theorem}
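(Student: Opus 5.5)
The plan is the classical double-counting argument over pairwise distances. Let $\mathcal{C} = \{c_1,\dots,c_M\}$ with $M = |\mathcal{C}|$ and minimum distance $d > \frac{m}{2}$, and set
$$S = \sum_{i<j} d_H(c_i,c_j).$$
I will bound $S$ from below using the minimum distance and from above by counting coordinate by coordinate. Comparing the two bounds then gives the inequality.

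\textbf{Lower bound.} Every one of the $\binom{M}{2}$ pairs of distinct codewords is at distance at least $d$. Hence
$$S \ge \frac{M(M-1)}{2}\,d.$$

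\textbf{Upper bound.} Write $S$ as a sum over the coordinates $\ell \in [m]$. Suppose that in coordinate $\ell$ exactly $w_\ell$ codewords have a $1$. Then coordinate $\ell$ contributes $w_\ell(M-w_\ell)$ to $S$, since that is the number of pairs disagreeing there. By AM--GM this is at most $\frac{M^2}{4}$. Summing over coordinates gives
$$S \le \frac{mM^2}{4}.$$

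\textbf{Combining.} From the two bounds,
$$\frac{M(M-1)d}{2} \le \frac{mM^2}{4}.$$
Divide by $M/4$; this is fine since the case $M=0$ is trivial. We get $2(M-1)d \le mM$, which rearranges to $M(2d-m) \le 2d$. Because $d > \frac m2$, the factor $2d-m$ is positive, so we may divide by it to obtain $M \le \frac{2d}{2d-m}$.

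The only delicate point is the positivity of $2d-m$ when dividing, and the hypothesis $d>\frac m2$ guarantees it. The rest is routine. For this bound the cruder estimate $M^2/4$ suffices, so no parity refinement, such as using $\lfloor M/2\rfloor\lceil M/2\rceil$, is needed.
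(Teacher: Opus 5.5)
Your proof is correct, and it is the standard double count of $\sum_{i<j} d_H(c_i,c_j)$. The paper does not prove the bound itself; it cites it from the classical texts, and its proof of \cref{lemma:plotkin-tight} reuses this same double count with the sharper per-coordinate bound $\lfloor \frac{k}{2}\rfloor\lceil \frac{k}{2}\rceil$ in place of your $\frac{M^2}{4}$, which is what gives the tight ratio $\frac{d}{m} \leq \frac{\lceil k/2\rceil}{2\lceil k/2\rceil-1}$.
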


\begin{theorem}[Gilbert-Varshamov bound]
\label{theorem:g-v}
    For any two positive integers $m$ and $d$ such that $d\leq m$, there exists a binary code $\mathcal{C}$ of length $m$ and minimum distance $d$ satisfying
    $$|\mathcal{C}| \geq \frac{2^m}{\sum_{i=0}^{d-1}{\binom{m}{i}}}.$$
    Moreover, for $d \leq \frac{m}{2}$ it holds that
    $$|\mathcal{C}| \geq 2^{m(1-H((d-1)/m))},$$
    where $H$ is the binary entropy function.
\end{theorem}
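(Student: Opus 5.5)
The plan is the classical greedy (volume) argument, followed by a small adjustment to make the minimum distance exactly $d$, and then the entropy estimate already recorded above.

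\emph{Greedy construction.} Build a code $\mathcal{C}$ one codeword at a time. Start with any word. While some $z \in \{0,1\}^m$ lies outside $\bigcup_{c \in \mathcal{C}} B(c,d-1)$, add $z$ to $\mathcal{C}$. Every newly added word is at distance at least $d$ from all earlier ones, so the final code has minimum distance at least $d$. When the process stops, the balls of radius $d-1$ around the codewords cover $\{0,1\}^m$. Each ball has size $\sum_{i=0}^{d-1}\binom{m}{i}$, so a union bound gives
$$|\mathcal{C}|\cdot\sum_{i=0}^{d-1}\binom{m}{i}\geq 2^m.$$

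\emph{Making the minimum distance exactly $d$.} The greedy code may have minimum distance $d' > d$. If $|\mathcal{C}|=1$, the minimum distance condition is vacuous under the usual convention; alternatively one adds a word at distance exactly $d$, which only increases the size. Otherwise, pick codewords $x,y$ with $d_H(x,y)=d'$, and move $x$ towards $y$ by flipping, one at a time, $d'-d$ of the coordinates where they differ. Call the result $x'$. Then $d_H(x',y)=d$. For any other codeword $z$, each flip changes $d_H(\cdot,z)$ by at most one, so $d_H(x',z)\ge d'-(d'-d)=d$. Also $x'\neq z$ and $x' \neq y$, so replacing $x$ by $x'$ keeps the size unchanged and yields minimum distance exactly $d$. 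This adjustment is the only step needing care, and I expect it to be the main (mild) obstacle. The key point is that every pairwise distance was initially at least $d'$, not merely at least $d$.

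\emph{Entropy form.} Let $\alpha=(d-1)/m$. If $d\le m/2$, then $\alpha<\tfrac12$, and the inequality $\sum_{i=0}^{\alpha m}\binom{m}{i}\le 2^{mH(\alpha)}$ stated above applies. For $\alpha=0$ the sum is $1=2^{0}$, so the case $d=1$ is trivial. Plugging this into the first bound gives $|\mathcal{C}|\ge 2^{m-mH((d-1)/m)}=2^{m(1-H((d-1)/m))}$, as required.
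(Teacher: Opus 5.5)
Your argument is correct. It is the standard greedy sphere-covering proof found in the coding-theory texts the paper cites (the paper itself states this bound without proof), and your entropy form follows from the ball-volume estimate in the usual way. Your adjustment to obtain minimum distance exactly $d$ is sound; note only that the case $|\mathcal{C}|=1$ cannot arise, since $d \le m$ puts the complement of the starting word outside $B(c,d-1)$.
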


\begin{theorem}[Johnson bound]
\label{theorem:johnson}
    Let $\mathcal{C}$ be any binary code of length $m$ and minimum distance $d_\mathcal{C}=\frac{1}{2}(1-\delta)m$ for some $0 < \delta < 1$. Let $d_\mathrm{max}=\frac{1}{2}(1-\gamma)m$ for some $0 < \gamma < 1$ and assume $\gamma > \sqrt{\delta}$. Then for any $x \in \{0,1\}^m$,
    $$|B(x,d_\mathrm{max}) \cap \mathcal{C}| \leq \min(m, \frac{1-\delta}{\gamma^2-\delta}).$$
\end{theorem}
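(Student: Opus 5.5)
We prove both parts of the bound by passing to real $\pm1$ vectors. Let $c_1,\dots,c_L$ be the distinct codewords of $\mathcal{C}$ lying in $B(x,d_\mathrm{max})$. For each $i$, define $v_i \in \{-1,1\}^m \subseteq \mathbb{R}^m$ by $(v_i)_t = (-1)^{(c_i)_t \oplus x_t}$, and let $\mathbf{1}$ denote the all-ones vector. For $\pm1$ vectors, inner products translate directly into Hamming distances.
\begin{itemize}
    \item[-] $\langle v_i,\mathbf{1}\rangle = m - 2d_H(c_i,x) \ge m - (1-\gamma)m = \gamma m$.
    \item[-] $\langle v_i,v_i\rangle = m$.
    \item[-] For $i\neq j$, $\langle v_i,v_j\rangle = m - 2d_H(c_i,c_j) \le m - (1-\delta)m = \delta m$, since $d_H(c_i,c_j)\ge d_\mathcal{C}$.
\end{itemize}
It then suffices to show separately that $L \le \frac{1-\delta}{\gamma^2-\delta}$ and that $L \le m$.

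\emph{First bound.} Let $S=\sum_i v_i$. Expanding the square and using the bounds above gives $\|S\|^2 \le Lm + L(L-1)\delta m$. On the other hand, $\langle S,\mathbf{1}\rangle \ge L\gamma m$, so Cauchy--Schwarz with $\|\mathbf{1}\|^2=m$ yields $\|S\|^2 \ge L^2\gamma^2 m$. Combining the two, we get $L\gamma^2 \le 1 + (L-1)\delta$, that is, $L(\gamma^2-\delta)\le 1-\delta$. Since $\gamma^2>\delta$, this is exactly $L \le \frac{1-\delta}{\gamma^2-\delta}$.

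\emph{Second bound, $L\le m$.} This is the step needing an idea: we shift the vectors so that they become pairwise obtuse. Set $w_i = v_i - \sqrt{\delta}\,\mathbf{1}$. For $i\ne j$,
$$\langle w_i,w_j\rangle \le \delta m - 2\sqrt\delta\,\gamma m + \delta m = 2\sqrt\delta\, m(\sqrt\delta-\gamma) < 0.$$
Moreover, $\langle w_i,\mathbf{1}\rangle \ge (\gamma-\sqrt\delta)m > 0$. Both strict inequalities use $\gamma>\sqrt\delta$; the case $\delta>0$ is given, and if $\delta=0$ one perturbs the shift slightly.

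We then show that vectors with pairwise negative inner products, all of which have positive inner product with a common vector $u=\mathbf{1}$, are linearly independent. Suppose $\sum_i a_i w_i = 0$. Split the indices into $P=\{i : a_i>0\}$ and $N=\{i : a_i<0\}$, and write
$$z=\sum_{i\in P} a_i w_i=\sum_{j\in N}|a_j|w_j.$$
Then $\|z\|^2=\sum_{i\in P,\, j\in N} a_i|a_j|\langle w_i,w_j\rangle\le 0$, so $z=0$. But if $P\neq\emptyset$, then $\langle z,u\rangle=\sum_{i\in P}a_i\langle w_i,u\rangle>0$, a contradiction. Hence $P=\emptyset$, and symmetrically $N=\emptyset$, so all $a_i=0$.

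Therefore the $L$ vectors $w_i$ are linearly independent in $\mathbb{R}^m$, giving $L\le m$. Together with the first bound, this proves the theorem.
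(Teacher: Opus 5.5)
The paper gives no proof of this statement: it is quoted as a classical coding-theory bound, with a pointer to the standard texts, and is used only as a black box in the appendix. Your argument is a correct, self-contained proof, and it is the standard geometric proof of this $\min(m,\cdot)$ form of the Johnson bound.

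It proceeds in three steps:
\begin{itemize}
\item[--] The $\pm1$ embedding turns the distance hypotheses into $\langle v_i,\mathbf{1}\rangle\ge\gamma m$ and $\langle v_i,v_j\rangle\le\delta m$.
\item[--] Your Cauchy--Schwarz step is equivalent to the usual expansion of $\|\sum_i (v_i-\alpha\mathbf{1})\|^2\ge 0$ at the optimal $\alpha=\gamma$, which gives $L(\gamma^2-\delta)\le 1-\delta$.
\item[--] The shift by $\sqrt{\delta}\,\mathbf{1}$, followed by the lemma that pairwise obtuse vectors with a common positive direction are linearly independent, gives $L\le m$.
\end{itemize}

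Two small remarks. Dividing by $L$ in the first bound tacitly assumes $L\ge 1$; the case $L=0$ is trivial. The perturbation for $\delta=0$ is unnecessary, for two reasons: $\delta>0$ is a hypothesis, and your independence argument only uses $\langle w_i,w_j\rangle\le 0$, not strict negativity.
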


\section{Complexity of \texorpdfstring{$(k,\tau)$}{(k, τ)}-\textsc{center}}

We now establish our main result: the complexity of $(k,\tau)$-\textsc{center}, classified by $k$ and $\tau$, summarized in \cref{table:general-classification}. Membership in NP is immediate for any $k \geq 1$ and $\tau \in [0,1]$: given center strings $y_1,\dots,y_k$, we can easily check that they together cover all the input strings. For a single center, we show NP-hardness by a reduction from a binary CSP; for multiple centers, we reduce from a CSP over domain $\{0,1\}^k$.

\subsection{Single Center}

We show NP-hardness of $(1,\tau)$-\textsc{center} for any parameter $\tau \in (0,1)$ by a polynomial-time reduction from the binary constraint satisfaction problem $\mathrm{CSP}(\Gamma_{\tau, M})$, where $M$ depends on $\tau$ and the language $\Gamma_{\tau, M}$ is defined below. In fact, we show hardness for a restricted class of $(1,\tau)$-\textsc{center} instances in which every string has exactly $M$ revealed positions, and its revealed entries are either all $1$ or all $0$. Let us first define necessary binary vectors and relations.

\begin{definition}
\label{definition:string-relation}
Let $\tau \in (0, 1)$ and $M$ be a positive integer. For each $i \in {0, \dots, M}$, we define
\begin{itemize}
    \item[-] a binary vector $s^{(M,i)} \in \{0,1\}^M$ of length $M$ such that $s^{(M,i)}_j = 1$ for $j \leq i$ and $s^{(M,i)}_j=0$ for $j > i$,
    \item[-] a relation $R^{(\tau,M,i)}\subseteq \{0,1\}^M$ such that $v \in R^{(\tau,M,i)}$ iff $d_H(v, s^{(M,i)}) \leq (1-\tau)M$.
\end{itemize}
\end{definition}

\begin{definition}
\label{def:gamma-tau-m}
Let $\tau \in (0, 1)$ and $M$ be a positive integer. We define the constraint language
$$\Gamma_{\tau, M} = \{R^{(\tau,M,0)},R^{(\tau,M,M)}\}.$$
\end{definition}

When $\tau$ and $M$ are clear from context, we denote the relations $R^\textbf{no}=R^{(\tau,M,0)}$ and $R^\textbf{yes}=R^{(\tau,M,M)}$ for simplicity. We obtain NP-hardness of $\mathrm{CSP}(\Gamma_{\tau, M})$ by showing that $\Gamma_{\tau, M}$ has none of the six polymorphisms from Schaefer's theorem.

\begin{lemma}
\label{lemma-csp-hard}
$\mathrm{CSP}(\Gamma_{\tau, M})$ is NP-hard whenever $\tau \in (0, 1)$ and $M \geq 2/\min(\tau,1-\tau)$.
\end{lemma}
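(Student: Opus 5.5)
The plan is to apply Schaefer's theorem: I will show that $\Gamma_{\tau,M}$ admits none of the six polymorphisms $\mathbf{0},\mathbf{1},\mathbf{AND},\mathbf{OR},\mathbf{majority},\mathbf{parity}$, and then CSP$(\Gamma_{\tau,M})$ is NP-complete. Write $t=\lfloor(1-\tau)M\rfloor$. Then $R^\textbf{no}$ is the set of vectors of Hamming weight at most $t$, and $R^\textbf{yes}$ is the set of vectors of weight at least $M-t$. The hypothesis $M\ge 2/\min(\tau,1-\tau)$ gives $(1-\tau)M\ge 2$, so $t\ge 2$. It also gives $\tau M\ge 2$, i.e.\ $M-(1-\tau)M\ge 2$, so $t\le M-2$. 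Both relations are symmetric under coordinate permutations, so only weights matter.

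The individual polymorphisms are handled as follows.

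\textbf{Constants.} The constant $\mathbf{1}$ maps tuples of $R^\textbf{no}$ to the all-ones vector, which has weight $M>t$, so it is not a polymorphism. Symmetrically, $\mathbf{0}$ fails on $R^\textbf{yes}$, since $0<M-t$.

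\textbf{OR and AND.} For $\mathbf{OR}$, take two disjointly supported vectors in $R^\textbf{no}$, each of weight $t$. This needs $2t\le M$; if instead $2t>M$, take weights $t$ and $M-t$ with disjoint supports, whose union is all of $[M]$. In either case I want a union of weight $>t$. If $2t\le M$, the union has weight $2t>t$ because $t\ge1$. Otherwise the union has weight $M>t$. So $\mathbf{OR}$ fails on $R^\textbf{no}$. Dually, $\mathbf{AND}$ fails on $R^\textbf{yes}$ by complementation, since complementation swaps the two relations.

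\textbf{Majority and parity.} Both fail on $R^\textbf{no}$.

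For majority, I want three vectors of weight $\le t$ whose pairwise overlaps produce a majority of weight $>t$. Choose supports $A\cup B$, $B\cup C$, $C\cup A$ with $A,B,C$ disjoint and $|A|+|B|,|B|+|C|,|C|+|A|\le t$. Then the majority is $A\cup B\cup C$. Take $|A|=|B|=|C|=\lfloor t/2\rfloor$, or slightly unbalanced sizes; this needs $3\lfloor t/2\rfloor>t$, which is delicate when $t$ is small or odd. For example, $t=2$ gives sizes $1,1,1$ with pairwise sums $2\le t$ and union $3>2$, which requires $M\ge 3$; that holds since $M\ge t+2$. In general choose sizes $a,b,c$ with $a+b=t$, $c\ge1$, $b+c\le t$, $a+c\le t$, e.g.\ $a=\lceil t/2\rceil$, $b=\lfloor t/2\rfloor$, $c=\lfloor t/2\rfloor\ge1$. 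The union has weight $>t$, and it fits in $M$ coordinates provided $a+b+c\le M$. If it does not fit, I will shrink $c$ to $\min(\lfloor t/2\rfloor,M-t)\ge 1$, using $M-t\ge2$.

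For parity, take $x,y$ of weight $t$ overlapping in $t-1$ coordinates (possible since $t+1\le M$), and $z=0$. Then $x\oplus y\oplus z$ has weight 2. This is not enough, since $2\le t$. Better: take $x,y,z$ with disjoint supports of weights $1$, $1$ and $\min(t,M-2)$, whose XOR has weight $2+\min(t,M-2)>t$ when $\min(t,M-2)\ge t-1$, which holds because $t\le M-2$. Dually, parity also fails on $R^\textbf{yes}$.

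The main obstacle is the bookkeeping for majority and parity near the boundary values of $t$. This is where both bounds $t\ge 2$ and $M-t\ge2$ from the hypothesis on $M$ are used: they ensure enough room for disjoint supports while keeping each vector within the relation.
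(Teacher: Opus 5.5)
Your proof is correct and follows the same strategy as the paper. Both rule out, via Schaefer's theorem, each of $\mathbf{0},\mathbf{1},\mathbf{AND},\mathbf{OR},\mathbf{majority},\mathbf{parity}$ as a polymorphism of $\Gamma_{\tau,M}$ using explicit witness tuples, with $2\le\lfloor(1-\tau)M\rfloor\le M-2$ supplying the room. The only difference is the choice of witnesses: you refute majority and parity on $R^{\textbf{no}}$ with disjointly supported vectors, which makes parity a one-line case, whereas the paper refutes them on $R^{\textbf{yes}}$ using $s^{(M,\mu)}$, its reverse and shifted copies, with a case split on $\mu$ versus $M/2$ for parity.
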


\begin{proof}
    We show that none of the six functions in Schaefer's theorem is a polymorphism of $\Gamma_{\tau, M}$.

    First, observe that for $M \geq 1/\tau$, $(1,\dots,1) \notin R^\textbf{no}$, hence $\mathbf{1}$ is not a polymorphism of $R^\textbf{no}$. Similarly, $\mathbf{0}$ is not a polymorphism of $R^\textbf{yes}$, which implies that none of the constant operations $\mathbf{0}$ and $\mathbf{1}$ is a polymorphism of $\Gamma_{\tau,M}$.

    For the binary operations, consider the relation $R^\textbf{yes}$, the minimum agreement $\mu = \lceil \tau M \rceil$, and assume $M \geq 1/\min(\tau,1-\tau)$ so that $1 \leq \mu \leq M-1$. Take the binary vector $s^{(M,\mu)}$ and its reverse $r$, and observe that $s^{(M,\mu)}, r \in R^\textbf{yes}$ and $s^{(M,\mu)} \neq r$. Applying $\mathbf{AND}$ to $s^{(M,\mu)}$ and $r$ element-wise, we obtain a vector $q$ whose first $M-\mu$ and last $M-\mu$ elements are equal to 0, and the Hamming distance $d_H(q, s^{(M,\mu)}) = \min(M, 2(M-\mu)) > (1-\tau)M$. Therefore, $q \notin R^\textbf{yes}$ and $\mathbf{AND}$ is not a polymorphism of $R^\textbf{yes}$. Similarly, $\mathbf{OR}$ is not a polymorphism of $R^\textbf{no}$, and none of the two operations is a polymorphism of $\Gamma_{\tau,M}$.

    For the operation $\mathbf{majority}$, consider again the relation $R^\textbf{yes}$, the minimum agreement $\mu = \lceil \tau M \rceil$, but now assume $M \geq 2/\min(\tau,1-\tau)$ so that $2 \leq \mu \leq M-2$. Take the binary vector $s^{(M,\mu)}$ and two vectors $r$ and $q$ obtained from $s^{(M,\mu)}$ by shifting the last element with value 1 to the right, that is:
    \begin{itemize}
        \item[-] $r_i=1$ for $i \in \{1,\dots,\mu-1,\mu+1\}$ and $r_i=0$ for $i \in \{\mu, \mu+2, \mu+3,\dots,M\}$,
        \item[-] $q_i=1$ for $i \in \{1,\dots,\mu-1,\mu+2\}$ and $q_i=0$ for $i \in \{\mu, \mu+1, \mu+3,\mu+4,\dots,M\}$.
    \end{itemize}
    Applying $\mathbf{majority}$ to $s^{(M,\mu)}$, $r$ and $q$ element-wise, we obtain the vector $s^{(M,\mu-1)}$, which is not in $R^\textbf{yes}$ by the definition of $\mu$. Therefore, $\mathbf{majority}$ is not a polymorphism of $R^\textbf{yes}$.

    Last, for the operation $\mathbf{parity}$, consider again the relation $R^\textbf{yes}$, the minimum agreement $\mu = \lceil \tau M \rceil$, and assume $M \geq 1/\tau$ so that $\mu \geq 1$. Take the binary vector $s^{(M,\mu)}$ and its reverse $r$ so that both $s^{(M,\mu)}, r \in R^\textbf{yes}$. If $\mu > \frac{M}{2}$, applying $\mathbf{parity}$ to $s^{(M,\mu)}$, $r$ and $s^{(M,M)}$ element-wise is the same as applying $\mathbf{AND}$ to $s^{(M,\mu)}$ and $r$, hence we obtain a binary vector $q \notin R^\textbf{yes}$. Otherwise, $\mu \leq \frac{M}{2}$, let $q$ be a binary vector obtained by applying $\mathbf{OR}$ to $s^{(M,\mu)}$ and $r$. Applying $\mathbf{parity}$ to $s^{(M,\mu)}$, $r$ and $q$ element-wise we obtain the binary vector $s^{(M,0)} \notin R^\textbf{yes}$. In both cases, $\mathbf{parity}$ is not a polymorphism of $R^\textbf{yes}$.
\end{proof}

\begin{theorem}
\label{theorem:delegation-np-complete}
$(1,\tau)$-\textsc{center} is NP-complete for any $\tau \in (0, 1)$.
\end{theorem}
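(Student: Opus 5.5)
Membership in NP was already observed at the start of this section, so I only need NP-hardness. I will reduce from $\mathrm{CSP}(\Gamma_{\tau,M})$ with $M = \lceil 2/\min(\tau,1-\tau)\rceil$. This $M$ depends only on $\tau$, so it is a constant, and \cref{lemma-csp-hard} makes the CSP NP-hard.

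Given a CSP instance with variables $U=\{v_1,\dots,v_m\}$ and constraints $c_1,\dots,c_n$, I build $n$ strings of length $m$, one per constraint. For a constraint $c_i = R^{\textbf{yes}}(v_{j_1},\dots,v_{j_M})$, the string $x_i$ has $x_{i,j_\ell}=1$ for all $\ell\in[M]$ and $\star$ elsewhere. For $c_i=R^{\textbf{no}}(v_{j_1},\dots,v_{j_M})$, the string $x_i$ has $0$ at those positions and $\star$ elsewhere. Because the variables in a constraint are distinct, each $x_i$ has exactly $m_i=M$ revealed entries. The construction clearly takes polynomial time.

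For correctness I identify an assignment $f$ with the center $y\in\{0,1\}^m$ given by $y_j=f(v_j)$. Since $d_H$ counts only revealed coordinates, $d_H(x_i,y)$ equals the Hamming distance between $(y_{j_1},\dots,y_{j_M})$ and the all-ones vector $s^{(M,M)}$ (or the all-zeros vector $s^{(M,0)}$, respectively). Hence $d_H(x_i,y)\le(1-\tau)m_i$ holds exactly when the tuple lies in $R^{\textbf{yes}}$ (respectively $R^{\textbf{no}}$). The order of the arguments does not matter, because both relations are symmetric under permuting coordinates. It follows that satisfying assignments correspond exactly to valid single centers, in both directions.

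I do not expect a real obstacle; the substantive work is in \cref{lemma-csp-hard}. The only points to check are formal ones: constraints must use distinct variables so that $m_i=M$ exactly, and $M$ must be a fixed constant depending only on $\tau$ so that $\Gamma_{\tau,M}$ is a fixed finite language.
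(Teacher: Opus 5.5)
Your proposal is correct and follows the paper's proof essentially verbatim: you use the same constant $M=\lceil 2/\min(\tau,1-\tau)\rceil$, encode each constraint as a string with $M$ revealed entries that are all $1$ or all $0$, and use the same bijection $y_j=f(v_j)$ between assignments and centers. Your explicit remarks that both relations are invariant under permuting coordinates and that distinct variables force $m_i=M$ are details the paper leaves implicit.
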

\begin{proof}
Let $M=\lceil2/\min(\tau,1-\tau)\rceil$ so that $\mathrm{CSP}(\Gamma_{\tau, M})$ is NP-hard. Given an instance of $\mathrm{CSP}(\Gamma_{\tau, M})$ with the set of variables $U=\{v_1,\dots,v_m\}$ of size $m \geq M$, and the set of constraints $C=\{c_1,\dots,c_n\}$, we construct an instance of $(1,\tau)$-\textsc{center} as follows.

The strings have length $m$, with one position for each variable $v_j \in U$, and there are $n$ strings, one for each constraint $c_i \in C$. Each string $x_i$, $i \in [n]$, has exactly $M$ revealed positions, so that $m_i = M$. If the constraint $c_i$ is of type $R^\textbf{no}(v_{j_1},\dots,v_{j_M})$ for some distinct indices  $j_1,\dots,j_M \in [m]$, then $x_i$ has value $0$ at each of the positions $j_1,\dots,j_M$ and $\star$ elsewhere. Otherwise, the constraint $c_i$ is of type $R^\textbf{yes}(v_{j_1},\dots,v_{j_M})$, and $x_i$ has value $1$ at each of the positions $j_1,\dots,j_M$ and $\star$ elsewhere.

Observe that there is a 1-to-1 correspondence between any assignment $f: U \rightarrow \{0,1\}$ of $\mathrm{CSP}(\Gamma_{\tau, M})$, and the center $y\in\{0,1\}^m$ such that $y_j=f(v_j)$, in a sense that $f$ satisfies exactly the constraints whose corresponding strings are covered by $y$. This implies that a satisfying assignment of $\mathrm{CSP}(\Gamma_{\tau, M})$ exists if and only if there is a single center that covers all strings.
\end{proof}

\subsection{Multiple Centers}

We now study $(k,\tau)$-\textsc{center} for $k \geq 2$ and $\tau > \frac12,$ as for $\tau \leq \frac12$ the answer is always positive: the centers $0^m$ and $1^m$ cover all strings, since every string agrees with one of them on at least half of its revealed positions. For $\tau > \frac12$, we turn to constraint satisfaction problems over a more general domain.

\begin{definition}
Let $\tau \in (\frac12, 1)$, let $M$ and $k$ be positive integers, and let $w = \lfloor\frac{1}{2}(1-\tau)M\rfloor + 1$. We define a relation $S^{(\tau,M,k)}\subseteq \left\{\{0,1\}^k\right\}^M$ such that $X \in S^{(\tau,M,k)}$ iff $d_H(x_i, 1^{M-2w}0^{2w}) \leq (1-\tau)M$ for some row $x_i = (X_{i,1}, \dots, X_{i,M})$ of $X$.
\end{definition}

\begin{observation}
\label{observation:rep-not-rel}
If $M > \frac{2}{2\tau-1}$, then for any $s \in \{0,1\}^k$, $s^M \notin S^{(\tau,M,k)}.$
\end{observation}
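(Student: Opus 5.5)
The plan is to unfold the definitions and reduce the statement to a counting inequality on a single row. Write $X = s^M$ for the $k \times M$ matrix whose columns all equal $s \in \{0,1\}^k$. Each row $x_i$ of $X$ is then constant, either $0^M$ or $1^M$, according to the value $s_i$. So it suffices to show that neither constant string lies within Hamming distance $(1-\tau)M$ of the target $t = 1^{M-2w}0^{2w}$. Once that is established, no row satisfies the membership condition of $S^{(\tau,M,k)}$, and so $s^M \notin S^{(\tau,M,k)}$.

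\emph{Step 1: the two distances.} We have $d_H(1^M, t) = 2w$ and $d_H(0^M, t) = M - 2w$.

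\emph{Step 2: the case $1^M$.} Since $w = \lfloor \frac12(1-\tau)M\rfloor + 1 > \frac12(1-\tau)M$, we get $2w > (1-\tau)M$. This holds unconditionally and does not use the hypothesis on $M$.

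\emph{Step 3: the case $0^M$.} We need $M - 2w > (1-\tau)M$, which is equivalent to $2w < \tau M$. Using $w \le \frac12(1-\tau)M + 1$, we have $2w \le (1-\tau)M + 2$. It therefore suffices that $(1-\tau)M + 2 < \tau M$, that is, $(2\tau - 1)M > 2$. Since $\tau > \frac12$, this is exactly the hypothesis $M > \frac{2}{2\tau-1}$.

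The only delicate point is Step 3. There, the floor in the definition of $w$ must be bounded correctly, and the strict inequality must be tracked. Because the hypothesis is strict and $2w \le (1-\tau)M + 2$, the conclusion $2w < \tau M$ is strict, as required. The same inequality also gives $M - 2w > 0$, so the target $t = 1^{M-2w}0^{2w}$ is well-defined with a nonempty block of ones. Combining Steps 2 and 3, neither constant row is within distance $(1-\tau)M$ of $t$, which completes the argument.
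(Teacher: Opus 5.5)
Your proof is correct and follows essentially the same route as the paper. Both arguments note that every row of $s^M$ is $0^M$ or $1^M$, get $2w > (1-\tau)M$ directly from the definition of $w$, and get $M-2w > (1-\tau)M$ from $2w \le (1-\tau)M + 2$ together with $M > \frac{2}{2\tau-1}$.
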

\begin{proof}
    Any row $x_i$ of $s^M$ is either $0^M$ or $1^M$. In the first case, $d_H(0^M, 1^{M-2w}0^{2w}) = M-2w > (1-\tau)M$, where the inequality follows from $w \leq \frac12(1-\tau)M + 1$ and $M > \frac{2}{2\tau-1}$. In the second case, $d_H(1^M, 1^{M-2w}0^{2w}) = 2w > (1-\tau)M$ by the definition of $w$.
\end{proof}

\begin{definition}
\label{def:gamma-tau-m-k}
Let $\tau \in (\frac12, 1)$ and $M, k$ be positive integers. We define the constraint language $\Gamma_{\tau, M, k} = \{S^{(\tau,M,k)}\}$.
\end{definition}

\begin{lemma}
\label{lemma-general-csp-hard}
$\mathrm{CSP}(\Gamma_{\tau, M, k})$ is NP-hard whenever $k \geq 2$, $\tau \in (\frac12,1)$, and $M > \max(\frac{2}{1-\tau}, \frac{2}{2\tau-1})$.
\end{lemma}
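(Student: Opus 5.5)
The plan is to show that the constraint language $\Gamma_{\tau,M,k}=\{S^{(\tau,M,k)}\}$, over the domain $D=\{0,1\}^k$, has no Siggers polymorphism. \cref{theorem:barto-siggers-np-hard} then gives NP-hardness. Suppose for contradiction that $f:D^4\to D$ is a Siggers polymorphism. I will exhibit four tuples $T_1,T_2,T_3,T_4\in S^{(\tau,M,k)}$ such that applying $f$ coordinatewise to $(T_1,T_2,T_3,T_4)$ gives a constant tuple $c^M$ for some $c\in D$. By \cref{observation:rep-not-rel}, $c^M\notin S^{(\tau,M,k)}$ when $M>\frac{2}{2\tau-1}$, which contradicts $f$ being a polymorphism.

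To force a constant output I use \cref{observation:siggers-xy}: $f(x,y,x,x)=f(y,x,y,x)=f(y,x,x,y)$. Fix two distinct elements of $D$ (this needs $k\ge 2$):
\[
a=(1,0,0,\dots,0),\qquad b=(0,1,0,\dots,0).
\]
Let $P$ be the first $M-2w$ positions, where the target string $t=1^{M-2w}0^{2w}$ has ones. Split the remaining $2w$ zero positions of $t$ into $B$, the first $w$ of them, and $C$, the last $w$. Put $A=P$. Then $A,B,C$ partition $[M]$ and all three are nonempty: $M-2w>0$ follows from $M>\frac{2}{2\tau-1}$, as in the proof of \cref{observation:rep-not-rel}. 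Define the four tuples by the value they take on each part, in the order $(A,B,C)$:
\[
T_1=(a,b,b),\quad T_2=(b,a,a),\quad T_3=(a,b,a),\quad T_4=(a,a,b).
\]
Reading down the four tuples, the coordinates in $A$ see $(a,b,a,a)$, those in $B$ see $(b,a,b,a)$, and those in $C$ see $(b,a,a,b)$. By \cref{observation:siggers-xy} all three evaluate under $f$ to the same value $c=f(a,b,a,a)$, so the output is $c^M$.

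It remains to check that each $T_\ell\in S^{(\tau,M,k)}$, that is, that some row lies within distance $(1-\tau)M$ of $t$. Row $1$ of a tuple is the indicator of the set of positions holding $a$, and row $2$ is the indicator of the positions holding $b$.
\begin{itemize}
\item[-] For $T_1$, row $1$ is the indicator of $A=P$, which equals $t$ exactly.
\item[-] For $T_2$, row $2$ is the indicator of $A$, which again equals $t$.
\item[-] For $T_3$, row $1$ is the indicator of $A\cup C$, at distance $w$ from $t$.
\item[-] For $T_4$, row $1$ is the indicator of $A\cup B$, also at distance $w$.
\end{itemize}
So the only inequality needed is $w=\lfloor\frac12(1-\tau)M\rfloor+1\le(1-\tau)M$, which holds as soon as $\frac12(1-\tau)M\ge 1$. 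This is guaranteed by $M>\frac{2}{1-\tau}$.

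The main obstacle is choosing the tuples. Each must lie in the relation, yet together they must realise exactly the three Siggers patterns of \cref{observation:siggers-xy} with one fixed ordered pair $(a,b)$, and no constant column may appear, since $f(a,a,a,a)$ need not equal $c$. The choice of $w$ in the definition of $S^{(\tau,M,k)}$ resolves this tension. It is small enough that moving $w$ zeros of $t$ to ones stays within the threshold. It is large enough, with $2w>(1-\tau)M$, that $1^M$ is excluded. And $M-2w>(1-\tau)M$ excludes $0^M$. Once the construction is fixed, the remaining verification is routine.
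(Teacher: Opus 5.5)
Your proof is correct and takes essentially the same route as the paper. Both arguments show that $S^{(\tau,M,k)}$ has no Siggers polymorphism by building four tuples from two domain elements over the blocks $1^{M-2w}\,|\,0^w\,|\,0^w$; the columns then realise the three patterns of \cref{observation:siggers-xy}, and the constant output is excluded by \cref{observation:rep-not-rel}. The differences are cosmetic: you use $a=(1,0,\dots,0)$ and $b=(0,1,0,\dots,0)$ where the paper uses their complements $p,q$, and you place the three patterns in the blocks in a different order.
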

\begin{proof}
We show that $S^{(\tau,M,k)}$ has no Siggers polymorphism, NP-hardness then follows from \cref{theorem:barto-siggers-np-hard}.

We define $p, q \in \{0, 1\}^k$ such that $p_1=0$ and $p_i=1$ for $i \geq 2$, and $q_2=0$ and $q_i=1$ for $i \neq 2$. We use $p$ and $q$ as building blocks of the following 4 elements of $S^{(\tau,M,k)}$:
$A = q^{M-2w} q^w p^w;$
$B = p^{M-2w} p^w q^w;$
$C = p^{M-2w} q^w p^w;$
$D = q^{M-2w} p^w p^w,$
illustrated in \cref{table:no-siggers}. Note that $M > \frac{2}{1-\tau}$ implies $w \leq (1-\tau)M$, so that a single disagreeing block of length $w$ is within the required distance.

\begin{table}
\centering

\begin{tabular}{r c c c c c}
$A=$ & $\overbrace{\qqtop}^{M-2w}$ & $\mid$ & $\overbrace{\qq}^{w}$ & $\mid$ & $\overbrace{\pptop}^{w}$ \\ \hline
$B=$ & $\ppmid$ & $\mid$ & $\pp$ & $\mid$ & $\qqmid$ \\ \hline
$C=$ & $\ppmid$ & $\mid$ & $\qqmid$ & $\mid$ & $\pp$ \\ \hline
$D=$ & $\qqtop$ & $\mid$ & $\pptop$ & $\mid$ & $\pptop$ \\ \hline
$f(\cdot)=$ & $z$ & $\mid$ & $z$ & $\mid$ & $z$
\end{tabular}

\caption{$k$-tuples are represented by horizontal vectors, $\vec{1}$ represents a sub-vector of $k-2$ ones. Highlighted blocks are in agreement with $1^{M-2w}0^{2w}$, ensuring that all $A, B, C, D \in S^{(\tau,M,k)}$.}
\label{table:no-siggers}
\end{table}

Let $f$ be a Siggers operation and apply it to $A,B,C,D$ element-wise. By \cref{observation:siggers-xy}, 
$$z = f(q,p,p,q) = f(q,p,q,p) = f(p,q,p,p),$$
and by \cref{observation:rep-not-rel}, $z^M \notin S^{(\tau,M,k)}$. Since 
$$\big(f(A_1,B_1,C_1,D_1), \dots, f(A_M,B_M,C_M,D_M)\big) = z^M,$$
$f$ is not a polymorphism of $S^{(\tau,M,k)}$.
\end{proof}

\begin{theorem}
\label{theorem:k-delegation-np-complete}
For any $k \geq 2$ and $\tau \in (\frac12, 1)$, $(k,\tau)$-\textsc{center} is NP-complete.
\end{theorem}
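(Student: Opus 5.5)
Membership in NP was noted at the start of this section, so only NP-hardness remains. I would reduce from $\mathrm{CSP}(\Gamma_{\tau,M,k})$, which is NP-hard by \cref{lemma-general-csp-hard} once we fix an integer $M > \max(\frac{2}{1-\tau},\frac{2}{2\tau-1})$, a constant depending only on $\tau$. The reduction mirrors the proof of \cref{theorem:delegation-np-complete}, except that a variable now ranges over $\{0,1\}^k$. Such a value is read as the column of the $k$ centers at that position: the $j$-th entry of the value of variable $v_l$ is $y_{j,l}$, the $l$-th coordinate of center $y_j$.

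Given variables $U=\{v_1,\dots,v_m\}$ and constraints $c_1,\dots,c_n$, each of the form $S^{(\tau,M,k)}(v_{l_1},\dots,v_{l_M})$ with distinct $l_t$, I would build $n$ strings of length $m$. String $x_i$ places the pattern $1^{M-2w}0^{2w}$ on positions $l_1,\dots,l_M$ in that order, that is, $x_{i,l_t}=1$ for $t\le M-2w$ and $0$ otherwise, with $\star$ everywhere else. Thus $m_i=M$.

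Assignments $f\colon U\to\{0,1\}^k$ correspond bijectively to $k$-tuples of centers via $y_{j,l}=f(v_l)_j$. Under this correspondence, center $y_j$ covers $x_i$ exactly when $d_H\big((y_{j,l_1},\dots,y_{j,l_M}),1^{M-2w}0^{2w}\big)\le(1-\tau)M$, since the distance counts only revealed positions. That vector is precisely the $j$-th row of the $k\times M$ matrix $X=(f(v_{l_1}),\dots,f(v_{l_M}))$. So $x_i$ is covered by some center iff some row of $X$ is within distance $(1-\tau)M$ of $1^{M-2w}0^{2w}$, i.e.\ iff $X\in S^{(\tau,M,k)}$. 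Hence all strings are covered iff $f$ satisfies every constraint, and the construction is clearly polynomial in $m$ and $n$.

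The only point needing care is that the encoding matches the relation's definition exactly: rows of $X$ must correspond to centers and columns to positions. Also, the order of the constrained variables must be preserved when placing the pattern, because $S^{(\tau,M,k)}$ is not symmetric. With that bookkeeping fixed, NP-hardness follows, and together with NP membership the theorem holds.
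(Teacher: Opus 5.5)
Your proposal is correct and follows the paper's proof essentially step for step: the same reduction from $\mathrm{CSP}(\Gamma_{\tau,M,k})$ with a constant $M > \max(\frac{2}{1-\tau},\frac{2}{2\tau-1})$, one string per constraint carrying the pattern $1^{M-2w}0^{2w}$ on the constrained positions (so $m_i = M$), and the bijection between assignments $f\colon U\to\{0,1\}^k$ and $k$-tuples of centers under which the rows of the constraint matrix are exactly the centers' restrictions to those positions. Your explicit remarks that rows correspond to centers and that the order of the constrained variables must be preserved spell out bookkeeping the paper leaves implicit.
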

\begin{proof}
Let $M=\max(\lfloor \frac{2}{1-\tau} \rfloor, \lfloor \frac{2}{2\tau-1} \rfloor) + 1$, so by \cref{lemma-general-csp-hard}, $\mathrm{CSP}(\Gamma_{\tau, M, k})$ is NP-hard. Given an instance of $\mathrm{CSP}(\Gamma_{\tau, M, k})$ with the set of variables $U=\{v_1,\dots,v_m\}$ of size $m$ and each from the domain $\{0,1\}^k$, and the set of constraints $C=\{c_1,\dots,c_n\}$, we construct an instance of $(k,\tau)$-\textsc{center} as follows. We may assume $m \geq M$, as otherwise $m$ is bounded by a constant depending only on $\tau$ and the CSP instance can be decided in constant time.

The strings have length $m$, with one position for each variable $v_j \in U$, and there are $n$ strings, one for each constraint $c_i \in C$. Each string $x_i$, $i \in [n]$, has exactly $M$ revealed positions, so that $m_i = M$. For a constraint $c_i$ of type $S^{(\tau,M,k)}(v_{j_1},\dots,v_{j_M})$ for some distinct indices  $j_1,\dots,j_M \in [m]$, the string $x_i$ has value $1$ at the positions $j_1,\dots,j_{M-2w}$, value $0$ at the positions $j_{M-2w+1},\dots,j_M$, and $\star$ elsewhere.

Observe that there is a 1-to-1 correspondence between any assignment $f: U \rightarrow \{0,1\}^k$ of $\mathrm{CSP}(\Gamma_{\tau, M, k})$, and the centers $y^{(1)},\dots, y^{(k)}\in\{0,1\}^m$ such that $y_j^{(i)}=f(v_j)_i$, in a sense that $f$ satisfies exactly the constraints whose corresponding strings are covered by at least one of the centers $y^{(1)},\dots, y^{(k)}$. This implies that a satisfying assignment of $\mathrm{CSP}(\Gamma_{\tau, M, k})$ exists if and only if there are $k$ centers that cover all strings.
\end{proof}

Note that, as in the case of a single center, our reduction produces instances
in which every string reveals only a constant number of positions, so
$(k,\tau)$-\textsc{center} is NP-hard already for this restricted class.

We finish with the edge cases of the agreement threshold. For $\tau = 0$ the
problem is trivial, as any center covers all input strings, while for
$\tau = 1$ it is closely related to vertex colouring.

\begin{observation}
$(k, 1)$-\textsc{center} is in P for $k \leq 2$, and NP-complete for $k \geq 3$.
\end{observation}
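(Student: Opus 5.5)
For $\tau = 1$ a string $x_i$ is covered by a center $y$ exactly when $y$ agrees with $x_i$ on every revealed coordinate. So the question becomes: can the $n$ partial strings be split into $k$ groups such that each group has a common completion? A group of partial strings has a common completion if and only if no two of its members \emph{conflict}, that is, have a position where one has $0$ and the other has $1$. The plan is to reduce everything to colouring the \emph{conflict graph} $G$, whose vertices are the input strings and whose edges join conflicting pairs. The answer is \textbf{Yes} if and only if $G$ is $k$-colourable.

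First, the equivalence. If the strings in a colour class are pairwise non-conflicting, then for each coordinate all revealed values in that class coincide. Set the center to that common value, or to $0$ if the coordinate is unrevealed in the whole class. This center agrees with every string of the class, hence covers it. Conversely, given $k$ centers, assign each string to a center that covers it; two strings sharing a center agree with it on their revealed positions, so they cannot conflict. Building $G$ takes $O(n^2 m)$ time, and a colouring converts to centers in polynomial time.

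Tractability for $k \le 2$ then follows immediately. For $k=1$ we check that $G$ has no edges. For $k=2$ we check that $G$ is bipartite by BFS and return the centers obtained from the two sides.

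For NP-hardness with $k \ge 3$, reduce from graph $k$-colouring, which is NP-complete for every fixed $k\ge 3$. Given a graph $H=(V,E)$, take one coordinate per edge and one string $x_v$ per vertex. For the edge $e=\{u,v\}$ with $u<v$ in a fixed order, set $x_{u,e}=0$ and $x_{v,e}=1$; set $x_{w,e}=\star$ for all other vertices $w$. Two strings then conflict exactly on the coordinate of their common edge, if there is one, so the conflict graph is precisely $H$. One should also handle isolated vertices, whose strings have $m_i=0$: they are trivially covered, which is consistent with them being freely colourable. Alternatively, add a private coordinate for each vertex so that no string is empty.

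The main obstacle is this reduction, and it is minor: we must make sure that conflicts arise only along edges of $H$, which holds because distinct edges use distinct coordinates. Membership in NP was already noted.
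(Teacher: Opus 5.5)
Your proposal is correct and follows essentially the same route as the paper: build the conflict graph, test edgelessness for $k=1$ and bipartiteness for $k=2$, and reduce from graph $k$-colouring using one coordinate per edge. You also spell out details the paper leaves implicit, namely why pairwise non-conflicting strings have a common completion (so $k$-colourability of the conflict graph is equivalent to a \textbf{Yes} answer for every $k$) and how isolated vertices with $m_i=0$ are handled.
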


\begin{proof}
Consider a conflict graph $G = (V, E)$ with $V=[n]$ and $E = \{(i,i') \mid \exists j \in [m] \text{ such that } x_{i,j} \neq x_{i',j} \text{ and } x_{i,j}, x_{i',j} \in \{0, 1\}\}$. The existence of a center assignment covering all input strings is equivalent to $E$ being empty for $k=1$, and to $G$ being bipartite for $k=2$.

To establish NP-hardness for $k \geq 3$, we reduce from the graph $k$-colouring problem. Given an undirected graph $G = (V, E)$, we construct an instance of $(k,1)$-\textsc{center} with $n=|V|$ strings of length $m=|E|$. For each edge $e_j = (i, i')$, we restrict the known entries at coordinate $j$ to the strings $x_i$ and $x_{i'}$, setting $x_{i,j} = 0$ and $x_{i',j} = 1$ in any order. There exists a $k$-colouring of $G$ if and only if there exist $k$ binary center strings covering all $n$ input strings.
\end{proof}

\section{Special Case: Complete Strings}

The hard instances of the previous section are sparse, as every string reveals only a constant number of positions, which raises the question of whether the missing entries are the source of hardness. Their sparsity is also what makes the reductions from CSPs work: each string became a constraint of bounded arity, as the dichotomy theorems require. A string with no missing entries instead restricts all $m$ coordinates of a center at once, so its relation has arity $m$ and grows with the instance, and the dichotomy theorems no longer apply.

Motivated by this, in this section we assume that each string $x_i$ is complete, that is, $x_i \in \{0,1\}^m$ and $m_i = m$. We refer to this special case as $(k,\tau)$-\textsc{center} \emph{with complete strings}; the radius is uniform, $\lfloor(1-\tau)m\rfloor$. We assume that $\tau \in (0,1)$ as for $\tau \in \{0, 1\}$ the problem is trivially solvable. Our results for this section are summarized in \cref{table:classification-complete}.

\begin{table}[ht]
\centering
\renewcommand{\arraystretch}{1.25}
\setlength{\tabcolsep}{5pt}

\begin{tabular}{c|c|c|c}
 & P & open & NP-hard \\ \hline
$k=1$           & $\tau<\frac12$   & $\emptyset$ & $\tau\ge\frac12$ \\
$k=2$           & $\tau\le\frac12$ & $\frac12<\tau\le\frac23$ & $\tau>\frac23$ \\
$k=3$           & $\tau\le\frac12$ & $\frac12<\tau\le\frac34$ & $\tau>\frac34$ \\
$k\in\{4,5\}$   & $\tau\le\frac12$ & $\frac12<\tau\le\frac45$ & $\tau>\frac45$ \\
$k\in\{6,7\}$   & $\tau\le\frac12$ & $\frac12<\tau\le\frac{13}{16}$ & $\tau>\frac{13}{16}$ \\
general $k$     & $\tau\le\frac12$ & $\frac12<\tau\le \tau_k^\star$ & $\tau>\tau_k^\star$
\end{tabular}
\caption{Complexity of $(k,\tau)$-\textsc{center} with complete strings
classified by $k$ and $\tau \in (0,1)$. Here $\tau_2^\star = \frac23$ and
$\tau_k^\star = \frac{5\kappa-2}{6\kappa-2}$ for $k \geq 3$;
$\kappa = \lceil \frac{k-1}{2} \rceil$. In particular, $(k,\tau)$-\textsc{center} with complete strings is NP-hard for any $\tau \in (\frac56, 1)$ and any $k$.}
\label{table:classification-complete}
\end{table}

\subsection{Single Center}

We start with $k=1$ and show that $(1,\tau)$-\textsc{center} is NP-complete
for every $\tau \in [\frac12, 1)$, even when all strings are complete. For
$\tau = \frac12$ this already follows from the proof of NP-hardness of Closest
String \citep{frances_covering_1997}, whose instances have radius exactly half
of the string length. We nevertheless state the result separately and give a
self-contained proof in the technical supplement, because we later need a
stronger property: the instances it produces have a Hamming weight barely exceeding $\frac{m}{2}$, which is one of the ideas behind our
hardness results for multiple centers.

\begin{theorem}
$(1,\frac{1}{2})$-\textsc{center} with complete strings is NP-complete, even when restricted to instances with $n$ strings of length $m = 2m'$ in which every string has Hamming weight at most $m'+4$.
\label{theorem:half-delegation-hard}
\end{theorem}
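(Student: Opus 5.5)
We would reduce from $\mathrm{CSP}(\Gamma_{\frac12,M})$ with $M=\lceil 2/\min(\tau,1-\tau)\rceil=4$. This is NP-hard by \cref{lemma-csp-hard}, and by the proof of \cref{theorem:delegation-np-complete} it is the same as $(1,\frac12)$-\textsc{center} on sparse strings with exactly $M$ revealed entries. The task is then to turn each sparse string into a complete one without changing which centers cover it. The tool is a \emph{doubling} of the coordinates: the new strings have length $2m'$, position $j\le m'$ keeps its meaning, and position $m'+j$ is meant to hold the complement of position $j$. Membership in NP is immediate, as noted at the start of the section on multiple centers.

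Suppose for now that the center $y\in\{0,1\}^{2m'}$ is \emph{antipodal}, i.e.\ $y_{m'+j}=1-y_j$ for every $j$. Take a sparse string $x_i$ on the original $m$ coordinates, padded with dummy coordinates up to $m'$. We build the complete string $z_i$ as follows:
\begin{itemize}
\item at each coordinate $j$ where $x_{i,j}=b\neq\star$, set $z_{i,j}=b$ and $z_{i,m'+j}=1-b$;
\item at each coordinate $j$ where $x_{i,j}=\star$, set $z_{i,j}=z_{i,m'+j}=0$.
\end{itemize}
For an antipodal $y$, a revealed pair contributes $0$ or $2$ to the distance, according to whether $y_j=b$. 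An unrevealed pair contributes exactly $1$. Hence
\[
d_H(z_i,y)=(m'-M)+2\,d_H(x_i,y_{[m]}).
\]
The condition $d_H(z_i,y)\le m'$ is therefore equivalent to $d_H(x_i,y_{[m]})\le M/2$, which is precisely the relation $R^{\textbf{no}}$ or $R^{\textbf{yes}}$ at $\tau=\frac12$. Each $z_i$ has Hamming weight exactly $M=4\le m'+4$, so these strings are comfortably light. The reduction is correct as long as we can force every feasible center to be antipodal, or at least show that an arbitrary feasible center can be converted into an antipodal one that still covers all the strings.

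The main obstacle is the forcing gadget, and it is exactly where the bound $m'+4$ matters. The natural gadget is a complementary pair $u,\bar u$: both are covered iff $d_H(y,u)=m'$ exactly. Applied to $u=0^{2m'}$, it forces $|y|=m'$. However, complements of light strings are heavy, which breaks the weight restriction.

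The first thing we would try instead is a family of light strings $w_S$, where $w_S$ has its ones on a set $S$. A direct computation gives
\[
d_H(y,w_S)=|y|+|S|-2|y\cap S|,
\]
so each string $w_S$ with $|S|\le m'+4$ imposes a linear ``majority'' inequality $|y\cap S|\ge (|y|+|S|-m')/2$. We would choose:
\begin{itemize}
\item the all-zero string, which gives $|y|\le m'$;
\item strings with $S$ of size about $m'+2$ to $m'+4$, each consisting of one half of the coordinates with a few swaps across pairs $(j,m'+j)$. Summing the inequalities from such a string and its ``mirror'' should force $|y|=m'$ and then $|y\cap\{j,m'+j\}|=1$ for each pair.
\end{itemize}
The slack of $+4$ in the weight is what makes these inequalities tight up to parity. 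We would also add a constant number of dummy coordinates to fix parities and to fix the orientation of the antipodal pattern.

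If exact forcing turns out to be too delicate, the fallback is a rounding argument. We would show that any center covering all gadget strings can be modified pair by pair into an antipodal center whose distance to each $z_i$ does not increase. The reason this should work is that a pair on which $y$ is constant contributes the same amount to every $z_i$ that is unrevealed there, so the modification does not hurt those strings.
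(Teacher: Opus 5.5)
Your constraint encoding is correct and is essentially the paper's. You put pair $j$ at positions $j,m'+j$ rather than $2j-1,2j$; a revealed bit $b$ becomes $(b,1-b)$ and an unrevealed one becomes $00$. Your distance formula for antipodal centers is also right. The gap is the forcing gadget: you identify it as the crux but never build it, and both concrete suggestions fail.

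\textbf{Why your gadget cannot work.} For the easy direction, the antipodal center of an \emph{arbitrary} satisfying assignment must cover every gadget string $w$, so every antipodal $y$ must do so. Since $\bar y$ is antipodal whenever $y$ is, and $d_H(\bar y,w)=2m'-d_H(y,w)$, this forces $d_H(y,w)=m'$ for every antipodal $y$. Flipping one pair of an antipodal $y$ changes $d_H(y,w)$ by $\pm2$ unless $w$ is constant on that pair. Hence every gadget string must be $00$ or $11$ on every pair $\{j,m'+j\}$. Your sets ``one half of the coordinates with a few swaps across pairs'' contain exactly one element of almost every pair. So they reject every antipodal center that disagrees with their pattern on more than about half of those pairs, which throws away legitimate assignments. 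Adding the inequality of such a string to that of its mirror makes the mixed pairs cancel, leaving only a condition on the few remaining pairs; it cannot give $|y|=m'$. The rounding fallback fails too: replacing a $00$ pair of $y$ by $01$ or $10$ \emph{increases} by one the distance to every $z_i$ unrevealed on that pair.

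\textbf{The missing idea.} The weight bound need not be met by the gadget itself. The paper uses the natural pair-constant gadget, heavy strings included: $0^{2m'}$, $1^{2m'}$, the strings that are $11$ on exactly one pair, and the strings that are $00$ on exactly one pair. These force every pair of $y$ to be $01$ or $10$ (\cref{lemma-d2i}). It then XORs \emph{every} string of the instance with the fixed string $t$ that is $01$ on every pair; in your layout $t=0^{m'}1^{m'}$. Since $d_H(y,x)=d_H(y\oplus t,x\oplus t)$, feasibility is unchanged. Each pair-constant string becomes a string of weight exactly $m'$. Each of your constraint strings becomes a string of weight $(m'-4)+2r_i\in[m'-4,m'+4]$, where $r_i$ is the number of revealed ones of $x_i$. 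That is where the $+4$ comes from; it is not parity slack.

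Your direct route can also be repaired with light pair-constant strings. Set $a_j=y_j+y_{m'+j}-1$; the string that is $11$ exactly on the pairs in $P$ is covered iff $\sum_{j\in P}a_j\ge\sum_{j\notin P}a_j$. Take $P=\emptyset$, all singletons, and all $m'$ cyclic windows of $\lfloor m'/2\rfloor+1$ consecutive pairs. These give $\sum_j a_j\le 0$, $2a_j\ge\sum_{j'}a_{j'}$, and, summing over the windows, $\sum_j a_j\ge0$. Hence $a=0$, and every gadget string has weight at most $m'+2$.
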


\begin{definition}
\label{def:base-voters}
Let $m \geq 6$ be an even integer. For each $i \in \{0, \dots, \frac{m}{2}\}$, we define strings $q^{(m,i)}, r^{(m,i)} \in \{0,1\}^m$:
\begin{itemize}
    \item[-] $q^{(m,i)}_j=1$ for $j \in \{2i-1, 2i\}$ and $q^{(m,i)}_j=0$ otherwise,
    \item[-] $r^{(m,i)}_j=0$ for $j \in \{2i-1, 2i\}$ and $r^{(m,i)}_j=1$ otherwise.
\end{itemize}
Moreover, let $V^{(m)} \subset \{0,1\}^m$ be the union of the following sets of binary strings
$$
\{
0^m, 1^m
\} \cup \{
q^{(m,i)}\ |\ i \in [\frac{m}{2}]
\} \cup \{
r^{(m,i)}\ |\ i \in [\frac{m}{2}]
\}.
$$
\end{definition}

\begin{lemma}
\label{lemma-d2i}
Assume that $\tau = \frac12$ and $m \geq 6$ is even. A center $y \in \{0,1\}^m$ covers all strings in $V^{(m)}$ if and only if $y_{2i-1}\neq y_{2i}$ for all $i \in [\frac{m}{2}].$
\end{lemma}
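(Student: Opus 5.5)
The argument is a direct counting argument. Since $\tau=\frac12$, the radius is $\lfloor m/2\rfloor = m/2$. A center $y$ covers a string $x$ exactly when $d_H(x,y)\le m/2$. Write $a_i$ for the number of ones of $y$ in the pair $\{2i-1,2i\}$, so $a_i\in\{0,1,2\}$, and write $W=\sum_i a_i$ for the Hamming weight of $y$. Every string of $V^{(m)}$ has a distance to $y$ expressible through $W$ and a single $a_i$:
\[
d_H(0^m,y)=W,\qquad d_H(1^m,y)=m-W,
\]
\[
d_H(q^{(m,i)},y)=W-a_i+(2-a_i)=W+2-2a_i,\qquad d_H(r^{(m,i)},y)=m-W-2+2a_i .
\]
So $y$ covers $V^{(m)}$ if and only if three conditions hold: $W=m/2$ (from the pair $0^m,1^m$); $W+2-2a_i\le m/2$ for every $i$; and $m/2-W-2+2a_i\le 0$ for every $i$. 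Once $W=m/2$ is imposed, the last two conditions reduce to $a_i\ge 1$ and $a_i\le 1$ respectively, so they say exactly $a_i=1$ for all $i$.

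For the ``if'' direction, assume $y_{2i-1}\neq y_{2i}$ for all $i$. Then $a_i=1$ for every $i$, hence $W=m/2$, and all three conditions hold with equality.

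For the ``only if'' direction, first use $0^m$ and $1^m$ to force $W=m/2$. Substituting this into the $q$- and $r$-conditions pins $a_i=1$ for each $i$, which is exactly $y_{2i-1}\neq y_{2i}$.

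No step here is a real obstacle; the only care needed is checking the distance formulas for $q^{(m,i)}$ and $r^{(m,i)}$ and noting that the radius is exactly $m/2$ because $m$ is even. The hypothesis $m\ge 6$ is not used in this argument; it is presumably needed elsewhere in the paper.
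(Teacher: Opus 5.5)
Your proof is correct and follows essentially the same route as the paper: alternating pairs put every string of $V^{(m)}$ at distance exactly $m/2$ from $y$, and conversely $0^m$ and $1^m$ force weight $m/2$, after which $q^{(m,i)}$ and $r^{(m,i)}$ exclude a constant pair $(y_{2i-1},y_{2i})$. Your explicit formulas $W+2-2a_i$ and $m-W-2+2a_i$ package the paper's case analysis more tidily and assign the roles correctly: a $00$ pair is excluded by $q^{(m,i)}$ and an $11$ pair by $r^{(m,i)}$, whereas the paper's written proof swaps these two labels.
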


\begin{proof}

For all $i \in [\frac{m}{2}]$, the condition $y_{2i-1}\neq y_{2i}$ implies that the pair $(y_{2i-1}, y_{2i})$ is either $(0,1)$ or $(1,0)$. Every string $u \in V^{(m)}$ has the pair $(u_{2i-1},u_{2i})$ equal to either $(0,0)$ or $(1,1)$, hence they agree on exactly one of the two positions $2i-1, 2i$. Summing over all pairs, the Hamming distance between $y$ and any string in $V^{(m)}$ is exactly $\frac{m}{2}$.

For the opposite direction, assume for a contradiction that $y$ covers all strings and $y_{2i-1} = y_{2i}$ for some $i \in [\frac{m}{2}].$ Observe that $0^m, 1^m \in V^{(m)}$ together with $\tau = \frac12$, implies that $y$ has exactly $\frac{m}{2}$ zeros and $\frac{m}{2}$ ones. If $y_{2i-1} = y_{2i} = 0$, then $y$ disagrees with $r^{(m,i)}$ on positions $2i-1, 2i$ and the $\frac{m}{2}$ positions $j\in[m]$ where $y_j=1$. This implies that the Hamming distance $d_H(y, r^{(m,i)})=\frac{m}{2}+2>\tau m$ and $r^{(m,i)}$ is not covered by $y$. Otherwise, $y_{2i-1} = y_{2i} = 1$ and $q^{(m,i)}$ is not covered by $y$.
\end{proof}

Note that the $\frac{m}{2}$ strings with exactly two zeros are not necessary for the proof of \cref{lemma-d2i} but they simplify our arguments. We are now ready to show NP-hardness of $(1,\frac{1}{2})$-\textsc{center}, by a reduction from $\mathrm{CSP}(\Gamma_{\frac{1}{2}, 4})$, which is NP-hard by \cref{lemma-csp-hard}.

\begin{proof}[Proof of \cref{theorem:half-delegation-hard}]
Given an instance of $\mathrm{CSP}(\Gamma_{\frac{1}{2}, 4})$ with the set of variables $U=\{v_1,\dots,v_{m'}\}$ of size $m' \geq 3$, and the set of constraints $C=\{c_1,\dots,c_n\}$, we construct an instance of $(1,\frac{1}{2})$-\textsc{center} with complete strings as follows.

The strings have $m=2m'$ positions, two consecutive positions for each variable $v_j \in U$; and there are $n+m+2$ strings, one \emph{constraint} string for each constraint $c_i \in C$, and $m+2$ \emph{base} strings from $V^{(m)}$. Each constraint string $x_i$, $i \in [n]$, is complete and has exactly $4$ ones. If the constraint $c_i$ is of type $R^\textbf{no}(v_{j_1},v_{j_2},v_{j_3},v_{j_4})$ for some distinct indices  $j_1,j_2,j_3,j_4 \in [m]$, the string $x_i$ has value $1$ at the positions $2j_1-1,2j_2-1,2j_3-1,2j_4-1$ and value $0$ at the rest. Otherwise, the constraint $c_i$ is of type $R^\textbf{yes}(v_{j_1},v_{j_2},v_{j_3},v_{j_4})$, and $x_i$ has value $1$ at the positions $2j_1,2j_2,2j_3,2j_4$ and value $0$ at the rest. In other words, a positive literal corresponds to the pair $01$, a negative literal to its reverse, $10$; and variables not appearing in the constraint to $00$.

Consider a bijection between any assignment $f: U \rightarrow \{0,1\}$ of $\mathrm{CSP}(\Gamma_{\frac{1}{2}, m'})$, and the center $y\in\{0,1\}^m$ such that $f(v_j)=1$ if $(y_{2j-1}, y_{2j})=(0,1)$, and $f(v_j)=0$ if $(y_{2j-1}, y_{2j})=(1,0)$. \cref{lemma-d2i} implies that these are exactly the possible pairs covering the base strings.

We claim that $f$ satisfies exactly the constraints whose corresponding strings (according to the bijection described above) are covered by $y$, assuming that $y$ covers all base strings and follows the structure given by \cref{lemma-d2i}. To prove this, take the constraint string $x_i$, $i\in[n]$, corresponding to the constraint $R(v_{j_1},v_{j_2},v_{j_3},v_{j_4})$ (where $R$ is either $R^\textbf{yes}$ or $R^\textbf{no}$). If $f$ satisfies $c_i$, $y$ agrees with $x_i$ on at least 2 of the 4 pairs $(2j_1-1,2j_1), (2j_2-1,2j_2), (2j_3-1,2j_3), (2j_4-1,2j_4)$, hence $d_H(y, x_i) \leq m'$ (accounting the distance 2 for each disagreeing pair and $m'-4$ for the remaining pairs) and $x_i$ is covered. Otherwise, $f$ does not satisfy $c_i$, $y$ agrees with $x_i$ on at most 1 of the 4 pairs $(2j_1-1,2j_1), (2j_2-1,2j_2), (2j_3-1,2j_3), (2j_4-1,2j_4)$, hence $d_H(y, x_i) \geq m' + 2 > \frac{m}{2}$ and $x_i$ is not covered.

To show hardness with the weight restriction, assume $m'\ge 3$ and let $t^{(m)} = (01)^{m'}\in\{0,1\}^{m}$ be the string whose every consecutive pair equals $01$. We compute the Hamming distance of each string to $t^{(m)}$.

Every base string is at distance exactly $m'$ from $t^{(m)}$, as $01$ differs from both $00$ and $11$ in exactly one coordinate. Every constraint string has exactly 4 active pairs, one per literal, each equal to $01$ (positive literal) or $10$ (negative literal), and $m'-4$ inactive pairs equal to $00$. An inactive pair $00$ differs from $01$ in one coordinate; an active $01$ differs
in $0$; an active $10$ differs in $2$ coordinates. Hence the distance of every constraint string to $t^{(m)}$ is in the interval $[m'-4, m'+4]$.

Now translate the whole instance by $t^{(m)}$: replace each string $x$ by
$u = x \oplus t^{(m)}$, and note that for any center $y$ we have
$d_H\!\big(y, x\big) = d_H\!\big(y\oplus t^{(m)},\, u\big)$, preserving the feasibility of the instance. In the translated instance, the weight of each string equals its former distance to $t^{(m)}$: base strings now have weight exactly $m'$ and constraint strings' weight is between $m'-4$ and $m'+4$. In particular, every string has Hamming weight at most $m'+4$.
\end{proof}

For $\tau > \frac12$, we get NP-hardness by a simple reduction from $(1,\frac{1}{2})$-\textsc{center} where we append the input strings by zeros.

\begin{corollary}
\label{corollary:1-del-hard-complete}
$(1,\tau)$-\textsc{center} with complete strings is NP-complete for any $\tau \in [\frac12, 1)$.
\end{corollary}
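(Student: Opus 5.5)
The plan is a padding reduction from $(1,\frac12)$-\textsc{center} with complete strings, which is NP-hard by \cref{theorem:half-delegation-hard}. Membership in NP is immediate, as noted at the start of the section on the complexity of $(k,\tau)$-\textsc{center}. For $\tau=\frac12$ there is nothing to prove, so fix $\tau\in(\frac12,1)$.

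Take an instance of $(1,\frac12)$-\textsc{center} with complete strings of length $m=2m'$. Its covering radius is $\lfloor m/2\rfloor = m'$. Choose a padding length $p\ge 0$ and append $0^p$ to every input string, giving strings of length $L=m+p$. The new radius is $\lfloor(1-\tau)L\rfloor$, and I want this to equal exactly $m'$. So I need an integer $p$ with
\[
m' \le (1-\tau)(2m'+p) < m'+1 .
\]
At $p=0$ the middle quantity is $2(1-\tau)m' \le m'$. Each increment of $p$ raises it by $1-\tau<1$, so it cannot jump over a half-open interval of length $1$. Hence some integer $p$ lands in the interval, namely the smallest $p$ with $(1-\tau)(2m'+p)\ge m'$. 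This gives $p = \bigl\lceil m'\frac{2\tau-1}{1-\tau}\bigr\rceil$, which is $O(m')$ for constant $\tau$. The reduction is therefore polynomial, and it is computable since $\tau$ is a fixed rational.

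For correctness, consider any center $y'=(y,z)$ with $y\in\{0,1\}^m$ and $z\in\{0,1\}^p$. For each padded string $(x,0^p)$ we have $d_H(y',(x,0^p)) = d_H(y,x)+|z|$, where $|z|$ is the Hamming weight of $z$.
\begin{itemize}
\item[-] If $y$ covers all original strings within radius $m'$, then $(y,0^p)$ covers all padded strings within the same radius $m'$.
\item[-] Conversely, if $(y,z)$ covers all padded strings, then $d_H(y,x)\le m'-|z|\le m'$ for every $x$, so $y$ covers the original instance.
\end{itemize}

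The only delicate step is the rounding, that is, making the floor of $(1-\tau)L$ hit $m'$ exactly. The monotone-increment argument above handles it.

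The weight bound of \cref{theorem:half-delegation-hard} is not needed here, but padding by zeros preserves it: every padded string still has Hamming weight at most $m'+4$.
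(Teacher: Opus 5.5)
Your proposal is correct and follows the same route as the paper: pad every string with zeros so that the covering radius stays $m'$, then use $d_H\big((y,z),(x,0^p)\big)=d_H(y,x)+|z|$ for both directions. Your padded length $2m'+p=\lceil m'/(1-\tau)\rceil$ is the right choice. The paper's $\hat m=\lfloor m/(2(1-\tau))\rfloor$ can shrink the radius to $m'-1$ (e.g.\ $\tau=\frac35$, $m=6$ gives $\hat m=7$ and radius $\lfloor 2.8\rfloor=2$), so your check that $\lfloor(1-\tau)L\rfloor=m'$ exactly is what makes the reduction sound.
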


\begin{proof}
Let $\tau > \frac12$ be fixed. Given an instance $\mathcal{I}$ of $(1,\frac{1}{2})$-\textsc{center} with $n$ complete strings of length $m$, we construct an instance $\hat{\mathcal{I}}$ of $(1,\tau)$-\textsc{center} with $n$ complete strings of length $\hat{m}$, where $\hat{m} = \lfloor \frac{m}{2(1-\tau)} \rfloor$.

The first $m$ positions of each string are preserved, and the additional $\hat{m} - m$ positions are all zero. That is, given the original string $x^{(i)} \in \{0,1\}^m$, $i \in [n]$, the corresponding string $\hat{x}^{(i)} \in \{0,1\}^{\hat{m}}$ in $\hat{\mathcal{I}}$ is defined as $\hat{x}^{(i)}_j = x^{(i)}_j$ for $j \leq m$ and $\hat{x}^{(i)}_j = 0$ for $j > m.$

If there is a center $\hat{y}$ covering all strings of length $\hat{m}$, consider the center $y=(\hat{y}_1,\dots,\hat{y}_m)$ obtained by projecting $\hat{y}$ onto the first $m$ coordinates. Any string agrees with $\hat{y}$ on at least $\lceil \tau \hat{m} \rceil$ positions, hence they agree on at least $\lceil \tau \hat{m} \rceil - (\hat{m}-m)$ out of the first $m$ positions. Because $\lceil \tau \hat{m} \rceil - (\hat{m}-m) = m - \lfloor (1-\tau) \hat{m} \rfloor \geq \frac{m}{2}$, where the inequality follows from the definition of $\hat{m}$. Hence, $y$ covers all strings in $\mathcal{I}$.

Now assume that it is not possible to cover all strings in $\hat{\mathcal{I}}$ with a single center. To prove that the same holds for $\mathcal{I}$, let $y$ be a center covering all strings in $\mathcal{I}$ and let $\hat{y}$ be the center obtained from $y$ by appending $\hat{m} - m$ zeros to $y$. Since for any string $\hat{x}^{(i)}$, $y$ agrees with $x^{(i)}$ on at least $\lceil \frac{m}{2} \rceil$ positions, $\hat{y}$ agrees with $\hat{x}^{(i)}$ on at least $\lceil \frac{m}{2} \rceil +(\hat{m}-m) = \hat{m} - \lfloor \frac{m}{2} \rfloor \geq \tau \hat{m}$ positions, where the inequality follows from the definition of $\hat{m}$. Hence, $\hat{y}$ covers all strings in $\hat{\mathcal{I}}$.
\end{proof}

Next, we consider $\tau < \frac12$ and show that \textsc{max}-$(1,\tau)$-\textsc{center} with complete strings can be solved exactly in polynomial time. The argument is probabilistic: a center selected uniformly at random covers any fixed string with probability exponentially close to one, so as long as the strings are long enough, the expected number of uncovered strings is smaller than one, and a center covering all of them can be found by the method of conditional expectations; short instances are solved by exhaustive search. The result also settles $(1,\tau)$-\textsc{center} with complete strings, and carries over to Close to Most Strings.

\begin{lemma}
    Let $\tau = \frac12-\epsilon\;$ for some $\epsilon \in (0, \frac12)$ and denote $\gamma = \min\{\epsilon^2, 1-H(\frac14) \approx 0.18\}$. For any instance of \textsc{max}-$(1,\tau)$-\textsc{center} with $n$ strings of length $m$ satisfying $m_i \geq \frac{\log_2 n}{\gamma}$ for all $i \in [n],$ there exists a center covering all strings and we can compute it in polynomial time.
\label{lemma:cover-all}
\end{lemma}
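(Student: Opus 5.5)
Pick the center $y$ uniformly at random from $\{0,1\}^m$. We will show that the expected number of uncovered strings is less than one. Then derandomize by the method of conditional expectations.

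Fix a string $x_i$ with $m_i$ revealed positions. On those positions $y$ is uniform, so the number of disagreements $d_H(x_i,y)$ is distributed as $\mathrm{Bin}(m_i,\frac12)$. The string is uncovered iff $d_H(x_i,y) > (1-\tau)m_i = (\frac12+\epsilon)m_i$, i.e.\ iff fewer than $(\frac12-\epsilon)m_i$ agreements occur. By symmetry of the binomial, this probability is at most $2^{-m_i}\sum_{j\le(\frac12-\epsilon)m_i}\binom{m_i}{j}$. Two regimes give the two terms in $\gamma$:
\begin{itemize}
\item If $\epsilon\ge\frac14$, then $\frac12-\epsilon\le\frac14$, and the entropy estimate gives probability at most $2^{-m_i(1-H(\frac12-\epsilon))}\le 2^{-m_i(1-H(1/4))}$, using monotonicity of $H$ on $[0,\frac12]$.
\item If $\epsilon<\frac14$, use $1-H(\frac12-\epsilon)\ge \frac{2}{\ln 2}\epsilon^2\ge\epsilon^2$, the standard quadratic lower bound, equivalently a Chernoff--Hoeffding bound $e^{-2\epsilon^2 m_i}\le 2^{-\epsilon^2 m_i}$.
\end{itemize}
In either case the probability is at most $2^{-\gamma m_i}$. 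With $m_i\ge \log_2 n/\gamma$ this is at most $1/n$. To get a strict inequality, note that the agreement count is an integer and the threshold is strict, or simply use Hoeffding's $e^{-2\epsilon^2m_i}<2^{-\epsilon^2 m_i}$, which is strict for $m_i>0$. (If needed, replace $\le 1/n$ by $<1/n$ via the strictness in the Chernoff bound, or observe that the entropy bound is strict for $m_i\ge1$.) Summing over the $n$ strings gives $\mathbb{E}[\#\text{uncovered}]<1$. Hence some center covers all strings.

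For the algorithm, fix the coordinates of $y$ one by one. At each step choose the bit that does not increase the conditional expectation $\sum_i \Pr[x_i \text{ uncovered}\mid \text{bits fixed so far}]$. Each conditional probability is exactly computable in polynomial time: given the fixed prefix, the current number of disagreements on revealed fixed coordinates is known, and the remaining revealed coordinates contribute a $\mathrm{Bin}(r,\frac12)$ count. Its tail is a sum of at most $m+1$ binomial coefficients with polynomially many bits. The expectation stays below $1$ throughout. At the end it is an integer, so it equals $0$ and the resulting $y$ covers every string.

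The main obstacle is the tail bound step: getting exactly the constant $\gamma=\min\{\epsilon^2,1-H(\frac14)\}$ and a strict inequality. My first choice would be the clean Hoeffding bound $\Pr[\text{Bin}(m_i,\frac12)\le(\frac12-\epsilon)m_i]\le e^{-2\epsilon^2m_i}$, since $e^{-2\epsilon^2}<2^{-\epsilon^2}$. This alone already handles every $\epsilon$, because $\gamma\le\epsilon^2$, with the entropy form kept only as the paper's alternative route.
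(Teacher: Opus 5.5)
Your proof is correct and follows the paper's argument: a uniformly random center leaves fewer than one string uncovered in expectation, and the method of conditional expectations, with exactly computable binomial tails as in the paper's derandomization observation, makes this a deterministic polynomial-time algorithm. The only difference is the tail estimate. The paper bounds the failure probability by $2^{m_i(H(\tau)-1)}$ and then splits into the cases $\epsilon\le\frac14$ and $\epsilon>\frac14$ to get $1-H(\tau)\ge\gamma$. Your Hoeffding bound $e^{-2\epsilon^2 m_i}<2^{-\epsilon^2 m_i}\le 2^{-\gamma m_i}$ does this in one step and shows that the $1-H(\frac14)$ term in $\gamma$ is not needed.
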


\begin{observation}
\label{observation:random-drep}
    Let $\tau \in (0, \frac12),$ and let $p_i$ be the probability that a center selected uniformly at random covers the string $x_i$. Then,
    $$
    p_i > 1 - 2^{m_i(H(\tau) - 1)}.
    $$
\end{observation}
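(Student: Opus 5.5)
The statement to prove is \cref{observation:random-drep}. The plan is to compute the probability of the complementary event and bound it by a Hamming ball volume. Fix a string $x_i$ with $m_i$ revealed coordinates and draw a center $y$ uniformly from $\{0,1\}^m$. The coordinates of $y$ outside the revealed set of $x_i$ do not influence $d_H(x_i,y)$. Restricted to the revealed coordinates, $y$ is uniform on $\{0,1\}^{m_i}$. The distance $d_H(x_i,y)$ therefore equals the Hamming distance between $y$ restricted to those coordinates and the revealed part of $x_i$.

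The string $x_i$ is \emph{not} covered exactly when $d_H(x_i,y) > (1-\tau)m_i$, i.e., when the number of agreements is strictly less than $\tau m_i$. Equivalently, the complement of $y$ (on the revealed coordinates) lies within distance strictly less than $\tau m_i$ of the revealed part of $x_i$. Since complementation is a bijection preserving uniformity, the failure probability is
\[
1-p_i \;=\; 2^{-m_i}\,\bigl|\{z\in\{0,1\}^{m_i} : d_H(z,x_i) < \tau m_i\}\bigr| \;\le\; 2^{-m_i}\sum_{j=0}^{\lceil \tau m_i\rceil -1}\binom{m_i}{j}.
\]

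Next I would apply the entropy estimate $\sum_{j\le \alpha m}\binom{m}{j}\le 2^{mH(\alpha)}$ from the preliminaries, with $\alpha=\tau<\frac12$. All indices in the sum are at most $\tau m_i$, so the sum is bounded by $2^{m_iH(\tau)}$. This gives $1-p_i\le 2^{m_i(H(\tau)-1)}$.

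The only delicate point, and the main obstacle, is getting the strict inequality claimed. I would obtain it from the strictness of the ball: the sum runs over $j<\tau m_i$ only. If $\tau m_i$ is not an integer, the sum is the same as $\sum_{j\le \lfloor\tau m_i\rfloor}$. To get strictness there I would use the standard proof of the entropy bound:
\[
1=(\tau+(1-\tau))^{m_i}\ge\sum_{j\le\tau m_i}\binom{m_i}{j}\tau^j(1-\tau)^{m_i-j}\ge 2^{-m_iH(\tau)}\sum_{j\le \tau m_i}\binom{m_i}{j}.
\]
The first inequality is strict whenever some term with $j>\tau m_i$ is omitted, which always happens since $\tau<1$ means $j=m_i$ is excluded and that term is positive. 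This yields $\sum_{j\le\tau m_i}\binom{m_i}{j}<2^{m_iH(\tau)}$, hence $p_i>1-2^{m_i(H(\tau)-1)}$. The case $m_i=0$ would be handled separately: the string is trivially covered, so $p_i=1>0$.
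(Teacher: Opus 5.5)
Your proof is correct and follows the paper's route: you write $1-p_i$ as the binomial tail $2^{-m_i}\sum_{j=0}^{\lceil\tau m_i\rceil-1}\binom{m_i}{j}$ and bound it with the entropy estimate. The only difference is where strictness comes from. The paper applies the estimate at $\alpha=(\lceil\tau m_i\rceil-1)/m_i<\tau$ and then uses that $H$ is strictly increasing on $[0,\frac12]$, whereas you apply it at $\alpha=\tau$ and get strictness from the omitted positive term $j=m_i$ in the binomial expansion (and you also treat $m_i=0$ explicitly, which the paper's chain tacitly excludes).
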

\begin{proof}
    \begin{align*}
        p_i
        &= \mathrm{Pr}[\text{$y$ covers $x_i$}] \\
        &= 1 - \sum_{j=0}^{\lceil \tau m_i\rceil - 1}{\frac{\binom{m_i}{j}}{2^{m_i}}}  \\
        &\geq 1 - \frac{1}{2^{m_i}} 2^{m_iH(\frac{\lceil \tau m_i\rceil - 1}{m_i})} \\
        &> 1 - \frac{1}{2^{m_i}}2^{m_iH(\tau)} \\
        &= 1 - 2^{m_i(H(\tau) - 1)}.
    \end{align*}
    The first inequality follows from the standard binary entropy bound on the sum of the first binomial coefficients, and the second inequality follows from $\lceil x \rceil < x + 1$ for any $x$.
\end{proof}

\begin{observation}
\label{observation:derandomized-drep}
For any instance of \textsc{max}-$(1,\tau)$-\textsc{center} with $n$ strings of
length $m$, let $X^+$ be the random variable denoting the number of strings
covered by a center selected uniformly at random. Then there is a center $y$
covering at least $\lceil \mathbb{E}[X^+] \rceil$ strings, and a deterministic
polynomial-time algorithm that computes $y$.
\end{observation}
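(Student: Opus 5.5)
The plan is to use the method of conditional expectations, fixing the coordinates of the center one at a time. Existence is immediate: $X^+$ takes integer values, so some outcome $y$ satisfies $X^+(y) \geq \mathbb{E}[X^+]$, and hence $X^+(y) \geq \lceil \mathbb{E}[X^+] \rceil$. The real work is making this deterministic. By linearity, $X^+ = \sum_{i=1}^n \mathbf{1}[y \text{ covers } x_i]$, so $\mathbb{E}[X^+] = \sum_i p_i$. I will process the coordinates $j=1,\dots,m$ in order. At step $j$ the prefix $y_1,\dots,y_{j-1}$ is already fixed, and I choose $y_j \in \{0,1\}$ so as to maximize the conditional expectation $\mathbb{E}[X^+ \mid y_1,\dots,y_j]$, with the remaining coordinates still uniformly random.

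The key step is computing these conditional expectations exactly in polynomial time. Fix a string $x_i$ and a prefix $y_1,\dots,y_j$. Let $a$ be the number of revealed coordinates of $x_i$ among the first $j$ on which the prefix agrees with $x_i$. Let $b$ be the number of revealed coordinates of $x_i$ among positions $j+1,\dots,m$. Each of these $b$ positions agrees independently with probability $\frac12$. The string is covered iff the total agreement is at least $\lceil \tau m_i \rceil$, equivalently iff $d_H \le \lfloor (1-\tau) m_i \rfloor$. The conditional covering probability is therefore
\[
\Pr[\text{covered} \mid y_1,\dots,y_j] = 2^{-b}\sum_{t \geq \lceil \tau m_i\rceil - a}^{b} \binom{b}{t}.
\]
This is a rational number with polynomially many bits. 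It can be computed in time polynomial in $m$ using exact integer arithmetic, since the binomial coefficients have $O(m)$ bits. Summing over $i$ gives the conditional expectation in polynomial time.

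Correctness follows from the averaging identity
\[
\mathbb{E}[X^+\mid y_{<j}] = \tfrac12\,\mathbb{E}[X^+\mid y_{<j}, y_j=0] + \tfrac12\,\mathbb{E}[X^+\mid y_{<j}, y_j=1].
\]
By this identity, the maximizing choice of $y_j$ never decreases the conditional expectation. After $m$ steps the center is fully fixed, and the number of strings it covers is an integer at least $\mathbb{E}[X^+]$, hence at least $\lceil \mathbb{E}[X^+] \rceil$. The total cost is $2m$ evaluations of $n$ such sums, which is polynomial.

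The only step needing care is the exact computation. One must use exact rationals rather than floating point, so that the comparison between the two choices is reliable. The argument also has to handle missing entries: positions where $x_{i,j}=\star$ contribute neither to $a$ nor to $b$. Neither point is a real obstacle.
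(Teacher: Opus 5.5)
Your proposal is correct and follows the paper's proof: the method of conditional expectations over the coordinates, with each string's conditional covering probability computed exactly as a binomial tail over its remaining revealed positions. The paper writes the complementary lower tail, $n - \sum_i 2^{-m_{ij}}\sum_{k=0}^{\Delta(i)}\binom{m_{ij}}{k}$, and your remarks on integrality (which yields the ceiling) and on exact arithmetic spell out details the paper leaves implicit.
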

\begin{proof}
    We can find the center $y$ using the method of conditional expectations. In short, we construct $y$ in $m$ steps, where in the $j$-th step, we select the $j$-th element $y_j$ that maximizes the expected value of $X^+$ conditional on the first $j$ elements being fixed. This can be done in polynomial time as
\begin{align*}
    \mathbb{E}[X^+ | y_{(1,\dots,j)}]
    &= \sum_{i \in [n]}{\mathrm{Pr}[\text{$y$ covers $x_i$} | y_{(1,\dots,j)}]} \\
    &= n - \sum_{i \in [n]}\frac{1}{2^{m_{ij}}}\sum_{k=0}^{\Delta(i)}{\binom{m_{ij}}{k}},
\end{align*}
where $m_{ij}$ is the number of revealed positions of $x_i$ among $\{j+1,\dots,m\}$, that is, the revealed positions not yet fixed, and $\Delta(i) = \lceil \tau m_i \rceil - 1 - \sum_{k=1}^{j}{[y_k=x_{ik}]}$ is the maximum number of positions from $\{j+1, \dots, m\}$ on which $y$ can agree with $x_i$ while still not covering $x_i$; for $y$ already covering $x_i$, $\Delta(i) < 0$ and the inner sum is empty, $\sum_{k=0}^{\Delta(i)}{\binom{m_{ij}}{k}} = 0$.
\end{proof}

\begin{proof}[Proof of \cref{lemma:cover-all}]
    Let $y \in \{0,1\}^m$ be a center selected uniformly at random, let $X^-=n-X^+$ be the random variable denoting the number of strings not covered by $y$, and let $M=\min_{i \in [n]}{m_i} \geq \frac{\log_2 n}{\gamma}$. We show that $\mathbb{E}[X^-] < 1$ below, the statement then follows from \cref{observation:derandomized-drep}.
    \begin{align}
        \mathbb{E}[X^-]
        \label{eq:all-1} &= \sum_{i \in [n]}{(1 - \mathrm{Pr}[\text{$y$ covers $x_i$}])} \\
        \label{eq:all-2}&< \sum_{i \in [n]}{2^{m_i(H(\tau)-1)}} \\
        \label{eq:all-3}&\leq n 2^{M(H(\tau)-1)} \\
        \label{eq:all-6}&\leq \frac{n}{2^{M\gamma}} \\
        \label{eq:all-7} &\leq 1.
    \end{align}
    Here, \eqref{eq:all-1} follows from linearity of expectation, \eqref{eq:all-2} follows from \cref{observation:random-drep}, and \eqref{eq:all-6} follows from the standard bound on the binary entropy function $H(\frac12-x) \leq 1-x^2$ for any $ x \in [0, \frac14]$ and $H(\frac12-x) < H(\frac14)$ for any $x \in (\frac14, \frac12).$
\end{proof}

\begin{corollary}
    For any $\tau \in (0, \frac12)$, there is a polynomial-time algorithm for \textsc{max}-$(1,\tau)$-\textsc{center} with complete strings.
\label{corollary:optimal-complete}
\end{corollary}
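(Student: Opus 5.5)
The plan is to split the input by string length. We use the threshold $L = \frac{\log_2 n}{\gamma}$ from \cref{lemma:cover-all}, with $\gamma$ defined there for $\epsilon = \frac12 - \tau$, which is a constant because $\tau$ is fixed. Since the strings are complete, every $m_i$ equals $m$, so exactly one of two cases holds.

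\emph{Case 1: $m \geq \frac{\log_2 n}{\gamma}$.} Here \cref{lemma:cover-all} applies directly. It says some center covers all $n$ strings, and it computes that center in polynomial time via the method of conditional expectations of \cref{observation:derandomized-drep}. Covering every string is clearly optimal for the maximization version, so we simply return this center.

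\emph{Case 2: $m < \frac{\log_2 n}{\gamma}$.} We search exhaustively over all $2^m$ centers. For each candidate $y \in \{0,1\}^m$ we compute $d_H(x_i,y)$ for all $i$ and count the strings with distance at most $\lfloor (1-\tau)m\rfloor$, then return the candidate with the largest count. The number of candidates is
\[
2^m < 2^{\log_2 n/\gamma} = n^{1/\gamma},
\]
which is a polynomial in $n$ because $\gamma$ depends only on $\tau$. Each candidate is evaluated in $O(nm)$ time, so the total running time is $O(n^{1+1/\gamma} m)$, which is polynomial in the input size.

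The only step that needs any care is checking that the exponent $1/\gamma$ is a constant. This holds because $\gamma = \min\{\epsilon^2, 1-H(\frac14)\} > 0$ is determined by $\tau$ alone and does not depend on the instance. The argument is not uniform in $\tau$: as $\tau \to \frac12$ we have $\epsilon \to 0$, so the degree of the polynomial grows without bound. That is acceptable here, since $\tau$ is treated as a fixed parameter. Combining the two cases gives a polynomial-time exact algorithm for \textsc{max}-$(1,\tau)$-\textsc{center} with complete strings, for every fixed $\tau \in (0,\frac12)$.
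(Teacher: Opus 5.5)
Your proposal is correct and matches the paper's proof: you split on whether $m \geq \frac{\log_2 n}{\gamma}$, apply \cref{lemma:cover-all} in the long case, and brute-force the $2^m < n^{1/\gamma}$ candidate centers in the short case. Your extra remarks, the explicit $O(n^{1+1/\gamma}m)$ running time and the observation that the polynomial's degree depends on $\tau$, are accurate additions that the paper leaves implicit.
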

\begin{proof}
    If $m < \frac{\log_2 n}{\gamma},$ a center covering all strings may not exist, but we can find the best center by trying all the $2^m < n^{1/\gamma}$ candidates. Otherwise, $m \geq \frac{\log_2 n}{\gamma},$ and we apply \cref{lemma:cover-all} to obtain an algorithm that computes the center covering all strings in polynomial time.
\end{proof}

\subsection{Multiple Centers}

Next, we aim to classify $(k,\tau)$-\textsc{center} with complete strings for $k \geq 2$. We reduce from the weight-restricted $(1,\frac12)$-\textsc{center} of \cref{theorem:half-delegation-hard} by adding $k-1$ outlier
strings that are far from every original string, and also far from each other. The latter property requires the outliers to form a code of large minimum distance, which we obtain from the standard construction in the following lemma. This construction is tight, in the sense that no code with $k$ codewords attains a larger ratio of minimum distance to its length.

\begin{lemma}
\label{lemma-code-k}
For any $k \geq 2,$  there is a code $\mathcal{C}_k=\{c_1, \dots, c_k\} \subset \{0,1\}^{m_k}$ of length $m_k=\binom{k}{\lfloor \frac{k}{2} \rfloor}$ and minimum distance $d_k=2\binom{k-2}{\lfloor \frac{k}{2} \rfloor - 1}=\frac{\lceil \frac{k}{2} \rceil}{2\lceil \frac{k}{2} \rceil - 1}m_k$, where all codewords have a Hamming weight $w_k \geq \frac{m_k}{2}$.
\end{lemma}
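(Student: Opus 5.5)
The plan is to use the standard ``column'' construction. Index the $m_k=\binom{k}{\lfloor k/2\rfloor}$ coordinates by the subsets $T\subseteq[k]$ of size $\lfloor k/2\rfloor$. Define codeword $c_a$, $a\in[k]$, to have a $0$ at coordinate $T$ if $a\in T$ and a $1$ otherwise. Equivalently, each column of the $k\times m_k$ matrix with rows $c_1,\dots,c_k$ is the complement of the indicator vector of $T$. Complementing keeps all pairwise distances and ensures the weight condition.

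Next I compute the distances. Fix $a\neq b$. Then $c_a$ and $c_b$ differ at coordinate $T$ exactly when $T$ contains precisely one of $a,b$. The number of such $T$ is $2\binom{k-2}{\lfloor k/2\rfloor-1}$: choose which of $a,b$ lies in $T$, then choose the remaining $\lfloor k/2\rfloor-1$ elements from the other $k-2$. Every pair is therefore at exactly this distance, so $d_k=2\binom{k-2}{\lfloor k/2\rfloor-1}$, and the code is equidistant. Since the distances are positive, the $k$ codewords are distinct.

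For the weights, $c_a$ has a $1$ at $T$ iff $a\notin T$. So $w_k=\binom{k-1}{\lfloor k/2\rfloor}$, which equals $\frac{\lceil k/2\rceil}{k}m_k$. Because $\lceil k/2\rceil\ge k/2$, this gives $w_k\ge m_k/2$.

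The only step needing care is the closed form $d_k=\frac{\lceil k/2\rceil}{2\lceil k/2\rceil-1}m_k$. Write $h=\lfloor k/2\rfloor$. Then
\[
\frac{d_k}{m_k}=\frac{2\binom{k-2}{h-1}}{\binom{k}{h}}=\frac{2h(k-h)}{k(k-1)}.
\]
I will split by parity:
\begin{itemize}
\item[-] For $k=2\ell$, we have $h=\ell$, and the ratio is $\frac{2\ell^2}{2\ell(2\ell-1)}=\frac{\ell}{2\ell-1}$, with $\ell=\lceil k/2\rceil$.
\item[-] For $k=2\ell-1$, we have $h=\ell-1$ and $k-h=\ell$, and the ratio is $\frac{2\ell(\ell-1)}{(2\ell-1)(2\ell-2)}=\frac{\ell}{2\ell-1}$, again with $\ell=\lceil k/2\rceil$.
\end{itemize}
In both cases this matches the claimed formula. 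Nothing in the argument is a genuine obstacle; the parity bookkeeping in this last ratio is the main place to be careful. The tightness remark in the text is not part of the lemma, though it follows from the Plotkin bound (\cref{theorem:plotkin}) if needed.
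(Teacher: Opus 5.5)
Your proof is correct and uses the same construction as the paper. You take as columns all length-$k$ vectors with $\lceil k/2\rceil$ ones (which is what your complemented $\lfloor k/2\rfloor$-subset indicators are), count pairwise distances and row weights, and your parity check of $d_k/m_k$ fills in a step the paper only asserts.

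One small correction to your closing aside, which is outside the lemma itself: \cref{theorem:plotkin} as stated only gives $d/m \le \frac{k}{2(k-1)}$. This is not tight for odd $k$ (e.g.\ $\frac34$ versus $\frac23$ for $k=3$), which is why the paper proves tightness via the refined double count in \cref{lemma:plotkin-tight}.
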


\begin{proof}
Let $S_k=\{s_1, \dots, s_{m_k}\} \subset \{0, 1\}^k$ be the set of all binary
strings of length $k$ with exactly $\lceil \frac{k}{2} \rceil$ ones. Let $\mathbf{S}$ be the matrix whose columns are the elements of $S_k$, and let
$\mathcal{C}_k$ consist of the rows of $\mathbf{S}$, so that $c_{i,j} = s_{j,i}$ for all $i \in [k]$ and $j \in [m_k]$, see \cref{table:code-C4} for an example. For any two distinct codewords $a \neq b \in \mathcal{C}_k,$ their Hamming distance
$$d_H(a, b) = |\{i \in [m_k]: a_i > b_i\}| + |\{i \in [m_k]: a_i < b_i\}|,$$
which is equal to $d_k = 2\binom{k-2}{\lfloor \frac{k}{2} \rfloor - 1}$.

\begin{table}
\centering
\begin{tabular}{c|cccccc}
      & $s_1$ & $s_2$ & $s_3$ & $s_4$ & $s_5$ & $s_6$ \\\hline
$c_1$ & 1 & 1 & 1 & 0 & 0 & 0 \\
$c_2$ & 1 & 0 & 0 & 1 & 1 & 0 \\
$c_3$ & 0 & 1 & 0 & 1 & 0 & 1 \\
$c_4$ & 0 & 0 & 1 & 0 & 1 & 1 \\
\end{tabular}
\caption{Example of the construction of $\mathcal{C}_4$.}
\label{table:code-C4}
\end{table}

Finally, all codewords of $\mathcal{C}_k$ have the same Hamming weight by symmetry, and since $\mathbf{S}$ contains $m_k \lceil \frac{k}{2} \rceil$ ones in total, this weight is $w_k = \frac{m_k}{k}\lceil \frac{k}{2} \rceil \geq \frac{m_k}{2}$.
\end{proof}

By the Plotkin bound (\cref{theorem:plotkin}), we obtain $\frac{d_k}{m_k} \leq \frac12 \frac{k}{k-1},$ leaving a small gap between the ratio $\frac{d_k}{m_k} = \frac{\lceil \frac{k}{2} \rceil}{2\lceil \frac{k}{2} \rceil - 1}$. However, a more careful proof of the Plotkin bound shows that our construction from \cref{lemma-code-k} is tight.

\begin{lemma}
\label{lemma:plotkin-tight}
For any code $\mathcal{C} = \{c_1, \dots, c_k\} \subset \{0, 1\}^m$ of length $m$ and minimum distance $d$, $\frac{d}{m} \leq \frac{\lceil \frac{k}{2} \rceil}{2\lceil \frac{k}{2} \rceil - 1}.$
\end{lemma}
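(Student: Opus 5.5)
We prove the bound by the standard Plotkin double-counting argument, sharpened by using integrality column by column. Let $k = |\mathcal{C}|$ and consider the sum of pairwise distances
$$\Sigma = \sum_{1 \le a < b \le k} d_H(c_a, c_b).$$
Every one of the $\binom{k}{2}$ pairs contributes at least $d$, so $\Sigma \geq \binom{k}{2} d$.

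Next we count $\Sigma$ column by column. Fix a coordinate $j \in [m]$ and let $t_j$ be the number of codewords with a $1$ in coordinate $j$. This coordinate contributes exactly $t_j(k - t_j)$ to $\Sigma$, because a pair disagrees there precisely when one codeword has a $1$ and the other a $0$. Since $t_j$ is an integer, $t_j(k-t_j)$ is at most $\lfloor \frac{k}{2} \rfloor \lceil \frac{k}{2} \rceil$. This integrality step is where we improve on the usual Plotkin bound, which uses the real bound $k^2/4$. Summing over coordinates gives
$$\binom{k}{2} d \;\leq\; \Sigma \;\leq\; m \left\lfloor \tfrac{k}{2} \right\rfloor \left\lceil \tfrac{k}{2} \right\rceil,$$
and therefore
$$\frac{d}{m} \;\leq\; \frac{2\lfloor \frac{k}{2} \rfloor \lceil \frac{k}{2} \rceil}{k(k-1)}.$$

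It remains to rewrite this in the stated form, splitting on the parity of $k$.

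For even $k = 2\ell$, the bound is $\frac{2\ell^2}{2\ell(2\ell-1)} = \frac{\ell}{2\ell-1}$, and $\lceil \frac{k}{2} \rceil = \ell$, so this is exactly the claim.

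For odd $k = 2\ell+1$, the bound is $\frac{2\ell(\ell+1)}{(2\ell+1)2\ell} = \frac{\ell+1}{2\ell+1}$. Here $\lceil \frac{k}{2} \rceil = \ell+1$, and $\frac{\ell+1}{2(\ell+1)-1} = \frac{\ell+1}{2\ell+1}$, so again this is exactly the claim.

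The only step that needs care is the per-column integrality bound; the rest is bookkeeping. If $k = 1$, the minimum distance is not defined, so we assume $k \geq 2$, consistent with \cref{lemma-code-k}. The bound matches the construction of \cref{lemma-code-k}, which confirms that the construction is tight.
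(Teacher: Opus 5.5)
Your proposal is correct and follows essentially the same route as the paper's proof: you double count the sum of pairwise distances, bound each coordinate's contribution by $\lfloor \frac{k}{2} \rfloor \lceil \frac{k}{2} \rceil$ using integrality, and simplify. Your explicit parity split and your remark excluding $k=1$ only make explicit what the paper's one-line simplification leaves implicit.
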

\begin{proof}
As in the proof of Plotkin bound, we double count the sum of Hamming distances between codewords, that is, the value $S = \sum_{1 \leq i < j \leq k}{d_H(c_i, c_j)}$. Summing by coordinates, 
$$S = \sum_{i \in [m]}{|\{j \in [k]: c_{j, i}=0\}| |\{j \in [k]: c_{j, i}=1\}|} \leq m \lfloor \frac{k}{2} \rfloor \lceil \frac{k}{2} \rceil.$$ Because $d_H(c_i, c_j) \geq d,$ we have $S \geq \binom{k}{2}d.$ Put together, $\binom{k}{2}d \leq m \lfloor \frac{k}{2} \rfloor \lceil \frac{k}{2} \rceil$, hence
$$\frac{d}{m} \leq \frac{\lfloor \frac{k}{2} \rfloor \lceil \frac{k}{2} \rceil}{\binom{k}{2}} = \frac{\lceil \frac{k}{2} \rceil}{2\lceil \frac{k}{2} \rceil - 1}.$$
\end{proof}

\begin{theorem}
\label{thm:code-outliers}
Let $k \geq 2$ and let $\kappa = \lceil \frac{k-1}{2} \rceil$.
Then $(k,\tau)$-\textsc{center} with complete strings is NP-hard for any
$\tau_k^\star < \tau < 1$, where $\tau_2^\star = \frac23$ and
$\tau_k^\star = \frac{5\kappa-2}{6\kappa-2}$ for $k \geq 3.$
\end{theorem}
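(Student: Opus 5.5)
We reduce from the weight-restricted $(1,\frac12)$-\textsc{center} of \cref{theorem:half-delegation-hard}. The idea is to add $k-1$ \emph{outlier} strings that are pairwise far apart and far from every original string. Then, by the triangle inequality, any $k$ centers covering everything must spend one center on each outlier. The single remaining center must cover all original strings, and that is exactly the $(1,\frac12)$ question.

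\textbf{Construction.} Take an instance of \cref{theorem:half-delegation-hard}: $n$ strings $x_i$ of length $m=2m'$, each with Hamming weight in $[m'-4,m'+4]$ (the translated instance). Set $\rho=1-\tau$. Choose a total length $L$ with $\lfloor \rho L\rfloor = m'$, so the covering radius is $r=m'$. Split the coordinates into three blocks:
\begin{itemize}
\item the original block $P$ of length $2m'$;
\item a code block $Q$ of length $q=t\,m_{k-1}$, consisting of $t$ repetitions of the code $\mathcal{C}_{k-1}$ from \cref{lemma-code-k};
\item a padding block $Z$ of length $L-2m'-q$.
\end{itemize}
The original strings are $x_i$ on $P$ and $0$ on $Q\cup Z$. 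The $j$-th outlier is $0$ on $P$, the codeword $c_j$ repeated $t$ times on $Q$, and $1$ on $Z$ (we will try this choice first). For $k=2$ there is no code block, and the single outlier is $0^{2m'}1^{L-2m'}$.

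\textbf{Correctness.} The ``yes'' direction is direct. Place one center at each outlier. Take a $(1,\frac12)$-solution $y$ on $P$ and extend it by zeros on $Q\cup Z$. Its distance to each $x_i$ is at most $m'=r$.

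For the ``no'' direction, we require two separation conditions:
\begin{itemize}
\item outliers are pairwise at distance $> 2r$, i.e.\ $t\,d_{k-1}>2m'$;
\item every outlier is at distance $>2r$ from every original.
\end{itemize}
The second distance is at least $(m'-4)+t\,w_{k-1}+(L-2m'-q)$.

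Under these conditions no center covers two outliers, and no center covers both an outlier and an original. So some center covers all originals within distance $m'$. Projecting that center onto $P$ can only decrease distances, since the originals are zero outside $P$. The projection is therefore a $(1,\frac12)$-solution of length $2m'$.

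\textbf{Parameter optimisation (main obstacle).} The key step is choosing $t$ and $L$ so that both conditions hold exactly when $\tau>\tau_k^\star$. Using \cref{lemma-code-k} with $k-1$ codewords gives $d_{k-1}/m_{k-1}=\kappa/(2\kappa-1)$. The first condition forces $q>2m'(2\kappa-1)/\kappa$. Substituting into the second gives a lower bound on $L/m'$, hence an upper bound on $\rho$.

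My naive choice (all ones on $Z$, and $w_{k-1}\approx m_{k-1}/2$) yields only $\tau>(4\kappa-1)/(5\kappa-1)$. This does not match the target $1-\tau_k^\star=\kappa/(6\kappa-2)$. So the assignment of outlier values on $P$ and $Z$ must be tuned. Options I would try, in this order:
\begin{itemize}
\item give outliers ones on $P$, or a complemented pattern, to exploit the weight bound $m'+4$;
\item use the exact weight $w_{k-1}=\frac{\kappa}{k-1}m_{k-1}$;
\item let padding coordinates of the outliers split between $0$ and $1$ so that $Z$ contributes to the outlier--outlier distance as well.
\end{itemize}
The goal is to balance the two inequalities so that the extremal ratio becomes $(6\kappa-2)/\kappa$. \cref{lemma:plotkin-tight} indicates the code part cannot be improved, so all of the slack must come from the $P/Z$ design.

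Finally, the additive $\pm4$ weight slack and the floors in $r$ and $L$ are absorbed by taking $m'$ large (pad the CSP instance with dummy variables). The strictness of $\tau>\tau_k^\star$ gives room for this. For $k=2$ the same computation with $q=0$ gives $L>3m'$, i.e.\ $\tau>\frac23$.
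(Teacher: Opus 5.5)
Your reduction is essentially the paper's, and both directions of correctness are fine as written. The paper takes outliers $1^{m-rm_{k-1}}\,|\,(c_t)^r$, with ones on $P\cup Z$, and uses the stated weight bound $m'+4$. You put zeros on $P$ and use the lower bound $m'-4$; that bound holds for the translated instance by the proof of \cref{theorem:half-delegation-hard}, though not by its statement.

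The gap is in your parameter analysis, which leads you to abandon a construction that already works. You dropped the constraint that the padding block has nonnegative length, $L\ge 2m'+q$, and this is the binding constraint. The condition $t\,d_{k-1}>2m'$ forces $q=t\,m_{k-1}>(4-\frac{2}{\kappa})m'$. Hence $L>(6-\frac{2}{\kappa})m'$, which up to lower-order terms is exactly $\frac{1}{1-\tau}>6-\frac{2}{\kappa}$, i.e.\ $\tau>\tau_k^\star$.

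Conversely, suppose $\tau>\tau_k^\star$ and take $t=\lceil(2m'+1)/d_{k-1}\rceil$, $L=\lceil m'/(1-\tau)\rceil$ and $m'$ large. The paper's chain of inequalities then gives $|Z|=L-2m'-t\,m_{k-1}\ge 5$. With that, the outlier--original distance is at least
$(m'-4)+t\,w_{k-1}+5\ge (m'-4)+\frac{t\,d_{k-1}}{2}+5\ge 2m'+\frac32$,
using $w_{k-1}\ge\frac{m_{k-1}}{2}\ge\frac{d_{k-1}}{2}$. So your first choice of outliers already proves the theorem. None of the proposed modifications of $P$ or $Z$ is needed, and the remark about all the slack having to come from the $P/Z$ design is misguided.

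The threshold $(4\kappa-1)/(5\kappa-1)$ you derived should have been a warning sign. Since $(5\kappa-2)(5\kappa-1)-(4\kappa-1)(6\kappa-2)=\kappa(\kappa-1)\ge 0$, it is at most $\tau_k^\star$, and strictly smaller for $\kappa\ge 2$. Hardness above it would therefore strengthen the theorem rather than fall short of it, so ``yields only'' has the direction reversed. That bound came from enforcing the outlier--original separation while ignoring $L\ge 2m'+q$.

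As submitted, the proposal makes an incorrect claim about its own construction and ends with a list of options to try rather than a verified parameter choice, so it is not yet a proof. Adding the constraint $|Z|\ge 5$ and redoing the count closes the gap.
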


\begin{proof}
Assume first that $k \geq 3$; $k=2$ is treated at the end of the proof.

We reduce from the weight-restricted $(1,\tfrac12)$-\textsc{center} with
complete strings of \cref{theorem:half-delegation-hard}. Given an
instance with $n$ strings $x_1,\dots,x_n \in \{0,1\}^{2m'}$ of length $2m'$,
each of Hamming weight at most $m'+4$, we construct an instance of $(k,\tau)$-\textsc{center} with $n + k - 1$ strings of length $m = \big\lceil m'/(1-\tau) \big\rceil$: the \emph{real} strings
$\tilde{x}_i = x_i \,|\, 0^{m-2m'}$ for $i \in [n]$, and the \emph{outlier}
strings $z_1,\dots,z_{k-1}$ defined further below.  The length $m$ is chosen so that  the maximum distance at threshold $\tau$ is $$d_\text{max} = \lfloor (1-\tau)m \rfloor = m'.$$ 

Since $\tau > \tau_k^\star$ is equivalent to $\frac{1}{1-\tau} > 6 - \frac{2}{\kappa}$, the value
$\delta = \frac{1}{1-\tau} - (6 - \frac{2}{\kappa})$ is a positive constant. Denote $m_{k-1} = \binom{k-1}{\lfloor (k-1)/2 \rfloor}$ and assume that $m' \geq (m_{k-1} + 5)/\delta$, as otherwise we can decide the instance by brute force.

Consider the code $\mathcal{C}_{k-1} = \{c_1,\dots,c_{k-1}\} \subset \{0,1\}^{m_{k-1}}$ of \cref{lemma-code-k}, with minimum distance $d_{k-1} = 2\binom{k-3}{\lfloor (k-1)/2 \rfloor - 1}$ satisfying
$\frac{d_{k-1}}{m_{k-1}} = \frac{\kappa}{2\kappa-1}$, and every codeword having Hamming weight 
$w_{k-1} \geq \frac{m_{k-1}}{2} \geq \frac{d_{k-1}}{2}$.

For each $t \in [k-1],$ we define the outlier
$$
z_t = 1^{\,m - r m_{k-1}} \,\big|\, (c_t)^r,
$$
in other words, $z_t$ contains all ones followed by $r$ copies of the
codeword $c_t$ on the last $r m_{k-1}$ positions. Here, $r = \big\lceil (2m'+1)/d_{k-1} \big\rceil$ is chosen so that no single center can cover two outliers simultaneously, as for any two outliers $z_t \neq z_{t'}$,
$$
d_H(z_t, z_{t'}) = r \, d_H(c_t, c_{t'})
\geq r d_{k-1} \geq 2m'+1 > 2 d_\text{max}.
$$

We now verify that the first part of each outlier contains enough ones,
namely that $m - r m_{k-1} \geq 2m'+5$, or equivalently, that
$m - 2m' \geq r m_{k-1} + 5$. Indeed,
\begin{align}
r m_{k-1} + 5
\label{eq:tail-1} &\leq \Big(\frac{2m'}{d_{k-1}} + 1\Big)m_{k-1} + 5 \\
\label{eq:tail-2} &= \Big(4 - \frac{2}{\kappa}\Big)m' + m_{k-1} + 5 \\
\label{eq:tail-3} &\leq \Big(4 - \frac{2}{\kappa}\Big)m' + \delta m' \\
\label{eq:tail-4} &= \frac{m'}{1-\tau} - 2m' \\
\label{eq:tail-5} &\leq m - 2m'.
\end{align}
Here, \eqref{eq:tail-1} follows from 
$r = \big\lceil \frac{2m'+1}{d_{k-1}} \big\rceil \leq \frac{2m'+d_{k-1}}{d_{k-1}}$,
\eqref{eq:tail-2} from $\frac{m_{k-1}}{d_{k-1}} = \frac{2\kappa-1}{\kappa}$,
and \eqref{eq:tail-3} from the assumption $m' \geq (m_{k-1}+5)/\delta$. Finally, \eqref{eq:tail-4} is from the definition of $\delta$ and \eqref{eq:tail-5} from the definition of $m$.

Observe that the weight of $z_t$ is
\begin{align}
m - r m_{k-1} + r w_{k-1}
\label{eq:wz-1} &\geq 2m' + 5 + \frac{r d_{k-1}}{2} \\
\label{eq:wz-2} &\geq 2m' + 5 + \frac{2m'+1}{2} \\
\label{eq:wz-3} &> 3m' + 5
\end{align}

Here, \eqref{eq:wz-1} follows from \eqref{eq:tail-5} and $w_{k-1} \geq \frac{d_{k-1}}{2}$, and \eqref{eq:wz-2} follows from the definition of $r$.
Combined with the fact that the weight of each real string $\tilde{x}_i$ is bounded by $m' + 4,$ we obtain $d_H(z_t, \tilde{x}_i) \geq 2m'+1,$ implying that no single center can cover both an outlier $z_t$ and a real string $\tilde{x}_i$.

Let $y \in \{0,1\}^{2m'}$ be a center covering all strings in the original instance. Then $y \,|\, 0^{m-2m'}$ covers all real strings, and the $k-1$ centers $z_1,\dots,z_{k-1}$ cover the outliers.

For the opposite direction, assume that $k$ centers cover all strings in the reduced
instance. Every outlier is covered by some center, no center covers two outliers, and no center covers both an outlier and a real string. This means that $k-1$ distinct centers are consumed by the outliers, and the single remaining center covers all real strings, and its restriction to the first $2m'$ positions is a
center covering all strings in the original instance.

For the case $k=2$, the idea is to use a single outlier consisting of all ones.
Since $\tau > \tau_2^\star = \frac23$
is equivalent to $\frac{1}{1-\tau} > 3$, the value
$\delta = \frac{1}{1-\tau} - 3$ is a positive constant; assume
$m' \geq 5/\delta$, as otherwise we can brute force. Set
$m = \big\lceil m'/(1-\tau) \big\rceil$, so that
$d_\text{max} = \lfloor (1-\tau)m \rfloor = m'$ exactly as before, and let the
reduced instance consist of the real strings
$\tilde{x}_i = x_i \,|\, 0^{m-2m'}$ for $i \in [n]$ and the single outlier
$z = 1^m$. The weight of $z$ is
$$
m \geq \frac{m'}{1-\tau} = 3m' + \delta m' \geq m' + 5,
$$
so $d_H(z, \tilde{x}_i) \geq (3m'+5) - (m'+4) = 2m'+1 > 2d_\text{max}$ for every real
string $\tilde{x}_i$, hence $z$ needs its own center.
\end{proof}

\section{Discussion and Future Work}

The main question left open from our results is the (in)tractability of $(k,\tau)$-\textsc{center} with complete strings in the open region $\tau \in (\frac12, \tau_k^\star]$ for $k \geq 2$; resolving the complexity of the problem in this region appears to be technically quite challenging. Furthermore, our impossibility results for the problem motivate the study of \emph{approximations} to the \textsc{max}-$(k,\tau)$-\textsc{center} variant of the problem, starting from the simplest case of $k=1$ without missing entries. To this end, it is fairly easy to see that for $\tau \leq 1/2$, either any center or its complement will attract any voter, and hence the best of the two options will be a $2$-approximation. Interestingly, any improvement over this bound, or any constant approximation for $\tau > 1/2$ (even $\tau = 1/2 + \varepsilon$ for any $\varepsilon > 0$) seems to be elusive. As we show in the Appendix, a constant approximation is achievable for  ``sparse'' instances, i.e., when the number of non-missing entries per string is a constant. This is achieved via leveraging a similar dichotomy theorem to those we used for our main NP-completeness results, due to \citet{khanna_approximability_2001}. The same dichotomy also establishes the APX-hardness of \textsc{max}-$(k,\tau)$-\textsc{center}. 

Another interesting question regards parameterized results for the problem. While these were not the focus of our main results, in the Appendix we remark that when either $m$ or $n$ is constant, the problem is solvable in polynomial time; in the former case this is via a straightforward enumeration, whereas in the latter case this is achieved via either an appropriate ILP formulation or via a dynamic programming approach. In terms of more nuanced parameters, our NP-hardness results shown in \cref{table:general-classification} hold even when the number of missing entries of each string is a small constant. Imposing further restrictions on the structure of the input strings and obtaining positive results is an interesting direction for future work; to this end, in the Appendix we show how an FPT algorithm in an appropriate parameter called the \emph{vertex cover number}, due to \cite{friedrich_binary_2025} for the binary $k$-center problem, can be adapted for our problem as well.  

Finally, we remark that investigating the computational complexity of \textsc{max}-$(k,\tau)$-\textsc{center} roughly amounts to answering the following question: ``For any set of centers that has been constructed in polynomial time, how many strings can this set cover, compared to the maximum possible number of voters that any set can cover?'' A perhaps even more fundamental question is of an information-theoretic nature: ``Given a set of binary strings, how many of them can any set of centers \emph{feasibly} cover?''. In the worst case, it is not difficult to see that for large enough thresholds, any constant-size such set can only cover a constant number of strings, see the Appendix for a proof. An interesting investigation would be to pose this question in a setting where the input strings are not chosen worst-case, but are drawn from some distribution.

\section*{Acknowledgments}
Jakub Dargaj was supported by the UK Engineering and Physical Sciences Research Council (EPSRC) DTA Scholarship EP/W524384/1.
Aris Filos-Ratsikas was supported by the EPSRC grant EP/Y003624/1.
Paul W. Goldberg was supported by the EPSRC grant EP/X038548/1.

\bibliographystyle{plainnat}
\bibliography{references}

\clearpage
\appendix

\section{Instances with Few Coverable Strings}

For any $\tau > 1/2$, we describe a class instances of \textsc{max}-$(k,\tau)$-\textsc{center} with the number of strings exponential in the length, in which the number of voters covered by any center is bounded by a constant that depends on $\tau$ but is independent of the input size. Our tools are the Gilbert-Varshamov bound (\cref{theorem:g-v}) and the Johnson bound (\cref{theorem:johnson}) from coding theory.

\begin{lemma}
\label{lemma:no-cover-1/2}
Let $\tau \in (\frac12, 1)$ and denote $\tau = \frac{1}{2}(1+\epsilon)$ for some
$\epsilon \in (0, 1)$. For any $m \geq \frac{8}{\epsilon^2}$, there is an instance
of \textsc{max}-$(1,\tau)$-\textsc{center} with complete strings, consisting of
$n = 2^{\Theta(m)}$ strings of length $m$, such that no center covers more
than $\frac{2}{\epsilon^2}$ strings.
\end{lemma}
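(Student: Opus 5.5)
The strings of the instance will be the codewords of a code with large minimum distance, obtained from the Gilbert--Varshamov bound (\cref{theorem:g-v}). The Johnson bound (\cref{theorem:johnson}) then caps how many codewords any single center can cover.

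A center $y$ covers a string $x$ if and only if $d_H(x,y) \le \lfloor (1-\tau)m \rfloor \le \frac12(1-\epsilon)m$. So the strings covered by $y$ are exactly the codewords in $B(y, d_\mathrm{max})$ with $d_\mathrm{max} = \frac12(1-\gamma)m$ and $\gamma = \epsilon$. Enlarging the radius to this real value only helps the upper bound.

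The code is chosen as follows. Set $\delta = \epsilon^2/2$ and target minimum distance $d = \lceil \frac12(1-\delta)m \rceil$. Since $d \le m/2$, \cref{theorem:g-v} gives a code $\mathcal{C}$ of length $m$ with minimum distance at least $d$ and
\[
|\mathcal{C}| \ge 2^{m(1-H((d-1)/m))}.
\]
Because $(d-1)/m \le \frac12(1-\delta) < \frac12$ is bounded away from $\frac12$ by a constant, $1-H((d-1)/m) \ge c(\epsilon) > 0$. Hence $n = |\mathcal{C}| = 2^{\Theta(m)}$, with the upper bound $2^m$ trivial.

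One technicality remains. The Johnson bound as stated uses $d_{\mathcal{C}} = \frac12(1-\delta)m$ exactly, whereas our code has minimum distance at least $d$. A larger minimum distance corresponds to a smaller effective $\delta' \le \delta$, and the bound is monotone in $\delta$, so we simply take $\delta'$ defined by the actual minimum distance. Also, if the actual minimum distance exceeds $m/2$, the Plotkin bound already gives a constant code size. To avoid that case, I would instead pick the Gilbert--Varshamov code with minimum distance exactly $d$, as \cref{theorem:g-v} is stated, so that $\delta' = \delta$ up to rounding.

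With the code fixed, apply \cref{theorem:johnson} with $\gamma = \epsilon$ and $\delta = \epsilon^2/2$. The hypothesis $\gamma > \sqrt{\delta}$ holds since $\epsilon > \epsilon/\sqrt2$. For every $y$,
\[
|B(y,d_\mathrm{max}) \cap \mathcal{C}| \le \frac{1-\delta}{\gamma^2-\delta} = \frac{1-\epsilon^2/2}{\epsilon^2/2} \le \frac{2}{\epsilon^2}.
\]
The condition $m \ge 8/\epsilon^2$ is used to absorb rounding. It ensures $\lceil \frac12(1-\delta)m\rceil$ corresponds to a $\delta'$ with $\delta' \ge \delta/2$ or similar, and that the $\min(m,\cdot)$ in the Johnson bound is not binding. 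Concretely, I would verify that replacing $d$ by its ceiling changes $\delta$ by at most $2/m \le \epsilon^2/4$, which keeps $\gamma^2 - \delta'$ of order $\epsilon^2$. This may require slightly retuning the constant, for instance choosing $\delta = \epsilon^2/4$ and checking that the resulting bound is still at most $2/\epsilon^2$.

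The main obstacle is this constant bookkeeping: choosing $\delta$ so that $(1-\delta)/(\epsilon^2-\delta) \le 2/\epsilon^2$ survives the rounding of $d$ and of $d_\mathrm{max}$, given only $m \ge 8/\epsilon^2$. The structural argument itself is straightforward.
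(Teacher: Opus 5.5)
Your proposal is correct and follows essentially the paper's proof: a Gilbert--Varshamov code of minimum distance about $\frac12(1-\Theta(\epsilon^2))m$, with the Johnson bound applied at $\gamma \ge \epsilon$. The only difference is in the rounding, and your bookkeeping worry is already settled by your own observations. The paper rounds $d_\mathcal{C}$ down from $\delta=\epsilon^2/4$ and computes $\gamma$ from $\lceil\tau m\rceil$, whereas you round up from $\delta=\epsilon^2/2$ and use the real radius $\frac12(1-\epsilon)m$. Rounding $d$ up only lowers $\delta'$, and the Johnson quantity is increasing in $\delta$, while $m\ge 8/\epsilon^2$ gives $\delta'>\delta-2/m\ge\epsilon^2/4>0$, so no retuning is needed.
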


\begin{proof}
The input strings of our instance are the codewords of a binary code
$\mathcal{C} \subseteq \{0,1\}^m$ of maximum size with minimum distance
$d_\mathcal{C} = \frac{1}{2}(1-\delta)m$ for some $0 < \delta < 1$ fixed later.

Let $\mu = \lceil \tau m \rceil$ be the minimum agreement, and let
$d_\mathrm{max} = m - \mu = \frac{1}{2}(1-\gamma)m$, for some
$0 < \gamma < 1$, be the maximum distance between a center and a string it
covers.

First, we bound the value of $\gamma$ using $\epsilon$. As
$\tau m \leq \mu < \tau m + 1$ and the assumption $m \geq \frac{8}{\epsilon^2}$
implies $m > \frac{2}{\epsilon}$, we obtain
$$
\frac{1}{2}(1-\epsilon) \geq \frac{d_\mathrm{max}}{m} > \frac{1}{2}(1-2\epsilon),
$$
together with $d_\mathrm{max} = \frac{1}{2}(1-\gamma)m$ implying
\begin{align}
\epsilon \leq \gamma < 2\epsilon.
\label{bound:gamma-epsilon}
\end{align}
Second, we set $d_\mathcal{C} = \lfloor \frac{1}{2}(1-\frac{1}{4}\epsilon^2)m
\rfloor$ and bound the value of $\delta$ using $\epsilon$. The floor function
implies $\frac{1}{2}(1-\frac{1}{4}\epsilon^2)m - 1 < \frac{1}{2}(1-\delta)m
\leq \frac{1}{2}(1-\frac{1}{4}\epsilon^2)m$. From the second inequality we get
$\delta \geq \frac{1}{4}\epsilon^2$. The first inequality is equivalent to
$\delta < \frac{1}{4}\epsilon^2 + \frac{2}{m}$, which together with our
assumption $m \geq \frac{8}{\epsilon^2}$ implies
$\delta < \frac{1}{2}\epsilon^2$. Altogether,
\begin{align}
\frac{1}{4}\epsilon^2 \leq \delta < \frac{1}{2}\epsilon^2.
\label{bound:delta-epsilon}
\end{align}
The bounds on $\gamma$ and $\delta$ give $\gamma^2 \geq \epsilon^2 > \delta$,
so we can apply the Johnson bound (\cref{theorem:johnson}) to obtain
$$
|B(y,d_\mathrm{max}) \cap \mathcal{C}| \leq
\min\Big(m, \frac{1-\delta}{\gamma^2-\delta}\Big)
$$
for any center $y \in \{0,1\}^m$. Plugging the inequalities
\eqref{bound:gamma-epsilon} and \eqref{bound:delta-epsilon} into the
expression above we obtain
$$
\frac{1-\delta}{\gamma^2-\delta}
< \frac{1-\frac{1}{4}\epsilon^2}{\epsilon^2-\frac{1}{2}\epsilon^2}
< \frac{2}{\epsilon^2}.
$$
Observe that $|B(y,d_\mathrm{max}) \cap \mathcal{C}|$ is precisely the number
of input strings covered by $y$, hence no center covers more than
$\min(m, \frac{2}{\epsilon^2})$ of them.

It remains to bound the number of input strings. Denote
$\alpha = \frac{d_\mathcal{C}-1}{m}$ and apply the Gilbert-Varshamov bound
(\cref{theorem:g-v}) to obtain
$|\mathcal{C}| \geq 2^{m(1-H(\alpha))}$. Finally,
$$
\alpha = \frac{d_\mathcal{C}-1}{m} < \frac{1}{2}(1-\delta)
\leq \frac{1}{2} - \frac{1}{8}\epsilon^2,
$$
implying $n = |\mathcal{C}| \geq 2^{m(1-H(1/2-\frac{1}{8}\epsilon^2))}
= 2^{\Theta(m)}$.
\end{proof}

For $\tau > 3/4$, we can strengthen our bounds so that no center covers more
than a single input string.

\begin{lemma}
\label{lemma:no-cover-3/4}
Let $\tau \in (\frac34, 1)$ and denote $\tau = \frac{1}{2}(\frac{3}{2}+\epsilon)$ for
some $\epsilon \in (0, \frac12)$. For any $m \geq \frac{2}{\epsilon}$, there is an
instance of \textsc{max}-$(1,\tau)$-\textsc{center} with complete strings, consisting
of $n = 2^{\Theta(m)}$ strings of length $m$, such that no center covers more
than one input string.
\end{lemma}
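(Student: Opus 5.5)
The plan is to take the input strings to be the codewords of a code $\mathcal{C} \subseteq \{0,1\}^m$ whose minimum distance exceeds twice the covering radius. Then no center can lie within radius of two distinct codewords, by the triangle inequality. Let $\mu = \lceil \tau m \rceil$ and $d_\mathrm{max} = m - \mu = \lfloor (1-\tau) m \rfloor$. Writing $\tau = \frac{1}{2}(\frac32+\epsilon)$, we get $1-\tau = \frac14 - \frac{\epsilon}{2}$, hence
$$d_\mathrm{max} \leq \Big(\frac14 - \frac{\epsilon}{2}\Big)m.$$
If $y$ covers two codewords $c \neq c'$, then $d_H(c,c') \leq 2d_\mathrm{max}$. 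So it suffices to choose $\mathcal{C}$ with minimum distance $d_\mathcal{C} = 2d_\mathrm{max}+1$. Unlike \cref{lemma:no-cover-1/2}, the Johnson bound is not needed; only the triangle inequality and the Gilbert-Varshamov bound are used.

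The existence and size of $\mathcal{C}$ come from the Gilbert-Varshamov bound (\cref{theorem:g-v}). We need $d_\mathcal{C} = 2d_\mathrm{max} + 1 \leq \frac{m}{2}$ in order to use the entropy form. Since $2d_\mathrm{max} \leq (\frac12 - \epsilon)m$, this reduces to $1 \leq \epsilon m$, which holds under the assumption $m \geq \frac{2}{\epsilon}$; indeed that assumption gives the slack $\epsilon m \geq 2$. We also need $d_\mathcal{C} \geq 1$ and $d_\mathcal{C} \leq m$, both of which are immediate. The bound then yields
$$|\mathcal{C}| \geq 2^{m(1-H(\alpha))}, \qquad \alpha = \frac{d_\mathcal{C}-1}{m} = \frac{2d_\mathrm{max}}{m} \leq \frac12 - \epsilon.$$
Because $H$ is increasing on $[0,\frac12]$, we have $1 - H(\alpha) \geq 1 - H(\frac12 - \epsilon) > 0$, and this is a constant depending only on $\epsilon$. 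Hence $n = |\mathcal{C}| \geq 2^{\Omega(m)}$, and trivially $n \leq 2^m$, so $n = 2^{\Theta(m)}$.

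For the covering claim, suppose a center $y$ covers two distinct codewords $c, c'$. Then $d_H(c,c') \leq d_H(c,y) + d_H(y,c') \leq 2d_\mathrm{max} < d_\mathcal{C}$, which is a contradiction. So every center covers at most one input string.

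The only delicate step is the bookkeeping with floors and ceilings. We must confirm that $d_\mathrm{max} = m - \lceil \tau m\rceil$ equals $\lfloor (1-\tau)m\rfloor$, which it does. We must also check that $2d_\mathrm{max}+1 \leq m/2$ holds under the stated lower bound on $m$; as computed above, even $m \geq 1/\epsilon$ would suffice. I expect no real obstacle beyond this.
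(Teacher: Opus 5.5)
Your proposal is correct and follows the paper's proof essentially step for step. Like the paper, you take a Gilbert--Varshamov code with minimum distance $d_\mathcal{C} = 2d_\mathrm{max}+1$, rule out a center covering two codewords by the triangle inequality, and use $m \geq 2/\epsilon$ only to guarantee $d_\mathcal{C} \leq m/2$, so that the entropy form gives $n \geq 2^{m(1-H(1/2-\epsilon))}$. Your extra remarks, the trivial upper bound $n \leq 2^m$ and the fact that $m \geq 1/\epsilon$ would already suffice, are accurate small refinements.
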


\begin{proof}
Let $\mu = \lceil \tau m \rceil$ be the minimum agreement, and
$d_\mathrm{max} = m - \mu$ the maximum distance between a center and a string
it covers. The input strings are the codewords of a binary code
$\mathcal{C} \subseteq \{0,1\}^m$ of maximum size with minimum distance
$d_\mathcal{C} = 2d_\mathrm{max}+1$; then
$|B(y,d_\mathrm{max}) \cap \mathcal{C}| \leq 1$ for every center
$y \in \{0,1\}^m$ by the triangle inequality, that is, no center covers more
than one input string.

It remains to bound $n = |\mathcal{C}|$. Denote
$\alpha = \frac{d_\mathcal{C}-1}{m}$ and observe that
\begin{align*}
\alpha 
&= \frac{2d_\mathrm{max}}{m} = \frac{2(m-\mu)}{m} \\
&\leq \frac{2(m-\tau m)}{m} = 2(1-\tau) = \frac{1}{2} - \epsilon,
\end{align*}

using $\mu \geq \tau m$ and $\tau = \frac34 + \frac{\epsilon}{2}$. In
particular $d_\mathcal{C} \leq (\frac12 - \epsilon)m + 1 \leq \frac{m}{2}$
because $m \geq \frac{2}{\epsilon}$, so the Gilbert-Varshamov bound
(\cref{theorem:g-v}) applies and yields
$|\mathcal{C}| \geq 2^{m(1-H(\alpha))} \geq
2^{m(1-H(1/2-\epsilon))} = 2^{\Theta(m)}$.
\end{proof}

We have identified instances in which no single center covers more than a
constant number of input strings, for any $\tau > 1/2$. This extends
immediately to any number of centers.

\begin{theorem}
\label{theorem:no-cover-k}
Let $\tau \in (\frac12, 1)$ and denote $\epsilon = 2\tau - 1$. For any integer
$k \geq 1$ and any $m \geq \frac{8}{\epsilon^2}$, there is an instance of
\textsc{max}-$(k,\tau)$-\textsc{center} with $n = 2^{\Theta(m)}$ complete strings of length $m$, in which no $k$ centers cover more than $\frac{2k}{\epsilon^2} = O(k)$ strings. If moreover $\tau > 3/4$ and $m \geq \frac{4}{2\epsilon-1}$, there is such an instance in which no $k$
centers cover more than $k$ strings.
\end{theorem}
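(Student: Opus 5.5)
The plan is to reuse the single-center instances of \cref{lemma:no-cover-1/2} and \cref{lemma:no-cover-3/4} unchanged, and then bound the $k$-center coverage by a union bound over the $k$ centers.

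For the first part, set $\epsilon = 2\tau - 1$, so that $\tau = \frac12(1+\epsilon)$ with $\epsilon \in (0,1)$. For $m \geq \frac{8}{\epsilon^2}$, \cref{lemma:no-cover-1/2} gives an instance with $n = 2^{\Theta(m)}$ complete strings of length $m$ in which every single center $y$ covers at most $\frac{2}{\epsilon^2}$ strings. The instance does not depend on $k$. Now take any $k$ centers $y_1,\dots,y_k$. The set of covered strings is $\bigcup_{j\in[k]} \big(B(y_j,d_\mathrm{max}) \cap \mathcal{C}\big)$. By the union bound its size is at most $\sum_j |B(y_j,d_\mathrm{max})\cap\mathcal{C}| \leq \frac{2k}{\epsilon^2}$. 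Since $\epsilon$ depends only on $\tau$, this is $O(k)$.

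For the second part, assume $\tau > \frac34$ and rewrite it as $\tau = \frac12(\frac32 + \epsilon')$ with $\epsilon' = 2\tau - \frac32 = \epsilon - \frac12 \in (0,\frac12)$. The hypothesis of \cref{lemma:no-cover-3/4} is $m \geq \frac{2}{\epsilon'} = \frac{2}{\epsilon - 1/2} = \frac{4}{2\epsilon - 1}$, which is exactly the condition in the statement. That lemma then gives an instance with $2^{\Theta(m)}$ strings in which each center covers at most one string. The same union bound shows that $k$ centers cover at most $k$ strings.

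There is no substantive obstacle; the only care needed is the reparametrization between $\epsilon = 2\tau - 1$ and the $\epsilon'$ of \cref{lemma:no-cover-3/4}, and checking that the length conditions translate exactly.
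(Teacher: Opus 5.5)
Your proposal is correct and matches the paper's proof: the first part is the union bound over the $k$ centers applied to the $k$-independent instance of \cref{lemma:no-cover-1/2}. The second part uses the same reparametrization $\epsilon' = \epsilon - \frac12 = \frac{2\epsilon-1}{2}$ to invoke \cref{lemma:no-cover-3/4} under $m \geq \frac{2}{\epsilon'} = \frac{4}{2\epsilon-1}$.
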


\begin{proof}
Let $\mathcal{C}$ be the instance guaranteed by
\cref{lemma:no-cover-1/2}, and denote
$d_\mathrm{max} = m - \lceil \tau m \rceil$. Let $Y$ be any set of $k$
centers and let
$B_Y = \{x \in \mathcal{C} : \min_{y \in Y} d_H(x,y) \leq d_\mathrm{max}\}$
be the set of input strings they cover. As
$B_Y = \bigcup_{y \in Y} \big(B(y,d_\mathrm{max}) \cap \mathcal{C}\big)$ and
$|B(y,d_\mathrm{max}) \cap \mathcal{C}| \leq \frac{2}{\epsilon^2}$ for every
$y \in Y$ by \cref{lemma:no-cover-1/2}, we obtain
$$
|B_Y| \leq \sum_{y \in Y}|B(y,d_\mathrm{max}) \cap \mathcal{C}|
\leq k \frac{2}{\epsilon^2} = O(k).
$$
For the second statement, note that $\tau > 3/4$ is equivalent to
$\epsilon > \frac12$, and that $\tau = \frac12(\frac32 + \epsilon')$ for
$\epsilon' = \frac{2\epsilon-1}{2} \in (0,\frac12)$. Since
$m \geq \frac{4}{2\epsilon-1} = \frac{2}{\epsilon'}$, we may take the instance
of \cref{lemma:no-cover-3/4} instead, which satisfies
$|B(y,d_\mathrm{max}) \cap \mathcal{C}| \leq 1$ for every center $y$; the same
computation then gives $|B_Y| \leq k$.
\end{proof}

\section{Small Instances}

\begin{theorem}
\label{theorem:small-inputs}
    $(k,\tau)$-\textsc{center} admits an algorithm with runtime $n^{\mathcal{O}(n^2)} \mathcal{O}(\log m).$
\end{theorem}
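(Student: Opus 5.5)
The approach is to compress the columns. Each coordinate $j\in[m]$ is described by its \emph{column type}, the vector $(x_{1,j},\dots,x_{n,j})\in\{0,1,\star\}^n$. There are at most $3^n$ types, and the whole instance is determined by the multiplicity $N_\theta$ of each type $\theta$.

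A set of centers $y_1,\dots,y_k$ is described by how it treats each column. For every type $\theta$ and every assignment pattern $a\in\{0,1\}^k$ (the values of the $k$ centers at that coordinate), let $N_{\theta,a}$ be the number of columns of type $\theta$ receiving pattern $a$. These counts must satisfy $\sum_a N_{\theta,a}=N_\theta$. The order of columns within a type is irrelevant, so the instance is a Yes-instance iff such nonnegative integers exist, together with an assignment $\sigma:[n]\to[k]$ of strings to centers, such that
\[
\sum_{\theta,a}N_{\theta,a}\,[\theta_i\neq\star,\ \theta_i\neq a_{\sigma(i)}]\le(1-\tau)m_i\qquad\text{for every }i.
\]
For strings assigned to the same center, only the bit $a_{\sigma(i)}$ matters. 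So I first reduce to $k\le n$ (extra centers are useless). Then I fix $\sigma$ in $k^n\le n^n$ ways, after which the problem becomes a feasibility question with a constant-structure description.

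Next I solve the resulting problem. For fixed $\sigma$, the variables are $N_{\theta,a}$ with $a\in\{0,1\}^k$, $k\le n$. Patterns giving the same disagreement vector can be merged, so it suffices to have one variable per type $\theta$ and per disagreement subset of $[n]$. That gives at most $3^n\cdot2^n$ variables and $n+3^n$ constraints, with numbers of bit-size $O(\log m)$.

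I will invoke Lenstra's algorithm, or Kannan's improvement, for integer linear programming in fixed dimension. This produces a runtime of the form $p^{O(p)}\cdot\mathrm{poly}(L)$ in the number of variables $p$ and the encoding length $L$, which is polynomial in $\log m$. Once a solution is found, I read off the centers by distributing column patterns arbitrarily within each type.

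The main obstacle is matching the claimed bound $n^{O(n^2)}O(\log m)$ exactly. Naively the dimension is exponential in $n$, which is much worse than $n^2$. To cut this down I would use a totally-unimodular or proximity argument. First, the ILP has only $n$ nontrivial coverage constraints beyond the simple per-type partition constraints. Second, a basic solution of the LP relaxation has at most $n+(\#\text{types})$ fractional-support variables, and within each type at most $n+1$ patterns need to be used. Combined with Eisenbrand--Weismantel style proximity for ILPs with few rows, whose cost is $(n\Delta)^{O(n)}$ per candidate support, this should give roughly $(\text{number of types})\cdot n^{O(n^2)}$.

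There is an alternative if the types blow this up. I would instead treat the problem as an ILP with $n$ constraint rows whose entries lie in $\{0,1\}$, and apply the Papadimitriou / Eisenbrand--Weismantel dynamic program. It runs in time $(n\Delta)^{O(n)}$ times a logarithmic factor in the right-hand side, and together with the $n^n$ guesses of $\sigma$ it fits within $n^{O(n^2)}\log m$. The DP approach mentioned in the Discussion section is likely exactly this, so this route is my main plan.
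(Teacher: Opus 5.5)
\textbf{The running-time step fails.} Your setup is sound and matches the skeleton of the paper's proof: enumerate the assignment $\sigma$ of strings to centers (the paper enumerates the $k^n$ partitions), then decide each case with an integer program over column types with multiplicities. The gap is the step meant to deliver $n^{\mathcal{O}(n^2)}\mathcal{O}(\log m)$.

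Your program does not have $n$ rows. Besides the $n$ coverage constraints, it has one equality $\sum_a N_{\theta,a}=N_\theta$ per column type, so up to $3^n$ further rows. For $k\ge 2$ each such equality ties several pattern variables together, so these are genuine rows and not variable bounds. The Papadimitriou/Eisenbrand--Weismantel dynamic program is exponential in the number of rows. On a program with $n+3^n$ rows it is therefore doubly exponential in $n$, just like Lenstra's algorithm. In addition, the classical versions of that DP are pseudo-polynomial in the right-hand side, i.e.\ polynomial rather than logarithmic in $m$.

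The proximity paragraph does not repair this. Sparsity of LP basic solutions says nothing directly about integer solutions, and proximity bounds for $n+3^n$ rows are again doubly exponential. Also, the paper's dynamic program is a different algorithm: it tracks the $n\times k$ distance matrix column by column and runs in $m^{\mathcal{O}(n)}$. It is not the source of the $\log m$ bound.

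\textbf{Missing idea: decompose per cluster.} This is what the paper does. Once $\sigma$ is fixed, the bit of center $l$ in a column affects only the strings in $\sigma^{-1}(l)$. The instance therefore splits into $k$ independent single-center problems.

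For one binary center, each column type has only two options. A single variable $u_\theta\in\{0,\dots,N_\theta\}$, the number of type-$\theta$ columns where the center is $1$, suffices. The constraint for string $i$ becomes $\sum_\theta([\theta_i=0]-[\theta_i=1])\,u_\theta\le\lfloor(1-\tau)m_i\rfloor-\sum_\theta[\theta_i=1]\,N_\theta$. Now there are at most $n$ rows with entries in $\{-1,0,1\}$, and the per-type equalities have become upper bounds on single variables.

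You still need an algorithm that handles few rows \emph{and} upper bounds. Eisenbrand--Weismantel's proximity-based algorithm for that setting runs in time $(r\Delta)^{\mathcal{O}(r^2)}$ times a factor linear in the number of variables, plus an LP solve. With $r\le n$, $\Delta=1$ and at most $3^n$ variables, this is exactly where the $n^{\mathcal{O}(n^2)}$ comes from.

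The paper instead cites \citet{knop_combinatorial_2020} directly, whose combinatorial $n$-fold framework is built for this structure. Your joint formulation is itself an $n$-fold program, with $n$ global rows and one local row per type, so an $n$-fold IP algorithm would also work without decomposing.
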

\begin{proof}
    To solve the $(k, \tau)$-\textsc{center}, we iterate over all possible partitions of the $n$ input strings into $k$ disjoint clusters; the number of such partitions is bounded by $k^n$. For a fixed partition $P=\{V_1, \dots, V_k\}$, the center string $y$ is responsible for covering the subset of strings in $V_y$ only. This allows us to split the general $(k,\tau)$-\textsc{center} into $k$ independent $(1,\tau)$-\textsc{center} instances.

    Although $(k,\tau)$-\textsc{center} is more general than Hamming radius clustering, we can easily phrase $(1,\tau)$-\textsc{center} as a $\delta$-\textsc{multi strings} problem by \citet{knop_combinatorial_2020}, with per-string Hamming distance lower bounded by $0$ and upper-bounded by $\lfloor (1-\tau)m_i\rfloor$. This immediately shows the existence of an algorithm for $(1,\tau)$-\textsc{center} with running time $n^{\mathcal{O}(n^2)} \mathcal{O}(\log m).$ The instance of $(k, \tau)$-\textsc{center} is a YES-instance if and only if there is a partition $P=\{V_1,\dots,V_k\}$ such that the output of $\delta$-multi strings is positive for all $k$ clusters of P.
\end{proof}

Although \cref{theorem:small-inputs} solves the case with constant number of input strings, we also give an elementary dynamic program for it. Its dependence on $m$ is worse, but the algorithm is self-contained, avoiding the integer programming machinery behind $\delta$-\textsc{multi strings}, and it extends directly to the maximization version.

\begin{theorem}
    \textsc{max}-$(k,\tau)$-\textsc{center} admits an algorithm with runtime $m^{\mathcal{O}(n)}$.
\end{theorem}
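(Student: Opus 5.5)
Our plan is a dynamic program that scans the $m$ coordinates one at a time and keeps, for every pair of a center $j\in[k]$ and an input string $i\in[n]$, the number of positions processed so far on which $y_j$ agrees with $x_i$ (counting only revealed positions of $x_i$). The centers are built column by column. At coordinate $\ell$ we choose a column vector $(y_{1,\ell},\dots,y_{k,\ell})\in\{0,1\}^k$, and this choice increments each agreement counter $a_{j,i}$ by $[x_{i,\ell}=y_{j,\ell}]$ when $x_{i,\ell}\neq\star$. The full state is the $k\times n$ table of counters, each in $\{0,\dots,m\}$, which has $(m+1)^{kn}$ possible values. Since $k$ is a constant, this is $m^{\mathcal{O}(n)}$. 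We define $\dpmap{\ell}$ as the set of counter tables reachable after the first $\ell$ coordinates. We start from $\dpmap{0}=\{0\}$ and compute $\dpmap{\ell}$ from $\dpmap{\ell-1}$ by applying each of the $2^k$ column choices, removing duplicates. Each step therefore costs $2^k\cdot m^{\mathcal{O}(n)}\cdot\mathrm{poly}(n,k)$, and the whole scan costs $m\cdot m^{\mathcal{O}(n)}=m^{\mathcal{O}(n)}$.

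At the end we read off the answer from $\dpmap{m}$. For each final table we count the strings $i$ that satisfy $\max_j a_{j,i}\ge \lceil\tau m_i\rceil$. This condition is exactly $\min_j d_H(x_i,y_j)\le(1-\tau)m_i$, because $d_H(x_i,y_j)=m_i-a_{j,i}$. We take the maximum of this count over all tables in $\dpmap{m}$, which also settles the decision version. To output actual centers, we store a back-pointer for each reachable state, namely one predecessor state and the column used to reach it, and then reconstruct $y_1,\dots,y_k$ by walking back through the pointers.

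The correctness claim is that $\dpmap{\ell}$ equals exactly the set of counter tables induced by prefixes of length $\ell$ of some $k$ centers. We prove this by a straightforward induction on $\ell$. The key point is that whether a string is covered depends on the centers only through the final counters, so merging centers that induce the same table loses nothing. This holds for both the decision and the maximization objectives.

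The only step needing care is the state-space bound. The reduction of counters to at most $m+1$ values each gives $(m+1)^{kn}$, and the exponent is $\mathcal{O}(n)$ because $k$ is fixed. Without that, the exponent would be $\mathcal{O}(kn)$ and would have to be stated as such. We could shrink the state further by capping each counter at $\lceil\tau m_i\rceil$, but this is not needed for the claimed bound.
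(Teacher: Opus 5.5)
Your proposal is correct and follows essentially the same route as the paper: a coordinate-by-coordinate dynamic program over the set of reachable $n\times k$ tables. The paper stores partial Hamming distances $\mathbf{D}_{i,l}$ rather than agreement counts, which is equivalent; otherwise it has the same $(m+1)^{kn}=m^{\mathcal{O}(n)}$ state bound for constant $k$, takes the same maximum over the final set, and uses the same back-pointer reconstruction of the centers.
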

\begin{proof}
See \cref{algo:k-tau-conn}. For each $j \in \{0, \dots, m\},$ we maintain a set $\dpmap{j}$ containing distance matrices $\mathbf{D} \in \{0, \dots, m\}^{n \times k}$, where $\dpmap{j}$ considers the first $j$ positions of the input. 

The distance matrix
$
\mathbf{D} \in \dpmap{j}
$
if there are $k$ centers $y_1, \dots, y_k \in \{0, 1\}^j$ such that $\mathbf{D}_{i, l} = d_H(x_{i, :j}, y_l)$ is the Hamming distance of center $y_l$ from string $x_i$ on the first $j$ positions, for all $i \in [n]$ and $l \in [k]$.

The algorithm can be easily adapted to \textsc{max}-$(k,\tau)$-\textsc{center}. Instead of testing whether some matrix $\mathbf{D} \in \dpmap{m}$ covers all input strings, we compute for each such matrix the number of strings $x_i$ with $\min_{l \in [k]} \mathbf{D}_{i,l} \leq \lfloor(1-\tau)m_i\rfloor$ and return the maximum over $\dpmap{m}$. To output the centers themselves, we store with every matrix of $\dpmap{j}$ the bit vector $\mathbf{\Delta}$ that produced it, together with a pointer to its predecessor in $\dpmap{j-1}$; backtracking from the best matrix of $\dpmap{m}$ then recovers the $k$ centers position by position. Both modifications add $\mathcal{O}(1)$ work per stored matrix and leave the runtime unchanged.
\end{proof}

\begin{algorithm}[h]
\caption{Dynamic program for $(k,\tau)$-\textsc{center}}
\label{algo:k-tau-conn}
\begin{algorithmic}[1]
\REQUIRE $x_1, \dots, x_n \in \{0,1,\star\}^m$
\ENSURE \texttt{TRUE} iff $k$ centers covering $x_1, \dots, x_n$ exist
\STATE $\dpmap{0} \leftarrow \{0^{n \times k}\}$
\FORALL{$j \in [m]$}
  \STATE $\dpmap{j} \leftarrow \emptyset$
  \FORALL{$\mathbf{D} \in \dpmap{j-1}$}
    \FOR{$\mathbf{\Delta} = (\Delta_1, \dots, \Delta_k) \in \{0,1\}^k$}
      \STATE $\mathbf{D}^\Delta \leftarrow \mathbf{D}$
      \FORALL{$l \in [k], i \in [n]$}
        \IF{$x_{i,j} \neq \star$ and $x_{i, j} \neq \Delta_l$}
          \STATE $\mathbf{D}^\Delta_{i,l} \leftarrow \mathbf{D}^\Delta_{i,l} + 1$
        \ENDIF
      \ENDFOR
      \STATE $\dpmap{j} \leftarrow \dpmap{j} \cup \{\mathbf{D}^\Delta\}$
    \ENDFOR
  \ENDFOR
\ENDFOR
\FORALL{$\mathbf{D} \in \dpmap{m}$}
  \IF{$\min_{l \in [k]} \mathbf{D}_{i,l} \leq \lfloor(1-\tau)m_i\rfloor$ for all $i \in [n]$}
    \RETURN \texttt{TRUE}
    \ENDIF
\ENDFOR
\RETURN \texttt{FALSE}
\end{algorithmic}
\end{algorithm}

\section{Instances with Small Vertex Cover Number}

We adapt the FPT algorithm of \citet{friedrich_binary_2025} for $k$-center with uniform distances, parameterized by the vertex cover number of the \emph{incidence graph}: a bipartite graph where the vertices are the $n$ input strings and $m$ coordinates, connected by an edge if the corresponding entry is present. The incidence graph has a small vertex cover number if all but a few input strings have known entries only within a small subset of positions. For $J \subseteq [m]$, we denote $x_i[J]$ the restriction of $x_i$ to the positions in $J$.

\begin{theorem}
\label{theorem:vc-fpt}
$(k,\tau)$-\textsc{center} can be solved in time
$$
2^{\mathcal{O}\left(\mathrm{vc}(G)^2 \log \mathrm{vc}(G)\right)}\mathrm{poly}(nm),
$$
where $\mathrm{vc}(G)$ is the vertex cover number of the incidence graph $G$.
\end{theorem}
\begin{proof}

\cref{alg:kcenter-vc} adapts the algorithm of \citet{friedrich_binary_2025} for $k$-\textsc{Center with Missing Entries}, which is stated for a single radius $d$ for all input strings. The only change is that instead of reading $d$ from the input, we compute the maximum distance $d_i = \lfloor(1-\tau)m_i\rfloor$ of each string $x_i$ and every subsequent comparison uses $d_i$ in place of $d$. The running time is bounded by
$2^{\mathcal{O}(k \cdot \mathrm{vc}(G) + \mathrm{vc}(G)^2 \log \mathrm{vc}(G))}\mathrm{poly}(nm)$ \citep{friedrich_binary_2025}; as $k$ is a constant, this is $2^{\mathcal{O}(\mathrm{vc}(G)^2 \log \mathrm{vc}(G))}\mathrm{poly}(nm)$.

\begin{algorithm}
  \caption{FPT algorithm for $(k,\tau)$-\textsc{center} parameterized by vertex cover number.}
  \label{alg:kcenter-vc}
  \begin{algorithmic}[1]
    \REQUIRE $x_1, \dots, x_n \in \{0,1,\star\}^m$
    \ENSURE \texttt{TRUE} iff $k$ centers covering $x_1, \dots, x_n$ exist
    \FORALL{$i \in [n]$}
      \STATE $m_i \gets |\{\, j \in [m] \mid x_{i,j} \neq \star \,\}|$
      \STATE $d_i \gets \lfloor (1-\tau) m_i \rfloor$
    \ENDFOR
    \STATE $S \gets$ minimum vertex cover of the incidence graph $G$
    \STATE $R_S \gets \{\, i \in [n] \mid r_i \in S \,\}$
    \STATE $C_S \gets \{\, j \in [m] \mid c_j \in S \,\}$
    \STATE $\overline{C}_S \gets [m] \setminus C_S$
    \STATE $P \gets (R_S \to [k]) \times ([k] \times C_S \to \{0,1\})$
    \FORALL{$(\mathsf{part}, \mathsf{cent}) \in P$ \textbf{and} \textsc{isValid}$(\mathsf{part}, \mathsf{cent})$}
        \STATE $\mathsf{soln} \gets \textbf{true}$
        \FORALL{$l \in [k]$}
          \STATE $R_l \gets \{\, i \in R_S \mid \mathsf{part}(i) = l \,\}$
          \IF{$R_l = \emptyset$}
            \STATE \textbf{continue}
          \ENDIF
          \FORALL{$i \in R_l$}
            \STATE $d'_i \gets d_i - d_H\bigl(x_i[C_S], \mathsf{cent}(l, C_S)\bigr)$
            \STATE $x'_i \gets x_i[\overline{C}_S]$
          \ENDFOR
          \STATE $d' \gets (d'_i)_{i \in R_l}$
          \STATE $x' \gets (x'_i)_{i \in R_l}$
          \STATE $\mathsf{soln} \gets \mathsf{soln}$ \textbf{and} $\textsc{NUCS}(d', x')$
          \IF{\textbf{not} $\mathsf{soln}$}
            \STATE \textbf{break}
          \ENDIF
        \ENDFOR
        \IF{$\mathsf{soln}$}
          \STATE $\overline{R}_S \gets \{\, i \in [n] \setminus R_S \mid x_i[C_S] \text{ has a known entry} \}$
          \STATE $\mathsf{counter} \gets 0$
          \FORALL{$i \in \overline{R}_S$}
            \FOR{$l \in [k]$}
              \IF{$d_H\bigl(x_i[C_S], \mathsf{cent}(l, C_S)\bigr) \le d_i$}
                \STATE $\mathsf{counter} \gets \mathsf{counter} + 1$
                \STATE \textbf{break}
              \ENDIF
            \ENDFOR
          \ENDFOR
          \IF{$\mathsf{counter} = |\overline{R}_S|$}
            \RETURN \texttt{TRUE}
          \ENDIF
        \ENDIF
    \ENDFOR
    \RETURN \texttt{FALSE}
  \end{algorithmic}
\end{algorithm}

Let $S$ be a minimum vertex cover of $G$, let $R_S$ and $C_S$ be the strings and positions it contains, and let $\overline{C}_S = [m] \setminus C_S$. Since $S$ covers all edges of $G$, we have $x_{i,j} = \star$ whenever $i \notin R_S$ and $j \notin C_S$; we call the strings indexed by $R_S$ \emph{long} and the remaining ones \emph{short}, as the latter have known entries only in $C_S$. The algorithm enumerates all \emph{partial assignments} $(\mathsf{part}, \mathsf{cent})$, where $\mathsf{part}$ assigns each long string to one of the $k$ clusters and $\mathsf{cent}$ fixes the values of all $k$ centers on the positions in $C_S$; the test \textsc{isValid} discards the pairs for which some long string $x_i$ is already at distance more than $d_i$ from its assigned center on $C_S$.

Fix a valid pair of $(\mathsf{part}, \mathsf{cent})$. A long string $x_i$ assigned to cluster $l$ may be at distance at most $d'_i = d_i - d_H(x_i[C_S], \mathsf{cent}(l,C_S))$ from the center of $l$ on the remaining positions. Deciding whether cluster $l$ admits such a center is an instance of the \textsc{Non-uniform Closest String} problem on the strings $x_i[\overline{C}_S]$ with radii $d'_i$. \citet{knop_combinatorial_2020} give an algorithm, which we denote \textsc{NUCS}, solving an instance of $|R_l|$ strings of length $|\overline{C}_S|$ in time $|R_l|^{\mathcal{O}(|R_l|^2)}|\overline{C}_S| \in 2^{\mathcal{O}\left(\mathrm{vc}(G)^2 \log \mathrm{vc}(G)\right)}$.

Deciding whether cluster $l$ admits such a center is an instance of the \textsc{Non-uniform Closest String} problem on the strings $x_i[\overline{C}_S]$ with radii $d'_i$. \citet{knop_combinatorial_2020} give an algorithm that solves it in time $n^{\mathcal{O}(n^2)}m$, we denote it \textsc{NUCS}. The second test is for short strings --- with no known entries outside $C_S$ --- so their distance to every center is already determined by $\mathsf{cent}$; they can be assigned to any center within their radius independently of one another. The instance is a yes-instance if and only if some valid pair of $(\mathsf{part}, \mathsf{cent})$ passes both tests. The proof of this equivalence is that of \citet{friedrich_binary_2025} with $d$ replaced by $d_i$ throughout, and we refer to their paper for the details.
\end{proof}

The same adaptation is less immediate for the other two parameters of \citet{friedrich_binary_2025}, the treewidth and the fracture number of the incidence graph. The running time of their treewidth algorithm depends on the radius, which in our setting is proportional to $m_i$. The fracture number has a different obstacle, as their algorithm treats separately the case where the radius is at least twice the fracture number, in which a string with few known entries is covered by every center and can be ignored. Our radii scale with the number of known entries of a string, so this case never occurs and no string comes for free. An FPT algorithm parameterized by the fracture number would be of particular interest for delegated voting, as a small fracture number describes an electorate with a few global issues that all voters care about, where the voters additionally split into small local groups, each concerned only with its own local issues.

\section{Hardness of Approximation}

\newcommand{\conm}[0]{M}

To classify the approximability of \textsc{max}-$(1,\tau)$-\textsc{center}, we use the dichotomy of \citet{khanna_approximability_2001} for the maximization version of constraint satisfaction problems.
 
Given a constraint language $\Gamma$ over the binary domain, the problem $\mathrm{MAX\text{-}CSP}(\Gamma)$ takes a finite set of variables and a finite set of constraints over $\Gamma$ as input, and returns an assignment satisfying as many constraints as possible; an assignment satisfying at least a $1/\rho$ fraction of that maximum is \emph{$\rho$-approximate}. We assume that each constraint on the input has both a satisfying and a non-satisfying assignment. For any $\Gamma$, $\mathrm{MAX\text{-}CSP}(\Gamma)$ is in APX, that is, admits a polynomial-time algorithm that produces a $\rho$-approximate assignment on every input \citep{khanna_approximability_2001} for some constant $\rho$. The dichotomy below determines when $\mathrm{MAX\text{-}CSP}(\Gamma)$ is in PO, the class of optimization problems whose optimum can be computed exactly in polynomial time.
 
The tractable cases are characterized by 2-monotonicity. A constraint $c$ is \emph{2-monotone} if it is expressible in disjunctive normal form with at most two terms, the first containing only positive literals and the second only negative literals, and $\Gamma$ is 2-monotone if each $c \in \Gamma$ is 2-monotone. For $s \in \{0,1\}^k$, let $\mathrm{Zeros}(s) = \{ i \in [k]: s_i=0\}$ and $\mathrm{Ones}(s) = \{ i \in [k]: s_i=1\}$; a set $V \subseteq [k]$ is a \emph{0-term} (\emph{1-term}) of a $k$-ary constraint $c$ if every assignment $s$ with $\mathrm{Zeros}(s) \supseteq V$ ($\mathrm{Ones}(s) \supseteq V$) satisfies $c$.

\begin{lemma}[\citet{khanna_approximability_2001}]
\label{lemma:2-monotone}
A constraint $c$ is 2-monotone iff all of the following conditions are satisfied:
\begin{itemize}
    \item for every satisfying assignment $s$ of $c$, either $\mathrm{Zeros}(s)$ is a 0-term or $\mathrm{Ones}(s)$ is a 1-term,
    \item $V_1 \cap V_2$ is a 1-term for any two 1-terms $V_1$ and $V_2$ of $c$,
    \item $V_1 \cap V_2$ is a 0-term for any two 0-terms $V_1$ and $V_2$ of $c$.
\end{itemize}
\end{lemma}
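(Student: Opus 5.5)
We prove both directions by showing that the 1-terms and the 0-terms of $c$ are each the supersets of a single minimal set. Throughout, as stated above, $c$ is a $k$-ary constraint with at least one satisfying and at least one non-satisfying assignment. Both families of terms are \emph{upward closed}: if $V$ is a 1-term and $V \subseteq V'$, every $s$ with $\mathrm{Ones}(s)\supseteq V'$ also has $\mathrm{Ones}(s)\supseteq V$, so it satisfies $c$. The same holds for 0-terms. The second and third conditions of the lemma then say that a nonempty family of 1-terms (resp.\ 0-terms) has a unique minimum element, namely the intersection of all its members.

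\emph{Sufficiency.} Assume the three conditions hold. If $c$ has a 1-term, let $P$ be the minimum 1-term; if it has a 0-term, let $N$ be the minimum 0-term. We claim
\[
c \equiv \Big(\bigwedge_{i\in P} x_i\Big) \vee \Big(\bigwedge_{i\in N} \neg x_i\Big),
\]
where a term is omitted if the corresponding family is empty. Any $s$ satisfying the right-hand side has $\mathrm{Ones}(s)\supseteq P$ or $\mathrm{Zeros}(s)\supseteq N$, so it satisfies $c$ by the definition of a term. Conversely, let $s$ satisfy $c$. By the first condition, $\mathrm{Ones}(s)$ is a 1-term or $\mathrm{Zeros}(s)$ is a 0-term. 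In the first case the 1-term family is nonempty and minimality gives $\mathrm{Ones}(s)\supseteq P$; the second case is symmetric. The two-term DNF above has one purely positive and one purely negative term, so $c$ is 2-monotone.

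\emph{Necessity.} Let $c = x_P \vee \neg x_N$, where $x_P$ is the conjunction of the positive literals on $P$ and $\neg x_N$ the conjunction of the negative literals on $N$, with possibly one term absent. The first condition is immediate. If $s$ satisfies $x_P$, every $s'$ with $\mathrm{Ones}(s')\supseteq \mathrm{Ones}(s)$ also satisfies $x_P$, so $\mathrm{Ones}(s)$ is a 1-term; the case of $\neg x_N$ is symmetric.

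For the intersection conditions we compute the 1-terms explicitly. A set $V$ is a 1-term iff every $W\supseteq V$ satisfies $W\supseteq P$ or $W\cap N=\emptyset$.
\begin{itemize}
\item If the negative term is absent, this reads $V\supseteq P$.
\item If $N\neq\emptyset$ and $V\cap N\neq\emptyset$, every $W\supseteq V$ meets $N$, so $V\supseteq P$ is forced.
\item If $N\neq\emptyset$ and $V\cap N=\emptyset$, apply the condition to $W=V\cup\{n\}$ for each $n\in N$. It gives $V\cup\{n\}\supseteq P$ for all $n\in N$. When $|N|\ge 2$ this forces $V\supseteq P$; when $N=\{n\}$ it forces $V\supseteq P\setminus\{n\}$.
\end{itemize}
Checking sufficiency of these conditions case by case shows that the 1-terms are exactly the supersets of a single set $P'$, where $P'\in\{P,\,P\setminus\{n\}\}$. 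A family of the form $\{V : V\supseteq P'\}$ is closed under intersection, which gives the second condition. The third condition follows by the symmetric argument with the roles of 0 and 1 swapped.

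The main obstacle is this necessity computation. A 2-monotone representation is not unique; for example, $x_1x_2 \vee \neg x_2$ is equivalent to $x_1 \vee \neg x_2$. So the family of 1-terms is not simply the supersets of the $P$ in a given representation, and the degenerate cases ($|N|\le 1$, an absent term) must be handled carefully. The assumption that $c$ is neither trivially true nor unsatisfiable is what excludes the remaining cases, such as an empty term, which would make $c$ constant.
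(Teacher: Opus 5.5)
The paper does not prove this lemma. It is imported from \citet{khanna_approximability_2001}, and the paper only uses the necessity of the second condition (in \cref{lemma-maxcsp-hard}, to conclude that $R^{\textbf{yes}}$ is not 2-monotone). So your argument has to stand on its own, and it essentially does.

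Sufficiency is complete. Closure under pairwise intersection plus finiteness gives a minimum 1-term $P$ and a minimum 0-term $N$, each of which is itself a term. The first condition then forces every satisfying assignment $s$ to have $\mathrm{Ones}(s)\supseteq P$ or $\mathrm{Zeros}(s)\supseteq N$.

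For necessity, your explicit description of the 1-terms of $\bigwedge_{i\in P}x_i\vee\bigwedge_{i\in N}\neg x_i$ is correct. This includes the degenerate case $N=\{n\}$ with $n\in P$, where the 1-terms are the supersets of $P\setminus\{n\}$ rather than of $P$. This is precisely the direction the paper relies on. Your attention to the non-uniqueness of the 2-monotone representation is what a naive ``the 1-terms are the supersets of $P$'' argument would get wrong.

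One case is not covered. If the positive term is absent, i.e.\ $c=\bigwedge_{i\in N}\neg x_i$ with $N\neq\emptyset$, your second and third bullets conclude ``$V\supseteq P$'' with no $P$ to refer to. Here the all-ones assignment ($W=[k]$) violates $c$, so there are no 1-terms at all. The family is therefore empty rather than ``the supersets of a single set $P'$'', and the second condition holds vacuously. Symmetrically, there are no 0-terms when the negative term is absent. Adding this remark completes the proof.
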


\begin{theorem}[\citet{khanna_approximability_2001}]
\label{theorem:khanna-max-csp}
$\mathrm{MAX\text{-}CSP}(\Gamma)$ is in PO if any of the following three conditions is satisfied:
\begin{itemize}
    \item $\mathbf{1}$ is a polymorphism of $\Gamma$,
    \item $\mathbf{0}$ is a polymorphism of $\Gamma$,
    \item $\Gamma$ is 2-monotone.
\end{itemize}
Otherwise, $\mathrm{MAX\text{-}CSP}(\Gamma)$ is APX-complete.
\end{theorem}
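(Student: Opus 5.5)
Since this is the classical dichotomy of \citet{khanna_approximability_2001}, my plan follows their structure: first the membership in APX, then the three tractable cases, then APX-hardness for every remaining $\Gamma$.

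\textbf{Membership in APX and the constant polymorphisms.} A uniformly random assignment satisfies each constraint of arity at most $a$ with probability at least $2^{-a}$, because every constraint is assumed to have a satisfying assignment. The method of conditional expectations, exactly as in \cref{observation:derandomized-drep}, then gives a deterministic $2^{a}$-approximation. Here $a$ is the maximum arity in $\Gamma$, which is constant since $\Gamma$ is finite. If $\mathbf{0}$ is a polymorphism of $\Gamma$, every relation contains the all-zero tuple, so the all-zero assignment satisfies every constraint and is optimal. The case of $\mathbf{1}$ is symmetric.

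\textbf{The 2-monotone case.} I would reduce to minimum $s$--$t$ cut. A 2-monotone constraint has the form $(\bigwedge_{u\in P} u) \vee (\bigwedge_{v\in N}\neg v)$. Maximizing the number of satisfied constraints is the same as minimizing the number violated. I would build a network with a source $s$ (meaning value $1$), a sink $t$ (meaning value $0$), and one node per variable. For each constraint I would add two auxiliary nodes $p_c$ and $n_c$:
\begin{itemize}
\item an edge $p_c \to u$ of infinite capacity for each $u\in P$, and an edge $v\to n_c$ of infinite capacity for each $v\in N$;
\item a unit-capacity edge $n_c \to p_c$.
\end{itemize}
The intended semantics is that a finite cut separating $s$ from $t$ pays exactly $1$ for each constraint whose positive term fails and whose negative term also fails. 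I would verify this claim directly in both directions. Pure terms, where $P$ or $N$ is empty, are handled by attaching them to $s$ or $t$. The third condition of \cref{lemma:2-monotone} is what guarantees that every 2-monotone relation has this two-term shape. I may also use that lemma to check 2-monotonicity algorithmically.

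\textbf{Hardness.} Assume $\Gamma$ is neither $0$-valid, $1$-valid nor 2-monotone. I would follow the \emph{implementation} framework. A collection of $\Gamma$-constraints over the original variables and auxiliary variables \emph{$\alpha$-implements} a target constraint $f$ if two conditions hold. First, when $f$ is satisfied, some auxiliary assignment satisfies exactly $\alpha$ of the constraints. Second, when $f$ is violated, every auxiliary assignment satisfies at most $\alpha-1$ of them. Such implementations compose and give AP-reductions for bounded-occurrence instances. The goal is therefore to implement, from $\Gamma$, a family known to be APX-hard: $\{x\oplus y\}$ (MAX CUT), or $\{x\vee y, \neg x\}$-type relations (MAX 2SAT / MAX DICUT).

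The steps are as follows.
\begin{enumerate}
\item Failure of $0$-validity and $1$-validity lets me implement, in the weak MAX sense, the unary constraints $T$ and $F$. The idea is to identify all variables of a constraint not containing $0^a$, and similarly for $1^a$. This may give only $x\oplus y$-like relations, which already suffice.
\item Non-2-monotonicity means one of the three conditions of \cref{lemma:2-monotone} fails.
\begin{itemize}
\item A violation of the first condition gives a satisfying tuple $s$ together with two ``witness'' falsifying tuples. Fixing variables with $T$/$F$ and identifying coordinates within the classes $\mathrm{Zeros}(s)$ and $\mathrm{Ones}(s)$ then yields a binary relation containing $01$ or $10$ but not both of $00$, $11$. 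This leads to $x\oplus y$ or $x\wedge\neg y$.
\item A violation of the second condition, a non-intersecting pair of 1-terms, yields after identification a relation behaving like $x\vee y$ restricted appropriately.
\item The third condition is handled symmetrically.
\end{itemize}
\end{enumerate}

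\textbf{Main obstacle.} I expect the main obstacle to be this case analysis. Unlike the decision setting, in MAX-CSP I cannot force variables to constants. The $T$/$F$ implementations are therefore only weighted and approximate, and I have to check that the resulting gadgets have a genuine gap rather than merely preserving satisfiability. My first attempt would be to replace constants by two global variables $x_0,x_1$, bound together by many copies of the implemented $x_0\ne x_1$ constraint, so that any optimal or near-optimal solution respects them.
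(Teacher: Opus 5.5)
The paper does not prove this theorem; it imports it as a black box from \citet{khanna_approximability_2001}, so your proposal can only be measured against their argument. Your tractable half is sound. A $0$- or $1$-valid language is solved by a constant assignment. A random assignment with conditional expectations gives a $2^a$-approximation. Your network charges exactly one unit for each constraint whose positive and negative terms both fail, so the 2-monotone case does reduce to minimum $s$--$t$ cut. (The two-term shape is simply the definition of 2-monotone; it is not supplied by the third condition of \cref{lemma:2-monotone}.)

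\textbf{The gap is in the hardness half.} It sits exactly at the step you flag as the obstacle, and your proposed fix does not work. Binding global variables $x_0,x_1$ by copies of an implemented $x_0\neq x_1$ is circular. If $\Gamma$ implements $x\neq y$, MAX CUT already reduces to $\mathrm{MAX\text{-}CSP}(\Gamma)$ and you need no constants at all. If it does not, the binding is unavailable, and that is precisely the case in which your step-2 gadgets need constants. Moreover, $x_0\neq x_1$ would not tell you which of the two is $0$.

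The missing idea is a clean dichotomy followed by heavy unary constraints.
\begin{itemize}
\item[(i)] If some relation is neither $0$- nor $1$-valid, identify its coordinates according to a satisfying tuple $s$, using the classes $\mathrm{Zeros}(s)$ and $\mathrm{Ones}(s)$. Identifying \emph{all} coordinates, as you wrote, only gives a unary relation. The resulting binary relation contains $01$ but neither $00$ nor $11$. So it is either $x\neq y$, and you are done, or $\neg x\wedge y$, which implements both $T$ and $F$.
\item[(ii)] If every relation is $0$- or $1$-valid, full identification of a non-$0$-valid (resp.\ non-$1$-valid) relation gives $T$ (resp.\ $F$) directly.
\end{itemize}
Now replace each constant by $x_1$ or $x_0$, and add $n+1$ copies each of $T(x_1)$ and $F(x_0)$, where $n$ is the number of constraints. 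Flipping a wrongly set global variable gains $n+1$ and loses at most $n$. Since $\mathrm{OPT}\geq 2^{-a}n$, the additive shift is $O(\mathrm{OPT})$, so this is an L-reduction.

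Two further points need repair.
\begin{itemize}
\item[(i)] Violating the second (resp.\ third) condition of \cref{lemma:2-monotone} yields only $x\vee y$ (resp.\ $\neg x\vee\neg y$). That relation is $1$-valid (resp.\ $0$-valid) and hence in PO on its own. Hardness comes from pairing it with $F$ (resp.\ $T$) and reducing from vertex cover on bounded-degree graphs, where $\mathrm{OPT}=|E|+|V|-\mathrm{vc}(G)$.
\item[(ii)] Your $\alpha$-implementation must require that, when $f$ holds, the \emph{maximum} over auxiliary assignments is exactly $\alpha$. It is not enough that some assignment attains $\alpha$; otherwise the objective is not preserved.
\end{itemize}
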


We use these two results to establish a constant-factor approximation algorithm for a restricted class of instances of \textsc{max}-$(1,\tau)$-\textsc{center}, and APX-hardness of the problem in general. We say that an instance is \emph{$M$-sparse}, for a positive integer $M$, if every input string reveals at most $M$ positions, that is, $m_i \leq M$ for all $i \in [n]$.
 
Recall the notation of binary strings and relations from \cref{definition:string-relation}.

\begin{definition}
Let $0 < \tau < 1$ and $M$ be a positive integer. We define the constraint language $$\Gamma_{\tau, M}^* = \left\{R^{(\tau,i,j)}\right\}_{0\leq j \leq i \leq M}.$$
\end{definition}

\begin{lemma}
\label{lemma:apx-membership}
For any $\tau \in (0, 1)$ and any positive integer $M$, \textsc{max}-$(1,\tau)$-\textsc{center} restricted to $M$-sparse instances is in APX.
\end{lemma}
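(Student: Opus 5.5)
Reduce $M$-sparse instances to $\mathrm{MAX\text{-}CSP}(\Gamma_{\tau,M}^*)$ and use that every such problem is in APX \citep{khanna_approximability_2001}. I read $R^{(\tau,i,j)}$ as the $i$-ary relation of tuples within Hamming distance $\lfloor(1-\tau)i\rfloor$ of $s^{(i,j)}$, i.e.\ $i$ revealed positions of which the first $j$ are ones. Given an $M$-sparse instance with strings $x_1,\dots,x_n$ of length $m$, introduce one Boolean variable $v_j$ per coordinate. For each string $x_i$ with $m_i\ge 1$, let $J_i$ be its revealed positions, with those carrying a $1$ listed first; there are $j_i$ such ones. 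Add the constraint $R^{(\tau,m_i,j_i)}$ on the variables indexed by $J_i$ in this order. Because $m_i\le M$ and $j_i\le m_i$, the constraint belongs to $\Gamma_{\tau,M}^*$, and relabelling the arguments handles arbitrary positions of ones. As in the proof of \cref{theorem:delegation-np-complete}, there is a bijection between centers $y$ and assignments $f$ with $f(v_j)=y_j$. Under it, $y$ covers $x_i$ if and only if $f$ satisfies the constraint of $x_i$. Hence, for every center, the objective values coincide exactly, and so do the optima.

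Two technicalities must be handled before Khanna et al.\ applies. They assume each constraint has both a satisfying and a non-satisfying assignment. A constraint that is always satisfied comes from a string covered by every center, for instance one with $m_i=0$, or one where $\lfloor(1-\tau)m_i\rfloor\ge m_i$, which cannot happen for $\tau>0$ and $m_i\ge1$. I would remove such strings and add their count $n_0$ to the objective. A constraint with no satisfying assignment cannot arise, since $s^{(i,j)}$ itself satisfies it. The language $\Gamma_{\tau,M}^*$ is finite, of size $O(M^2)$ depending only on $\tau$ and $M$, so the cited APX membership gives a constant $\rho$ and a $\rho$-approximate assignment $f$ for the reduced CSP.

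The final step is to recombine. Let $\mathrm{OPT}=n_0+\mathrm{OPT}'$. The center built from $f$ covers at least $n_0+\mathrm{OPT}'/\rho\ge \mathrm{OPT}/\rho$ strings, so the algorithm is $\rho$-approximate. The main obstacle is checking that the cited APX result is stated for finite languages in which relations may have differing arities and arguments may be permuted. If the permutation is a concern, I would fold all orderings into the language by including every $i$-ary relation $\{v:\ d_H(v,s)\le\lfloor(1-\tau)i\rfloor\}$ for every $s\in\{0,1\}^i$, $i\le M$. This keeps the language finite. Alternatively, a trivial bound suffices: a uniformly random center covers each non-trivial string with probability at least $2^{-M}$, which gives a $2^M$-approximation via \cref{observation:derandomized-drep} directly.
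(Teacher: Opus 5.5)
Your proposal is correct and matches the paper's proof: the same reduction to $\mathrm{MAX\text{-}CSP}(\Gamma^*_{\tau,M})$ with one variable per coordinate and the ones listed first, the same bijection between centers and assignments that preserves which strings are covered, and the same appeal to APX membership of finite-language MAX-CSPs due to Khanna et al. Two of your additions are sound extras: removing always-satisfied constraints (strings with $m_i=0$), which the paper passes over despite its stated nontriviality assumption, and the fallback $2^M$-approximation from a uniformly random center derandomized by conditional expectations, which is essentially the argument behind the cited APX membership.
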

\begin{proof}
Given an $M$-sparse instance of \textsc{max}-$(1,\tau)$-\textsc{center} with $n$ strings of length $m$, we construct an instance of $\mathrm{MAX\text{-}CSP}(\Gamma^*_{\tau, M})$ with $n$ constraints and $m$ variables $U=\{v_1,\dots,v_m\}$, one for each position. If the string $x_i$ has value $1$ at the positions $j_1,\dots,j_k$ and value $0$ at the positions $j_{k+1},\dots,j_{m_i}$, we add the constraint $R^{(\tau,m_i,k)}(v_{j_1},\dots,v_{j_{m_i}})$.
 
There is a 1-to-1 correspondence between any assignment $f: U \rightarrow \{0,1\}$ of $\mathrm{MAX\text{-}CSP}(\Gamma^*_{\tau, M})$ and the center $y\in\{0,1\}^m$ such that $y_j=f(v_j)$, in the sense that $f$ satisfies exactly the constraints whose corresponding strings are covered by $y$. Since $\Gamma^*_{\tau,M}$ is a finite constraint language, $\mathrm{MAX\text{-}CSP}(\Gamma^*_{\tau, M})$ is in APX \citep{khanna_approximability_2001}, so there is a constant $\rho$ and a polynomial-time algorithm returning a $\rho$-approximate assignment. Its corresponding center $\rho$-approximates the maximum number of input strings that can be covered by a single center.
\end{proof}

For APX-hardness, we apply Khanna's dichotomy theorem to the constraint language $\Gamma_{\tau, M}$ from \cref{def:gamma-tau-m}.

\begin{lemma}
\label{lemma-maxcsp-hard}
$\mathrm{MAX\text{-}CSP}(\Gamma_{\tau, M})$ is APX-complete if $0 < \tau < 1$ and $M \geq 1/\min(\tau,1-\tau)$.
\end{lemma}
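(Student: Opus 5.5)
The plan is to apply Khanna's dichotomy (\cref{theorem:khanna-max-csp}) directly to $\Gamma_{\tau,M}$. Membership in APX is immediate, since $\Gamma_{\tau,M}$ is a finite constraint language and every $\mathrm{MAX\text{-}CSP}(\Gamma)$ is in APX \citep{khanna_approximability_2001}. For hardness, I would show that none of the three tractability conditions of \cref{theorem:khanna-max-csp} holds.

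First, I would check the side assumption that every constraint has both a satisfying and a non-satisfying assignment. Let $\mu=\lceil\tau M\rceil$. The assumption $M\ge 1/\min(\tau,1-\tau)$ gives $1\le\mu\le M-1$, exactly as in the proof of \cref{lemma-csp-hard}. Then $R^\textbf{yes}$ consists precisely of the vectors with at least $\mu$ ones. So $1^M\in R^\textbf{yes}$ and $0^M\notin R^\textbf{yes}$. By symmetry, $R^\textbf{no}$ consists of the vectors with at least $\mu$ zeros, so it contains $0^M$ but not $1^M$.

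Second, the constant polymorphisms are excluded by the same observation already used in \cref{lemma-csp-hard}. The function $\mathbf{1}$ is not a polymorphism because $1^M\notin R^\textbf{no}$. The function $\mathbf{0}$ is not a polymorphism because $0^M\notin R^\textbf{yes}$. Both facts need only $M\ge 1/\tau$, which our assumption guarantees.

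The main step is showing that $\Gamma_{\tau,M}$ is not 2-monotone. It suffices to show this for the single relation $R^\textbf{yes}$, and I would do so by violating the second condition of \cref{lemma:2-monotone}.
\begin{itemize}
\item A set $V\subseteq[M]$ is a 1-term of $R^\textbf{yes}$ if and only if $|V|\ge\mu$. Indeed, if $|V|\ge\mu$, every assignment whose ones include $V$ has at least $\mu$ ones and so satisfies $R^\textbf{yes}$. Conversely, if $|V|<\mu$, the assignment with ones exactly on $V$ violates $R^\textbf{yes}$.
\item Since $\mu\le M-1$, there exist two distinct $\mu$-subsets of $[M]$, for example $V_1=[\mu]$ and $V_2=\{2,\dots,\mu+1\}$. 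Both are 1-terms.
\item Their intersection has size $\mu-1<\mu$, so it is not a 1-term. Hence $R^\textbf{yes}$, and therefore $\Gamma_{\tau,M}$, is not 2-monotone.
\end{itemize}

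With all three tractability conditions ruled out, \cref{theorem:khanna-max-csp} gives APX-completeness. I expect no real obstacle. The only point needing care is pinning down the exact characterization of the 1-terms of $R^\textbf{yes}$, together with checking that the bounds on $M$ give $1\le\mu\le M-1$, so that both the non-triviality assumption and the existence of two distinct $\mu$-subsets hold.
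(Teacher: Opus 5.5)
Your proposal is correct and follows essentially the same route as the paper: you rule out the constant polymorphisms via $1^M\notin R^\textbf{no}$ and $0^M\notin R^\textbf{yes}$, and you break the second condition of \cref{lemma:2-monotone} for $R^\textbf{yes}$ using two 1-terms of size $\mu$ whose intersection has fewer than $\mu$ elements. The differences are cosmetic: you take $V_2=\{2,\dots,\mu+1\}$ where the paper takes $\{M-\mu+1,\dots,M\}$, and you also spell out the characterization of the 1-terms, the non-triviality assumption on constraints, and APX membership.
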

\begin{proof}
Using the same argument as in the proof of \cref{lemma-csp-hard}, for $M \geq 1/\tau$, $\mathbf{1}$ is not a polymorphism of $R^\textbf{no}$ and $\mathbf{0}$ is not a polymorphism of $R^\textbf{yes}$.

Next, we apply the second condition of \cref{lemma:2-monotone} to show that the constraint $R^\textbf{yes}$ is not 2-monotone, hence the language $\Gamma_{\tau, M}$ is not 2-monotone, which together with \cref{theorem:khanna-max-csp} implies APX-completeness of $\mathrm{MAX\text{-}CSP}(\Gamma_{\tau, M})$.

Consider the minimum agreement $\mu = \lceil \tau M \rceil$ and assume $M \geq 1/\min(\tau,1-\tau)$, so that $1 \leq \mu \leq M-1$.
Take the two sets $V_1=\{1,\dots,\mu\}$ and $V_2=\{M-\mu+1,\dots,M\}$ of size $\mu$. Any assignment $s \in \{0,1\}^M$ with either $V_1 \subseteq \mathrm{Ones}(s)$ or $V_2 \subseteq \mathrm{Ones}(s)$ has at least $\mu=|V_1|=|V_2|$ ones, hence $s \in R^\textbf{yes}$ and both $V_1$ and $V_2$ are 1-terms by definition. However, $V_1 \cap V_2$ is not a 1-term because $|V_1 \cap V_2|=\max(0,M-2\lfloor (1-\tau)M \rfloor) < \lceil \tau M \rceil$, and the assignment $q$ with $\mathrm{Ones}(q)=V_1 \cap V_2$ is not in $R^\textbf{yes}$.
\end{proof}

\begin{theorem}
\label{theorem:apx-hard}
For any $\tau \in (0, 1)$, \textsc{max}-$(1,\tau)$-\textsc{center} is APX-hard already on $M$-sparse instances, where $M = \lceil 1/\min(\tau,1-\tau)\rceil$ is a constant depending only on $\tau$. In particular, \textsc{max}-$(1,\tau)$-\textsc{center} is APX-hard.
\end{theorem}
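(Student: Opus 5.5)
The plan is to combine \cref{lemma-maxcsp-hard} with the reduction already used in \cref{theorem:delegation-np-complete}, checking that this reduction is approximation-preserving. Fix $\tau \in (0,1)$ and set $M = \lceil 1/\min(\tau,1-\tau)\rceil$. Then $M \geq 1/\min(\tau,1-\tau)$, so by \cref{lemma-maxcsp-hard} the problem $\mathrm{MAX\text{-}CSP}(\Gamma_{\tau,M})$ is APX-complete. In particular it is APX-hard, so we only need an approximation-preserving reduction (an L-reduction, or even a strict one) from it to \textsc{max}-$(1,\tau)$-\textsc{center} on $M$-sparse instances.

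The reduction is the one from the proof of \cref{theorem:delegation-np-complete}. Given an instance with variables $v_1,\dots,v_m$ and constraints $c_1,\dots,c_n$, build one string of length $m$ per constraint. If the constraint is $R^\textbf{no}(v_{j_1},\dots,v_{j_M})$, the string has $0$ at positions $j_1,\dots,j_M$ and $\star$ elsewhere. If the constraint is of type $R^\textbf{yes}$, it has $1$ at those positions instead. Every string reveals exactly $M$ positions, so the instance is $M$-sparse. If $m < M$, the instance has constant size and is handled trivially.

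The key step is the correspondence already observed there. A center $y$ and the assignment $f(v_j)=y_j$ satisfy exactly the same set of constraints as $y$ covers strings, since $m_i=M$ and $R^{(\tau,M,\cdot)}$ is defined precisely by $d_H \le (1-\tau)M$. This bijection preserves objective values exactly in both directions: the optima coincide, and any center returned by a $\rho$-approximation yields, in polynomial time, an assignment satisfying the same number of constraints. The reduction is therefore strict, with parameters $\alpha=\beta=1$ in L-reduction terms. APX-hardness consequently transfers to $M$-sparse instances, and hence to the general problem, since $M$-sparse instances are a special case.

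The only point needing care is a convention in \citet{khanna_approximability_2001}: each constraint must have both a satisfying and a non-satisfying assignment. This holds for $R^\textbf{yes}$ and $R^\textbf{no}$ because $1 \le \lceil \tau M\rceil \le M-1$, as shown in \cref{lemma-csp-hard}. With that checked, the main step is just the value-preservation argument, which is immediate.
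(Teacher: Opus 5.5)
Your proposal is correct and follows the paper's proof essentially verbatim. You invoke the APX-completeness of $\mathrm{MAX\text{-}CSP}(\Gamma_{\tau,M})$ for $M=\lceil 1/\min(\tau,1-\tau)\rceil$, reuse the single-center NP-hardness reduction, and note that the exact bijection between satisfied constraints and covered strings makes it a value-preserving (strict, hence PTAS) reduction onto $M$-sparse instances; your extra check that each constraint has both a satisfying and a non-satisfying assignment, as Khanna et al.'s convention requires, is a small addition the paper leaves implicit.
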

\begin{proof}
By \cref{lemma-maxcsp-hard}, $\mathrm{MAX\text{-}CSP}(\Gamma_{\tau, M})$ is APX-hard for this value of $M$. Given a $\mathrm{MAX\text{-}CSP}(\Gamma_{\tau, M})$ instance, we use the same reduction as in the proof of \cref{theorem:delegation-np-complete} to construct an instance of \textsc{max}-$(1,\tau)$-\textsc{center} in which every string reveals exactly $M$ positions, and which is therefore $M$-sparse. Because there is a 1-to-1 correspondence between the satisfied constraints of $\mathrm{MAX\text{-}CSP}(\Gamma_{\tau, M})$ and the strings covered in the reduced instance, for any assignment and its corresponding center, the reduction preserves the objective value exactly. It is therefore a PTAS reduction, and consequently \textsc{max}-$(1,\tau)$-\textsc{center} is APX-hard on $M$-sparse instances. The last claim follows as $M$-sparse instances form a subclass of all instances.
\end{proof}
\end{document}